\def\N {\ensuremath{\mathbb{N}}}
\def\C {\ensuremath{\mathbb{C}}}
\def\jac {\ensuremath{\mathsf{jac}}}
\def\X {\ensuremath{\mathbf{X}}}
\def\Y {\ensuremath{\mathbf{Y}}}
\def\x {\ensuremath{\mathbf{x}}}
\def\y {\ensuremath{\mathbf{y}}}
\def\a {\ensuremath{\mathbf{a}}}
\def\minor {\ensuremath{\mathsf{minor}}}
\def\Ker {\ensuremath{\mathsf{Ker}}}
\def\LM {\ensuremath{\mathsf{LM}}}
\def\LL {\ensuremath{\mathscr{L}}}
\def\NF {\ensuremath{\mathsf{NF}}}
\def\HS {\ensuremath{\mathrm{HS}}}
\def\ann {\ensuremath{\mathrm{ann}}}
\newtheorem{rem}{Remark}[section]
\newtheorem{lem}{Lemma}[section]
\newtheorem{thm}{Theorem}[section]
\newtheorem{prop}{Proposition}[section]
\newtheorem{cor}{Corollary}[section]
\newtheorem{defn}{Definition}[section]
\newtheorem{alg}{Algorithm}[section]
\newtheorem{conj}{Conjecture}[section]
\begin{document}

\title{Gr\"obner Bases of Bihomogeneous Ideals Generated by Polynomials of Bidegree $(1,1)$: Algorithms and Complexity}
\author{Jean-Charles Faug\`ere}
\author{Mohab Safey El Din}
\author{Pierre-Jean Spaenlehauer}
\date{}
\affil{INRIA, Paris-Rocquencourt, SALSA Project \\ 
UPMC, Univ Paris 06, LIP6 \\ 
CNRS, UMR 7606, LIP6 \\
UFR Ing\'enierie 919, LIP6 Passy-Kennedy \\
Case 169, 4, Place Jussieu, F-75252 Paris, France\\
\{Jean-Charles.Faugere, Mohab.Safey, Pierre-Jean.Spaenlehauer\}@lip6.fr}
\maketitle

\begin{abstract}
Solving multihomogeneous systems, as a wide range of \emph{structured
algebraic systems} occurring frequently in practical problems, is of
first importance. Experimentally, solving these systems with Gröbner
bases algorithms seems to be easier than solving homogeneous systems
of the same degree. Nevertheless, the reasons of this behaviour are
not clear. In this paper, we focus on bilinear systems
(i.e. bihomogeneous systems where all equations have bidegree
$(1,1)$). Our goal is to provide a theoretical explanation of the
aforementionned experimental behaviour and to propose new techniques
to speed up the Gröbner basis computations by using the
multihomogeneous structure of those systems. The contributions are
theoretical and practical. First, we adapt the classical $F_5$
criterion to avoid reductions to zero which occur when the input is a
set of bilinear polynomials. We also prove an explicit form of the
Hilbert series of bihomogeneous ideals generated by generic bilinear
polynomials and give a new upper bound on the degree of regularity of
generic affine bilinear systems. This leads to new complexity bounds
for solving bilinear systems. We propose also a variant of the $F_5$
Algorithm dedicated to multihomogeneous systems which exploits a
structural property of the Macaulay matrix which occurs on such
inputs. Experimental results show that this variant requires less time
and memory than the classical homogeneous $F_5$ Algorithm.

\end{abstract}

\section{Introduction}

The problem of multivariate polynomial system solving is an important
topic in computer algebra since algebraic systems can arise from many
practical applications (cryptology, robotics, real algebraic geometry,
coding theory, signal processing, etc...). One method to solve them is
based on the Gr\"obner bases theory.
Due to their practical importance, efficient algorithms to compute
Gr\"obner bases of algebraic systems are required: for instance
Buchberger's Algorithm \cite{buchberger}, Faug\`ere $F_4$
\cite{faugere139nea} or $F_5$ \cite{faugere2002nea}.

In this article, we focus on the $F_5$ Algorithm. In particular, the
$F_5$ criterion is a tool which removes the so-called \emph{reductions
  to zero} (which are useless) during the Gröbner basis computation
when the input system is a regular sequence. For instance, consider a
sequence of polynomials $(f_1,\ldots, f_m)$. The reductions to zero
come from the leading monomials in the colon ideals $\langle
f_1,\ldots,f_{i-1}\rangle : f_i$. Let $\LM(I)$ denote the ideal
generated by the leading monomials of the elements of an ideal
$I$. Then the reductions to zero detected by the $F_5$ criterion are
those related to $\LM(\langle f_1,\ldots,f_{i-1}\rangle)$. For
\emph{regular} systems, $\LM(\langle
f_1,\ldots,f_{i-1}\rangle)=\LM(\langle f_1,\ldots,f_{i-1}\rangle :
f_i)$. Therefore, the $F_5$ criterion removes all useless reductions. In
practice, if a homogeneous polynomial system is chosen ``at random'',
then it is regular. \smallskip

In this paper, we consider multihomo\-ge\-neous systems, which are not
regular. Such systems can appear in cryptography \cite{faugere2008},
in coding theory \cite{ourivski2002ntd} or in effective geometry
(see \cite{el2003polar, eldin2006}).

A multihomogeneous polynomial is defined with respect to a partition
of the unknowns, and is homogeneous with respect to
each subset of variables. The finite sequence of degrees is called the
\emph{multi-degree} of the polynomial. For instance, a bihomogeneous
polynomial $f$ of bi-degree $(d_1,d_2)$ over
$k[x_0,\ldots,x_{n_x},\penalty-1000 y_0,\ldots,y_{n_y}]$ is a polynomial such that
$$\forall \lambda,\mu, f(\lambda x_0,\ldots,\lambda x_{n_x},\mu
y_0,\ldots,\mu y_{n_y})=\lambda^{d_1} \mu^{d_2}
f(x_0,\ldots,x_{n_x},y_0,\ldots,y_{n_y}).$$ In general,
multihomogeneous systems are not regular. Consequently, the $F_5$
criterion does not remove all reductions to zero. Our goal is to
understand the underlying structure of these multihomogeneous
algebraic systems, and then use it to speed up the computation of a
Gröbner basis in the context of $F_5$.  In this paper, we focus on
bihomogeneous ideals generated by polynomials of bi-degree $(1,1)$.

\subsection{Main results}
Let $k$ be a field, $f_1,\ldots f_m \in
k[x_0,\ldots,x_{n_x},y_0,\ldots,y_{n_y}]$ be bilinear polynomials. We
denote by $F_i$ the polynomial family $(f_1,\ldots,f_i)$ and by $I_i$
the ideal $\langle F_i\rangle$. We start by describing the algorithmic
results of the paper, obtained by exploiting the algebraic structure
of bilinear systems.

In order to understand this structure, we study properties of the
jacobian matrices with respect to the two subsets of variables $x_0,
\ldots, x_{n_x}$ and $y_0, \ldots, y_{n_y}$:
$$
\jac_\x(F_i)=\left [\begin{array}{ccc}
  \frac{\partial f_1}{\partial x_0} & \cdots &  \frac{\partial f_1}{\partial x_{n_x}}\\
\vdots & \vdots & \vdots \\
  \frac{\partial f_i}{\partial x_0} & \cdots &  \frac{\partial f_i}{\partial x_{n_x}}\\
\end{array}\right ]
~\hspace{1cm}~
\jac_\y(F_i)=\left [\begin{array}{ccc}
  \frac{\partial f_1}{\partial y_0} & \cdots &  \frac{\partial f_1}{\partial y_{n_y}}\\
\vdots & \vdots & \vdots \\
  \frac{\partial f_i}{\partial y_0} & \cdots &  \frac{\partial f_i}{\partial y_{n_y}}\\
\end{array}\right ]
$$

We show that the kernels of those matrices (whose entries are linear
forms) correspond to the reductions to zero not detected by the
classical $F_5$ criterion. In general, all elements in these kernels
are vectors of maximal minors of the jacobian matrices (Lemma
\ref{lemmemi}). For instance, if $n_x=n_y=2$ and $m=4$,
consider $${\sf v}=(\minor(\jac_\x(F_4),1),-\minor(\jac_\x(F_4),2),
\minor(\jac_\x(F_4),3),-\minor(\jac_\x(F_4),4))$$ and $${\sf
  w}=(\minor(\jac_\y(F_4),1),-\minor(\jac_\y(F_4),2),
\minor(\jac_\y(F_4),3), -\minor(\jac_\y(F_4),4)),$$ where $\minor(\jac_\x(F_4),k)$ (resp. $\minor(\jac_\y(F_4),k)$) denotes
the determinant of the matrix obtained from $\jac_\x(F_4)$ (resp. $\jac_\y(F_4)$) by removing
the $k$-th column.  The generic \emph{syzygies} corresponding to
reductions to zero which are not detected by the classical $F_5$ criterion are
$${\sf v}\in \Ker_L(\jac_\x(F_4))
\text{ and } {\sf w}\in \Ker_L(\jac_\y(F_4)).$$ 

We show (Corollary \ref{corointro}) that, in general, the ideal
$I_{i-1}:f_i$ is spanned by $I_{i-1}$ and by the maximal minors of
$\jac_\x(F_{i-1})$ (if $i>n_y+1$) and $\jac_\y(F_{i-1})$ (if
$i>n_x+1$). The leading monomial ideal of $I_{i-1}:f_i$ describes the
reductions to zero associated to $f_i$. Thus we need results about
ideals generated by maximal minors of matrices whose entries are
linear forms in order to get a description of the syzygy module. In
particular, we prove that, in general, \emph{grevlex} Gröbner bases of
those ideals are linear combinations of the generators (Theorem
\ref{minorsGB}).  Based on this result, one can compute efficiently a
Gröbner basis of $I_{i-1} : f_i$ once a Gröbner basis of $I_{i-1}$ is known. 

\smallskip

This allows us to design an Algorithm (Algorithm \ref{algoprinc})
dedicated to bilinear systems, which yields an extension of the
classical $F_5$ criterion. This subroutine, when merged within a
matricial version of the $F_5$ Algorithm (Algorithm \ref{matrixF5}), eliminates
all reductions to zero during the computation of a Gröbner basis of a 
generic bilinear system. For instance, during the computation of a
grevlex Gröbner basis of a system of $12$ generic bilinear equations
over $k[x_0,\ldots,x_6,y_0,\ldots,y_6]$, the new criterion detects
$990$ reductions to zero which are not found by the usual $F_5$
criterion. Even if this new criterion seems to be more
complicated than the usual $F_5$ criterion (some
precomputations have to be performed), we prove that the overcost
induced by those precomputations is negligible compared to the cost of
the whole computation.

Next, we introduce a notion of \emph{bi-regularity} which describes
the structure of generic bilinear systems. When the input of Algorithm
\ref{algoprinc} is a bi-regular system, then it returns all reductions
to zero. We also give a complete description of the syzygy module of
such systems, up to a conjecture (Conjecture \ref{conjec}) on a
linear algebra problem over rings. This conjecture is supported by
practical experiments. We also prove that there are no reductions to
zero with the classical $F_5$ criterion for affine bilinear systems
(Proposition \ref{affreg}) which is important for practical
applications.

We describe now the main complexity results of the paper. We need some
results on the so-called Hilbert bi-series of ideals generated by
bilinear systems.  For bi-regular bilinear system, we give an explicit
form of this series (Theorem \ref{theohilb}):

$$\mathsf{HS}_{I_m}(t_1,t_2)=\frac{N_m}{(1-t_1)^{n_x+1} (1-t_2)^{n_y+1}},$$
$$\begin{array}{c}
N_m(t_1,t_2)~~~=~~~(1-t_1 t_2)^m +\\
\sum_{\ell=1}^{m-(n_y+1)}(1-t_1 t_2)^{m-(n_y+1)-\ell} t_1 t_2 (1-t_2)^{n_y+1}\big{[}1-(1-t_1)^\ell \sum_{k=1}^{n_y+1} t_1^{n_y+1-k}{{\ell+n_y-k}\choose {n_y+1-k}}\big{]} +\\
\sum_{\ell=1}^{m-(n_x+1)}(1-t_1 t_2)^{m-(n_x+1)-\ell} t_1 t_2 (1-t_1)^{n_x+1}\big{[}1-(1-t_2)^\ell \sum_{k=1}^{n_x+1} t_2^{n_x+1-k}{{\ell+n_x-k}\choose {n_x+1-k}}\big{]}.
\end{array}$$

\smallskip

After this analysis, we propose a variant of the Matrix $F_5$
Algorithm dedicated to multihomogeneous systems. The key idea is to
decompose the Macaulay matrices into a set of smaller matrices whose
row echelon forms can be computed independently. We provide some
experimental results of an implementation of this algorithm in
\texttt{Magma2.15}. This multihomogeneous variant can be more than
$20$ times faster for bihomogeneous systems than our \texttt{Magma}
implementation of the classical Matrix $F_5$ Algorithm. We perform a
theoretical complexity analysis based on the Hilbert series in the
case of bilinear systems, which provides an explanation of this gap.

Finally, we establish a sharp upper bound on the degree of regularity of
$0$-dimensional affine bilinear systems (Theorem \ref{thm:degregaff}). Let $f_1,\ldots, f_{n_x+n_y}$
be an affine bilinear system of $k[x_0,\ldots, x_{n_x-1},y_0,\ldots,
  y_{n_y-1}]$, then the maximal degree reached during the computation
of a Gröbner basis with respect to the grevlex ordering is upper
bounded by:
 $$\mathsf{d_{reg}}\leq \min\left(n_x+1,n_y+1\right).$$
 This bound is \emph{exact} in practice for generic bilinear systems
 and permits to derive complexity estimates for solving bilinear
 systems (Corollary \ref{coro:complaff}) which can be applied to
 practical problems (see for instance \cite{issacmr} for an
 application to the MinRank problem).

\subsection{State of the art}
The complexity analysis that we perform by proving properties on the
Hilbert bi-series of bilinear ideals follows a path which is similar
to the one used to analyze the complexity of the $F_5$ algorithm in
the case of homogeneous regular sequences (see \cite{bardet2005}). In
\cite{kreuzer2002basic}, the properties of Buchberger's Algorithm are
investigated in the context of multi-graded rings. 

The algorithmic use of multihomogeneous structures has been
investigated mostly in the framework of multivariate resultants (see
\cite{multires,EmirisM09} and references therein for the most recent
results) following the line of work initiated by
\cite{mccoy1933resultant}. In the context of solving polynomial
systems by using straight-line programs as data-structures,
\cite{jeronimo2007computing} provides an alternative way to compute
resultant formula for multihomogeneous systems.

As we have seen in the description of the main results, the knowledge
of Gr\"obner bases of ideals generated by maximal minors of linear
matrices play a crucial role. Theorem \ref{minorsGB} which states that
such Gr\"obner bases are obtained by a single row echelon form
computation is a variant of the main results in
\cite{sturmfels1993maximal} and \cite{bernstein93} (see also the
survey \cite{bruns2003grobner}).

More generally, the theory of {multihomogeneous elimination} is
investigated in \cite{remond} and \cite{remond1752geometrie} providing tools to
generalize some well-known notions (e.g. Chow forms, resultant
formula, heights) in the homogeneous case to multihomogeneous
situations. Such works are initiated in \cite{Waerden} where the
Hilbert bi-series of bihomogeneous ideals is introduced.

\subsection{Structure of the paper}
This paper is articulated as follows. Some tools from commutative
algebra are introduced. Next, we investigate the case of bilinear
systems and propose an algorithm to remove all reductions to zero
during the Gröbner basis computation. Then we prove its correctness
and explain why it is efficient for \emph{generic} bilinear
systems. To continue our study of the structure of bilinear ideals, we
give the explicit form of the Hilbert bi-series of generic bilinear
ideals. Finally, we prove a new bound on the degree of regularity of
generic affine bilinear systems and we use it to derive new complexity
bounds. Technical results and their proofs are postponed in Appendix.

\subsubsection*{Acknowledgments.}
We are grateful to Ludovic Perret and Ioannis Z. Emiris for their helpful comments and suggestions.

\newpage
\tableofcontents
\section{Gr\"obner bases: the Matrix $F_5$ Algorithm}

\subsection{Gr\"obner bases: notations}

In this section, $R$ denotes the ring $k[x_1,\ldots, x_n]$
(where $k$ is a field) and for all $\beta=(\beta_1, \ldots, \beta_n)
\in\N^n$, $x^\beta$ denotes $x_1^{\beta_1},\cdots,
x_n^{\beta_n}$. Gr\"obner bases are defined with respect to a
monomial ordering (see \cite{cox}, page 55, Definition 1).  In this
paper, we focus in particular on the so-called {\em grevlex} ordering
(degree reverse lexicographical ordering).

\begin{defn}\label{def:grevlex} The \emph{grevlex} ordering is defined by: $$x^\alpha \prec
  x^\beta \Leftrightarrow \begin{cases} \sum \alpha_i < \sum
    \beta_i\text{ or }\\ \sum \alpha_i = \sum \beta_i \text{~and the
      first coordinates }\\\text{ from the right which are different
      satisfy~} \alpha_i>\beta_i. \end{cases}$$
\end{defn}
If $\prec$ is a monomial ordering  and
$f\in R$ is a polynomial, then its greatest monomial with respect to
$\prec$ is called \emph{leading monomial} and denoted by $\LM_\prec(f)$
(or simply $\LM(f)$ when there is no ambiguity on the considered
ordering).

If $I\subset R$ is a polynomial ideal, its \emph{leading monomial
  ideal} (i.e. $\left\langle \{ \LM_\prec(f) : {f\in I}\}\right\rangle$) is denoted
by $\LM_\prec(I)$ (or simply $\LM(I)$ when there is no ambiguity
on the ordering) .

\begin{defn} let $I\subset R$ be an ideal, and $\prec$ be a
  monomial ordering. A \emph{Gr\"obner basis} of $I$
  (relatively to $\prec$) is a finite subset $G\subset I$ such that:
  $\langle \LM_\prec(G) \rangle = \LM_\prec(I)$. 
\end{defn}

\begin{defn}
  Let $I\subset R$ be an ideal, $\prec$ be a monomial
  ordering and $f\in R$ be a polynomial.  Then there exist unique
  polynomials $\tilde f\in R$ and $g\in I$ such that $f=\tilde f +g$, $\tilde f$ is monic
  and none of the monomials appearing in $\tilde{f}$ are in $\LM_\prec(I)$.
  The polynomial $\tilde f$ is called the \emph{normal form} of $f$ (with
  respect to $I$ and $\prec$), and is denoted $\NF_{I,\prec}(f)$.
\end{defn}
It is well known that $\NF_{I, \prec}(f)=0$  if and only if $f\in I$ (see e.g. \cite{cox}).
\begin{defn} Let $I\subset R$ be an homogeneous ideal, $\prec$ be a monomial ordering and $D$ be an integer. We call
  $D$-Gröbner basis a finite set of polynomials $G$ such that $\langle
  G\rangle=I$ and
$$\forall f\in I \text{ with } \deg(f)\leq D,\text{ there exists } g\in G \text{ such that } \LM_\prec(g) \text{ divides } \LM_\prec(f).$$
\end{defn}

The following Lemma is a straightforward consequence of Dickson's Lemma \cite[page 71, Theorem 5]{cox}.
\begin{lem} Let $I\subset R$ be an ideal and let $\prec$ be a monomial ordering. There exists $D\in\mathbb{N}$ such
  that every $D$-Gröbner basis with respect to $\prec$ is a Gröbner
  basis of $I$ with respect to $\prec$.
\end{lem}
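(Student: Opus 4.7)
The plan is to use Dickson's Lemma in the following direct way. By Dickson's Lemma, the monomial ideal $\LM_\prec(I)$ is finitely generated: there exist finitely many monomials $m_1,\ldots,m_s$ such that $\LM_\prec(I)=\langle m_1,\ldots,m_s\rangle$. Each $m_i$ can be written $m_i=\LM_\prec(f_i)$ for some homogeneous $f_i\in I$, and I would then set $D:=\max_{1\le i\le s}\deg(f_i)$. The claim is that this $D$ works.

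Next I would verify that any $D$-Gr\"obner basis $G$ is in fact a Gr\"obner basis. One inclusion $\langle\LM_\prec(G)\rangle\subseteq\LM_\prec(I)$ is immediate from $G\subset I$. For the reverse inclusion it suffices to show $m_i\in\langle\LM_\prec(G)\rangle$ for every $i$, since the $m_i$ generate $\LM_\prec(I)$. But $f_i\in I$ with $\deg(f_i)\le D$, so by the definition of a $D$-Gr\"obner basis there exists $g\in G$ with $\LM_\prec(g)\mid\LM_\prec(f_i)=m_i$, giving $m_i\in\langle\LM_\prec(G)\rangle$ as desired.

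One small point to be careful about is that the definition just above is stated for homogeneous ideals (the condition is only over elements $f\in I$ with $\deg(f)\le D$), so the lemma as I read it should implicitly be for a homogeneous $I$; the $f_i$ can be chosen homogeneous because $\LM_\prec(I)$ is generated by the leading monomials of homogeneous elements of $I$. There is no real obstacle here: the whole argument is essentially a repackaging of Dickson's Lemma, which the statement of the lemma explicitly invokes. The only subtlety is choosing $D$ as a bound on the degrees of a \emph{finite} set of generators of $\LM_\prec(I)$, which is exactly what Dickson supplies.
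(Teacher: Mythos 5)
Your argument is correct and is exactly the ``straightforward consequence of Dickson's Lemma'' that the paper alludes to without writing out: Dickson's Lemma yields finitely many leading monomials generating $\LM_\prec(I)$, and $D$ is taken to bound the degrees of corresponding (homogeneous) elements of $I$. Your remark about homogeneity is also apt, since the $D$-Gr\"obner basis definition in the paper is indeed stated only for homogeneous ideals.
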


\subsection{The Matrix $F_5$ Algorithm}
We use a variant of the $F_5$ Algorithm,
called Matrix $F_5$ Algorithm, which is suitable to perform
complexity analyses (see \cite{bardet,bardet2005,faurah}).

Given a set of generators $(f_1, \ldots, f_m)$ of an homogeneous
polynomial ideal $I\subset R$, an integer $D$ and a
monomial ordering $\prec$, the Matrix $F_5$ Algorithm computes a
$D$-Gröbner basis of $I$ with respect to $\prec$.
It performs incrementally by considering the ideals $I_i=\langle f_1,
\ldots, f_i\rangle$ for $1\leq i \leq m$.  

Let $d\in \N$, denote by
$R_d$ the $k$-vector space of polynomials in $R$ of degree $d$. As in \cite{faugere2002nea} and \cite{bardet}, we use a
definition of the row echelon form of a matrix which is slightly
different from the usual definition: we call \emph{row echelon form}
the matrix obtained by applying the Gaussian elimination Algorithm
\emph{without permuting the rows}.
The idea of the Matrix $F_5$ Algorithm (see Algorithm \ref{matrixF5}
below) is to calculate triangular bases of the vector spaces $I_i\cap
R_d$ for $1\leq d\leq D$ and $1\leq i\leq m$ and to deduce from them a
$d$-basis of $I_{i+1}$. These triangular bases are obtained by
computing row echelon forms of the Macaulay matrices.

In the algorithm which follows, the columns in the matrix $\mathcal
M_{d,i}$ correspond to the monomials of $R$ of degree $d$ and are
sorted by the chosen monomial ordering $\prec$ (from the largest to
the smallest). An homogeneous polynomial is identified with the
corresponding row in the matrix. Each row has a signature $(t,f_j)$,
where $t$ is a monomial and $1\leq j \leq i$. The rows of the matrices
are sorted as follows: a row with signature $(t_1,f_j)$ is preceding a
row with signature $(t_2,f_k)$ if $j<k$ or ($j=k$ and $t_1\prec t_2$).

When the row echelon form of a matrix is computed, 
the rows which are linear combinations of
preceding rows are reduced to zero. Such computations are useless:
removing these rows before computing the row echelon form will not
modify the result but lead to significant practical
  improvements. 
The so-called $F_5$ criterion (see \cite{faugere2002nea}) is used to detect
these \emph{reductions to zero} and is given below.

\begin{alg}
{\bf $F_5$\textsf{criterion}} - returns a boolean
\begin{algorithmic}[1]

\Require $\begin{cases}
(t,f_i) \text{ the signature of a row}\\
\text{A matrix }\mathcal{M}\text{ in row echelon form} 
\end{cases}$
\State Return ($t$ is the leading monomial of a row of $\mathcal{M}$)
\end{algorithmic}
\end{alg}
 
Now, one gives a description of the Matrix $F_5$ Algorithm.

\begin{alg}\label{matrixF5}
{\bf \textsf{Matrix} $F_5$} (see \cite{bardet,faugere2002nea})
\begin{algorithmic}[1]
  \Require $\begin{cases}(f_1,\ldots,f_m)\text{ homogeneous polynomials of degree } d_1\leq d_2\leq\ldots\leq d_m\\
    D\text{ an integer}\\
    \text{a monomial ordering }\prec\end{cases}$ \Ensure $G$ is a $D$-Gröbner basis of $\langle f_1,\ldots,f_m\rangle$
  for $\prec$ \State $G\leftarrow\emptyset$ \For {$d$ from $d_1$ to
    $D$} 
\State $\widetilde{{\cal M}_{d,0}}\leftarrow$  matrix with $0$ rows
\For{$i$ from $1$ to $m$} \State Construct ${\cal
    M}_{d,i}$ by adding to $\widetilde{{\cal M}_{d,i-1}}$ the following rows:
\If{$d_i=d$} 
\State add the row $f_i$ with signature $(1,f_i)$
\EndIf
\If{$d>d_i$}
\State for all $f$ from $\widetilde{{\cal
      M}_{d-1,i}}$ with signature $(e,f_i)$, such that $x_\lambda$ 
is the \State greatest variable of $e$, add the $n-\lambda+1$ rows
  $x_\lambda f,x_{\lambda+1} f,\ldots,x_n f$ with the \State signatures $(x_\lambda e, f_i),(x_{\lambda+1}e, f_i),\ldots,(x_n e,
  f_i)$ 
except those which satisfy: 
\State $F_5${\bf\textsf{criterion}} $( (x_{\lambda+k} e,f_i) , \widetilde{{\cal M}_{d-d_i,i-1}})$=true
\EndIf
\State Compute $\widetilde{{\cal M}_{d,i}}$ the row echelon form of
${\cal M}_{d,i}$ \State Add to $G$ the polynomials corresponding to
rows of $\widetilde{{\cal M}_{d,i}}$ such that their \State leading monomial
is different from the leading monomial of \State the row with same signature in
${\cal M}_{d,i}$
  \EndFor
\EndFor
\State return $G$
\end{algorithmic}
\end{alg}

We recall now some results mostly given by \cite{faugere2002nea} which justify
the $F_5$ criterion by relating reductions to zero appearing in an
incremental computation of a Gr\"obner basis of a homogeneous ideal
with the syzygy module of the polynomial system under consideration.

\begin{defn} Let $(f_1,\ldots, f_m)$ be polynomials of $R$. A syzygy
  is an element $s=(s_1,\ldots, s_m)\in R^m$ such that $\sum_{j=1}^m
  f_j s_j=0$.  The degree of the syzygy is defined by
  $\max_j(\deg(f_j)+\deg(s_j))$.  The set of all syzygies is a
  submodule of $R^m$ called the \emph{syzygy module} of
  $(f_1,\ldots,f_m)$.
\end{defn}

The next theorem explains how reductions to zero and syzygies are related:
\begin{thm}[$F_5$ criterion, \cite{faugere2002nea}]~\label{thm:F5}
\begin{enumerate}
\item If $t\in \LM(I_{i-1})$ then there exists a syzygy
  $(s_1,\ldots,s_i)$ of $(f_1, \ldots, f_i)$ such that $\LM(s_i)=t$.
\item Let $(t,f_i)$ be the signature of a row
  of ${\cal M}_{d,m}$. Then the following assertions are equivalent:
\begin{enumerate}
\item the row $(t,f_i)$ is zero in the row echelon form $\widetilde{{\cal M}_{d,m}}$.
\item $t\notin LM(I_{i-1})$ and there exists a syzygy $s=(s_1,\ldots,s_i)$ of $(f_1,\ldots,f_i)$ such that $t=\LM(s_i)$.
\end{enumerate}
\end{enumerate}
\end{thm}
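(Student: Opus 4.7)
The plan is to prove the two items separately, in both cases exhibiting an explicit bridge between syzygies and linear dependencies between rows of the Macaulay matrix.

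For item (1), I would simply write down a Koszul-style syzygy. Take $t \in \LM(I_{i-1})$, so $t = \LM(g)$ for some $g \in I_{i-1}$ (monic up to scaling). Expanding $g = \sum_{j=1}^{i-1} b_j f_j$ with $b_j \in R$, I claim that $s := (b_1 f_i,\, b_2 f_i,\, \ldots,\, b_{i-1} f_i,\, -g)$ is a syzygy of $(f_1,\ldots,f_i)$: indeed $\sum_{j=1}^{i-1}(b_j f_i)f_j + (-g)f_i = f_i\bigl(\sum_{j<i} b_j f_j\bigr) - g f_i = f_i g - g f_i = 0$. The last component is $-g$, so $\LM(s_i)=t$, as required.

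For the equivalence in item (2), the key is to keep careful track of signatures during the row-echelon computation. Recall the Matrix $F_5$ algorithm uses a Gaussian elimination that does not permute rows, so reducing the row of signature $(t,f_i)$ only combines it with rows of \emph{strictly smaller} signature. For the implication (b)$\Rightarrow$(a), given a syzygy $(s_1,\ldots,s_i)$ with $\LM(s_i)=t$, write $s_i = t - h$ where every monomial of $h$ is $\prec t$; then
\[
t\,f_i \;=\; h\,f_i \;-\; \sum_{j<i} s_j f_j,
\]
and each term on the right is a $k$-linear combination of rows of $\mathcal{M}_{d,m}$ with signature strictly smaller than $(t,f_i)$ (for $j<i$, signatures $(u,f_j)$ have smaller index; for $h f_i$, the monomials $u$ appearing are $\prec t$). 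Hence the row $(t,f_i)$ is a linear combination of previous rows and reduces to $0$. The side condition $t\notin \LM(I_{i-1})$ comes from the fact that the $F_5$ criterion would have discarded the row had $t$ been a leading monomial of $\widetilde{\mathcal{M}_{d-d_i,i-1}}$, i.e.\ of a $(d-d_i)$-Gr\"obner basis element of $I_{i-1}$.

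Conversely, for (a)$\Rightarrow$(b), suppose the row $(t,f_i)$ reduces to zero. Unfolding the elementary operations, $t f_i$ is equal in $R_d$ to a $k$-linear combination $\sum \lambda_{u,j}\, u\, f_j$ over signatures $(u,f_j)\prec(t,f_i)$. Grouping terms by the index $j$, this identity repackages into a tuple $(s_1,\ldots,s_i)$ with $s_i = t + \sum_{u\prec t}\lambda_{u,i}\,u$, which is a syzygy of $(f_1,\ldots,f_i)$ with $\LM(s_i)=t$. That $t\notin \LM(I_{i-1})$ again follows from the fact that the row was present in $\mathcal{M}_{d,m}$ (the $F_5$-criterion filter is exactly this condition, inductively identifying $\LM$'s of echelon rows with a $(d-d_i)$-Gr\"obner basis of $I_{i-1}$).

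The main obstacle is the bookkeeping for the non-permuting row echelon form: one must argue that every intermediate row with signature $(u,f_j)$ still represents the polynomial $u f_j$ modulo the $k$-span of strictly smaller-signature rows, so that a zero row genuinely yields an identity expressing $t f_i$ as a combination of smaller-signature rows (and nothing else). This is proved by induction on the processing order of signatures, together with the inductive statement that, after processing all signatures with index $<i$, the echelon form $\widetilde{\mathcal{M}_{d,i-1}}$ is a $d$-Gr\"obner basis of $I_{i-1}\cap R_d$, which anchors the $F_5$-criterion side condition $t\notin\LM(I_{i-1})$.
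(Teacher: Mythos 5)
The paper does not actually prove this theorem: it is stated as a recalled result and attributed to \cite{faugere2002nea}, so there is no in-paper proof to compare against. Judged on its own, your argument is correct and is essentially the standard one from the $F_5$ literature. Item (1) via the Koszul-type syzygy $(b_1f_i,\ldots,b_{i-1}f_i,-g)$ is exactly right, and your two directions of item (2) correctly translate between ``the row is a $k$-linear combination of strictly smaller-signature rows'' and ``there is a syzygy whose $i$-th component has leading monomial $t$.'' You have also correctly isolated the only genuinely delicate point: since rows in degree $d$ are built from the \emph{reduced} rows of degree $d-1$ and some products $u f_j$ are absent (removed by the criterion), one needs the invariant that the row with signature $(u,f_j)$ equals $u f_j$ modulo the span of strictly smaller-signature rows, and that the discarded rows lie in that same span; this, together with the inductive statement that $\widetilde{{\cal M}_{d-d_i,i-1}}$ yields a triangular basis of $(I_{i-1})_{d-d_i}$ (which anchors the equivalence between ``removed by the criterion'' and ``$t\in\LM(I_{i-1})$''), is what makes both directions of (2) go through. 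Two minor remarks: in (b)$\Rightarrow$(a) the condition $t\notin\LM(I_{i-1})$ is a hypothesis rather than something to be derived, so the sentence you devote to it there is misplaced (it belongs only to (a)$\Rightarrow$(b)); and when you multiply a reduced degree-$(d-1)$ row by $x_\lambda$ you should note explicitly that signatures are preserved under this multiplication, i.e.\ $(u,f_j)\prec(e,f_i)$ implies $(x_\lambda u,f_j)\prec(x_\lambda e,f_i)$, which is what keeps the ``lower-signature'' bookkeeping consistent across degrees.
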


The rows eliminated by the $F_5$ criterion correspond to the trivial
syzygies, i.e. the syzygies $(s_1,\ldots, s_m)$ such that $\forall 1\leq i\leq m$, $s_i\in \langle f_1,\ldots, f_{i-1},f_{i+1},\ldots,f_m\rangle$. These particular syzygies come from the commutativity of $R$
(for all $ 1\leq i,j\leq m$, $ f_i f_j - f_j f_i=0$). It is well known
that in the generic case, 
the syzygy module of a polynomial system is
generated by the trivial syzygies.

\begin{defn}\cite[page 419]{eisenbud}\label{def:regular} Let $(f_1,\ldots,f_m)$ be a sequence of
  homogeneous polynomials and let $I_i\subset R$ be the ideal $\langle
  f_1,\ldots,f_i\rangle$. The following assertions are equivalent:
\begin{enumerate}
\item the syzygy module of $(f_1,\ldots,f_m)$ is generated by the trivial syzygies.
\item for $2\leq i\leq m$, $f_i$ is not a divisor of $0$
  in $R/I_{i-1}$.
\end{enumerate}
A sequence of polynomials which satisfies these conditions is called a
\emph{regular sequence}.
\end{defn}

This notion of regularity is essential since the regular sequences
correspond exactly to the systems such that there is no reduction to
zero during the computation of a Gröbner basis with $F_5$ (see
\cite{faugere2002nea}). Moreover, generic polynomial systems are regular.

\section{Gröbner bases computation for bilinear systems}

\subsection{Overview}
Let $F=(f_1, \ldots, f_4)$ be a
sequence of four bilinear polynomials in
$\mathbb{Q}[x_0,x_1,x_2,y_0,y_1,y_2]$, $I$ be the ideal generated by $F$
and $V\subset \C^6$ be its associated algebraic variety. 
As above, $I_i$ denotes the ideal $\langle f_1,\ldots, f_i\rangle$, and we consider the grevlex ordering with $x_0\succ\ldots\succ x_{n_x}\succ y_0\succ\ldots\succ y_{n_y}$.
Since $f_1, \ldots, f_4$ are bilinear, for all $(a_0, a_1, a_2)\in
\C^3$ and $1\leq i \leq 4$, $f_i(a_0, a_1, a_2, 0, 0, 0)=0$. Hence,
$V$ contains the linear affine subspace defined by $y_0=y_1=y_2=0$
which has dimension $3$. We conclude that $V$ has dimension at least
$3$.

Consequently, the sequence $(f_1,f_2,f_3,f_4)$ is not regular
(since the co-dimension of an ideal generated by a regular sequence is
equal to the length of the sequence). Hence, there are reductions to zero during
the computation of a Gröbner basis with the $F_5$ Algorithm
(see \cite{faugere2002nea}). 

When the four polynomials are chosen randomly, one remarks
experimentally that these reductions correspond to the rows with
signatures $(x_0^3, f_4)$ and $(y_0^3, f_4)$. 
This experimental
observation can be explained as follows.

Consider the jacobian matrices
$$\jac_\x(F)=
\left [\begin{array}{ccc}
  \frac{\partial f_1}{\partial x_0} &   \frac{\partial f_1}{\partial x_1} &   \frac{\partial f_1}{\partial x_2} \\
\vdots & \vdots & \vdots \\
  \frac{\partial f_4}{\partial x_0} &   \frac{\partial f_4}{\partial x_1} &   \frac{\partial f_4}{\partial x_2} \\
\end{array}\right ]
\hspace{0.5cm}\text{ and }\hspace{0.5cm}
\jac_\y(F)=
\left [\begin{array}{ccc}
  \frac{\partial f_1}{\partial y_0} &   \frac{\partial f_1}{\partial y_1} &   \frac{\partial f_1}{\partial y_2} \\
\vdots & \vdots & \vdots \\
  \frac{\partial f_4}{\partial y_0} &   \frac{\partial f_4}{\partial y_1} &   \frac{\partial f_4}{\partial y_2} \\
\end{array}\right ]
$$
and the vectors of variables $\X$ and $\Y$.  By Euler's formula, it is
immediate that for any sequence of polynomials $(q_1, q_2, q_3, q_4)$,
\begin{equation}
  \label{eq:1}
  (q_1, \ldots, q_4).\jac_\x(F).\X=\sum_{i=1}^4q_if_i\hspace{0.2cm}\text{ and }\hspace{0.2cm}
(q_1, \ldots, q_4).\jac_\y(F).\Y=\sum_{i=1}^4q_if_i
\end{equation}
Denote by $\Ker_L(\jac_\x(F))$ (resp. $\Ker_L(\jac_\y(F))$) the left kernel of $\jac_\x(F)$ (resp. $\jac_\y(F)$).

Therefore, if $(q_1, \ldots, q_4)$ belongs to
$\Ker_L(\jac_\x(F))$ (resp. $\Ker_L(\jac_\y(F))$), then the relation
(\ref{eq:1}) implies that $(q_1, \ldots, q_4)$ belongs to the
syzygy module of $I$.

Given a $(k+1,k)$-matrix ${\sf M}$, denote by ${\sf minor}({\sf M},
j)$ the minor obtained by removing the $j$-th row from ${\sf M}$. Consider 
$${\sf v}=(\minor(\jac_\x(F),1),-\minor(\jac_\x(F),2),
\minor(\jac_\x(F),3),-\minor(\jac_\x(F),4)).$$
By
Cramer's rule, it is straightforward to prove that 
$\mathsf v\in \Ker_L(\jac_\x(F)).$
A symmetric statement can be made for $\jac_\y(F)$. {F}rom this observation, one
deduces that $\minor(\jac_\x(F),4)f_4$ (resp.
$\minor(\jac_\y(F),4)f_4$) belongs to $I_3=\langle f_1, f_2,
f_3\rangle$. 

We conclude that the rows with signature
$$(\LM(\minor(\jac_\x(F),4)),f_4) \text{~~~ and ~~~}
(\LM(\minor(\jac_\y(F),4)),f_4)$$ are reduced to zero when performing
the Matrix $F_5$ Algorithm described in the previous section.  A
straightforward computation shows that if $F$ contains polynomials
which are chosen randomly, then $$\LM(\minor(\jac_\x(F),4))=y_0^3\text{~~~ and ~~~} \LM(\minor(\jac_\y(F),4))=x_0^3.$$

In this section, we generalize this approach to sequences of bilinear
polynomials of arbitrary length. Hence, the jacobian matrices have a
number of rows which is is not the number of columns incremented by
$1$. But, even in this more general setting, we exhibit a a
relationship between the left kernels of the jacobian
matrices and the syzygy module of the ideal spanned by the sequence
under consideration. This allows us to prove a new $F_5$-criterion
dedicated to bilinear systems. On the one hand, when plugged into the
Matrix $F_5$ Algorithm, this criterion detects reductions to zero
which are not detected by the classical criterion. On the other hand,
we prove that a $D$-Gr\"obner basis is still computed by the Matrix
$F_5$ Algorithm when it uses the new criterion.

\subsection{Jacobian matrices of bilinear systems and syzygies}
\label{sect:algo}
From now on, we use the following notations:
\begin{itemize}
\item $R=k[x_0, \ldots, x_{n_x}, y_0, \ldots, y_{n_y}]$;
\item $F=(f_1, \ldots, f_m)\subset R^m$ is a sequence of bilinear
  polynomials and $F_i=(f_1, \ldots, f_i)$ for $1\leq i \leq m$;
\item $I$ is the ideal generated by $F$ and $I_i$ is the ideal generated by $F_i$;
\item Let $\mathsf{M}$ be a $\ell\times c$ matrix, with $\ell>c$. We call \emph{maximal minors} of $\mathsf{M}$ the determinants of the $c\times c$ sub-matrices of $\mathsf{M}$;
\item $\jac_\x(F_i)$ and $\jac_\y(F_i)$ are respectively the jacobian matrices 
$$
\left [\begin{array}{ccc}
  \frac{\partial f_1}{\partial x_0} & \cdots &  \frac{\partial f_1}{\partial x_{n_x}}\\
\vdots & \vdots & \vdots \\
  \frac{\partial f_i}{\partial x_0} & \cdots &  \frac{\partial f_i}{\partial x_{n_x}}\\
\end{array}\right ]
\text{ and }
\left [\begin{array}{ccc}
  \frac{\partial f_1}{\partial y_0} & \cdots &  \frac{\partial f_1}{\partial y_{n_y}}\\
\vdots & \vdots & \vdots \\
  \frac{\partial f_i}{\partial y_0} & \cdots &  \frac{\partial f_i}{\partial y_{n_y}}\\
\end{array}\right ];
$$
\item Given a matrix ${\sf M}$, $\Ker_L({\sf M})$ denotes the left
  kernel of ${\sf M}$;
\item $\X$ is the vector of variables $[x_0, \ldots, x_{n_x}]^t$ and
  $\Y$ is the vector of variables $[y_0, \ldots, y_{n_y}]^t$;
\item $(f_1,\ldots,f_m)\in k[x_0,\ldots,x_{n_x-1},y_0,\ldots,y_{n_y-1}]^m$ is called \emph{affine bilinear system} if there exists an homogeneous bilinear system $(f_1^h,\ldots, f_m^h)\in k[x_0,\ldots,x_{n_x},y_0,\ldots,y_{n_y}]^m$ such that
$$f_i(x_0,\ldots,x_{n_x-1},y_0,\ldots,y_{n_y-1})=f_i^h(x_0,\ldots,x_{n_x-1},1,y_0,\ldots,y_{n_y-1},1).$$
\end{itemize}

\begin{lem}\label{lemmemi}
  Let $i>n_x+1$ (resp. $i>n_y+1$), and let ${\frak s}$ be a maximal minor of
  $\jac_\x(F_{i-1})$ (resp. $\jac_\y(F_{i-1})$). Then there exists a vector
  $(s_1, \ldots, s_{i-1},
  {\frak s})$ in $\Ker_L(\jac_\x(F_i))$
  (resp. $\Ker_L(\jac_\y(F_i))$).
\end{lem}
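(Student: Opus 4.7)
The plan is to reduce the statement to the classical cofactor identity that any $(k+1)\times k$ matrix has the alternating vector of its maximal minors in its left kernel. I will treat only the $\jac_\x$ case; the $\jac_\y$ case is symmetric.

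First, unpack the notation. The matrix $\jac_\x(F_{i-1})$ has size $(i-1)\times(n_x+1)$, and since $i>n_x+1$, we have $i-1\geq n_x+1$, so maximal minors exist. Such a minor $\mathfrak{s}$ is specified by a choice of row indices $r_1<r_2<\cdots<r_{n_x+1}$ in $\{1,\ldots,i-1\}$: namely, $\mathfrak{s}=\det M'$, where $M'$ is the square submatrix of $\jac_\x(F_{i-1})$ on these rows. Now I would form the $(n_x+2)\times(n_x+1)$ submatrix $M$ of $\jac_\x(F_i)$ consisting of rows $r_1,\ldots,r_{n_x+1},i$.

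The classical cofactor identity says that if $M_j$ denotes the square submatrix of $M$ obtained by deleting the $j$-th row, then the vector $\bigl((-1)^{j+1}\det M_j\bigr)_{1\leq j\leq n_x+2}$ lies in $\Ker_L(M)$. (This is immediate from expanding, along any column $c$, the determinant of the $(n_x+2)\times(n_x+2)$ matrix obtained by duplicating column $c$ of $M$, which vanishes.) Since $M_{n_x+2}=M'$, the last coordinate of this kernel vector equals $(-1)^{n_x+3}\mathfrak{s}=(-1)^{n_x+1}\mathfrak{s}$.

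To pass from a kernel vector of $M$ to a kernel vector of the full $\jac_\x(F_i)$, I would pad with zeros: define $s_k=(-1)^{j+1}\det M_j$ if $k=r_j$ for some $j\leq n_x+1$, $s_k=0$ for all other $k<i$, and $s_i=(-1)^{n_x+1}\mathfrak{s}$. A row-by-row check shows $(s_1,\ldots,s_{i-1},s_i)\in\Ker_L(\jac_\x(F_i))$, since the nonzero coordinates combine by construction into a kernel relation for $M$ and the zero coordinates contribute nothing. Finally, if $n_x+1$ is odd so that $s_i=-\mathfrak{s}$, multiply the whole vector by $-1$: the result is still in the left kernel and now has $\mathfrak{s}$ as its last entry, yielding the desired $(s_1,\ldots,s_{i-1},\mathfrak{s})\in\Ker_L(\jac_\x(F_i))$.

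There is no real obstacle here: the only thing to be careful about is the bookkeeping of signs and of the row indices $r_j$ when padding with zeros, but this is entirely mechanical. The argument for $\jac_\y(F_i)$ is identical after swapping the roles of the two sets of variables.
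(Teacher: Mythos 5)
Your proof is correct. Both arguments ultimately rest on the same cofactor identity --- for a matrix with one more row than columns, the alternating vector of its maximal minors lies in its left kernel, which you justify exactly as one should, by expanding the determinant of the matrix with a duplicated column. The difference is purely in how each argument manufactures such a nearly-square matrix out of $\jac_\x(F_i)$. You \emph{delete} rows: you restrict to the $n_x+2$ rows $r_1,\ldots,r_{n_x+1},i$, apply the identity to the resulting $(n_x+2)\times(n_x+1)$ matrix, and then pad the kernel vector with zeros in the remaining coordinates. The paper instead \emph{adds} columns: it adjoins to $\jac_\x(F_i)$ the $i-n_x-2$ unit columns indexed by the deleted rows, obtaining an $i\times(i-1)$ matrix $\mathsf{N}$, and applies the same Cramer-type identity to $\mathsf{N}$; discarding the unit columns then gives a vector in $\Ker_L(\jac_\x(F_i))$ whose last entry is $\pm\mathfrak{s}$. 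The two constructions produce different syzygies: yours is sparse (supported on $n_x+2$ coordinates) and is arguably the more economical way to prove the bare existence statement; the paper's is dense, with every entry itself a maximal minor of the bordered matrix $\mathsf{N}$, and this bordered-matrix formalism ($\mathsf{M}_{\mathsf T}$) is reused later (in Conjecture \ref{conjec} and Proposition \ref{aff}), which is presumably why the authors set it up this way here. Your handling of the sign at the last coordinate and of the zero padding is correct, so nothing is missing.
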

\noindent{\em Proof.}
  The proof is done when considering ${\frak s}$ as a maximal minor of
  $\jac_\x(F_{i-1})$ with $i>n_x+1$. The case where ${\frak s}$ is a
  maximal minor of $\jac_\y(F_{i-1})$ with $i>n_y+1$ is proved similarly.

  Note that $\jac_\x(F_{i-1})$ is a matrix with $i-1$ rows and $n_x+1$
  columns and $i-1\geq n_x+1$. Denote by $(j_1, \ldots, j_{i-n_x-2})$ the
  rows deleted from $\jac_\x(F_{i-1})$ to construct its submatrix $J$
  whose determinant is ${\frak s}$.

  Consider now the $i\times(i-n_x-2)$-matrix ${\sf T}$ such that its
  $(\ell, k)$ entry is $1$ if and only if $\ell=j_k$ else it is $0$. $N$ denotes the following $i\times (i-1)$ matrix:
$${\sf N}=\left
  [\begin{array}{c|c}\jac_\x(F_i) & {\sf T}\end{array}\right ].$$
A straightforward use of Cramer's rule shows that 
$$(\minor({\sf N}, 1), -\minor({\sf N}, 2), \ldots, (-1)^{i+1}\minor({\sf N}, i))\in \Ker_L(\mathsf N).$$
Remark that this implies 
$$(\minor({\sf N}, 1), -\minor({\sf N}, 2), \ldots, (-1)^{i+1}\minor({\sf N}, i))\in \Ker_L(\jac_\x(F_i)).$$
A routine computation of $\minor({\sf N}, i)$ by going across the last columns of $N$ shows that $\minor({\sf N},i)=\pm {\frak s}$

\hfill $\square$
\begin{thm} \label{correct} 
  Let $i>n_x+1$ (resp. $i>n_y+1$) and let $s$
  be a linear combination of maximal minors of $\jac_\x(F_{i-1})$
  (resp. $\jac_\y(F_{i-1})$). Then $s\in I_{i-1} : f_{i}$.
\end{thm}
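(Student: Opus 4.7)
The plan is to combine Lemma \ref{lemmemi} with the Euler-formula identity (\ref{eq:1}), which is the bilinear analogue of $\sum_j x_j \partial_{x_j} f = f$ for a polynomial of bidegree $(1,1)$. More precisely, since each $f_j$ has degree $1$ in $\X$, one has $\jac_\x(F_i)\cdot\X = [f_1,\ldots,f_i]^t$, so for any row vector $(q_1,\ldots,q_i)\in R^i$,
$$(q_1,\ldots,q_i)\cdot \jac_\x(F_i)\cdot \X \;=\; \sum_{j=1}^i q_j f_j.$$
In particular, any element of $\Ker_L(\jac_\x(F_i))$ yields a syzygy of $(f_1,\ldots,f_i)$.

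First I would treat the case of a single maximal minor ${\frak s}$ of $\jac_\x(F_{i-1})$ with $i>n_x+1$. By Lemma \ref{lemmemi}, there exist $s_1,\ldots,s_{i-1}\in R$ such that $(s_1,\ldots,s_{i-1},{\frak s})\in\Ker_L(\jac_\x(F_i))$. Applying the identity above to this vector gives
$$\sum_{j=1}^{i-1} s_j f_j + {\frak s}\,f_i \;=\; 0,$$
whence ${\frak s}\,f_i = -\sum_{j=1}^{i-1} s_j f_j \in I_{i-1}$. By definition of the colon ideal, this means ${\frak s}\in I_{i-1}:f_i$. The case of a maximal minor of $\jac_\y(F_{i-1})$ is handled identically using Euler's formula with respect to the $\Y$-variables (the second half of (\ref{eq:1})).

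Finally, to pass from a single minor to an arbitrary linear combination $s=\sum_\alpha c_\alpha\,{\frak s}_\alpha$ of maximal minors, I would invoke the fact that $I_{i-1}:f_i$ is an ideal of $R$: since each ${\frak s}_\alpha$ lies in $I_{i-1}:f_i$, so does any $R$-linear (a fortiori $k$-linear) combination of them. This completes the argument.

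There is no real obstacle here; the proof is essentially a direct composition of Lemma \ref{lemmemi} with Euler's identity. The only point worth checking carefully is that Euler's formula yields exactly (\ref{eq:1}) in the general setting (where $\jac_\x(F_i)$ is rectangular with $i$ rows and $n_x+1$ columns), and that the left kernel hypothesis is what is needed to cancel the sum $\sum_j q_j f_j$.
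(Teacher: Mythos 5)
Your proof is correct and follows essentially the same route as the paper: Lemma \ref{lemmemi} supplies a left-kernel vector whose last entry is the minor, and Euler's formula $\jac_\x(F_i)\cdot\X=[f_1,\ldots,f_i]^t$ turns that kernel vector into a syzygy, giving ${\frak s}f_i\in I_{i-1}$. The only (immaterial) difference is that the paper first sums the kernel vectors over the minors and applies Euler's identity once, whereas you handle a single minor and then invoke the ideal property of $I_{i-1}:f_i$ for the linear combination.
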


\noindent{\em Proof.}
  By assumption, $s=\sum_\ell a_\ell\, {\frak s}_\ell$ where each ${\frak
    s}_\ell$ is a maximal minor of
  $\jac_\x (F_{i-1})$. 
  According to Lemma
  \ref{lemmemi}, for each minor ${\frak s}_\ell$ there exists
  $(s^{(\ell)}_1,\ldots, s^{(\ell)}_{i-1})$ such that
$$(s^{(\ell)}_1,\ldots,s^{(\ell)}_{i-1}, {\frak s}_\ell)\in \Ker_L(\jac_\x(F_i))$$
Thus, by summation over $\ell$, one obtains
\begin{equation}
  \label{eq:2}
  (\sum_\ell a_\ell s^{(\ell)}_1,\ldots,\sum_\ell a_\ell s^{(\ell)}_{i-1}, s)\in \Ker_L(\jac_\x(F_i)).
\end{equation}
Moreover, by Euler's formula
$$  (\sum_\ell a_\ell s^{(\ell)}_1,\ldots,\sum_\ell a_\ell s^{(\ell)}_{i-1}, s)\jac_\x(F_i)\X= s\, f_i+\sum_{j=1}^{i-1}\left (\sum_\ell a_\ell s^{(\ell)}_j\right ) f_j.$$
By the relation (\ref{eq:2}), $s\, f_i+\sum_{j=1}^{i-1}\left
  (\sum_\ell a_\ell s^{(\ell)}_j\right ) f_j=0$, which implies that
$s\in I_{i-1}:f_{i}$.
\hfill $\square$

\begin{cor}
  Let $i>n_x+1$ (resp. $i>n_y+1$), $M^{(i)}_\x$ (resp. $M^{(i)}_\y$)
  be the ideal generated by the maximal minors of $\jac_\x(F_i)$
  (resp. $\jac_\y(F_i)$). Then $M_\x^{(i-1)}\subset I_{i-1}: f_i$
  (resp. $M_\y^{(i-1)}\subset I_{i-1}: f_i$).
\end{cor}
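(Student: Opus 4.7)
The plan is to observe that this corollary is essentially a repackaging of Theorem \ref{correct}, once one notes that the ``linear combination'' in that theorem can be read as an $R$-linear combination rather than merely a $k$-linear combination. So the only real task is to justify this reading and then to assemble the set-theoretic inclusion.

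First, I would point out that in the proof of Theorem \ref{correct}, the coefficients $a_\ell$ in the expression $s = \sum_\ell a_\ell\,{\frak s}_\ell$ are never constrained to lie in $k$: the argument proceeds by summing, inside $\Ker_L(\jac_\x(F_i))$, the vectors $(s^{(\ell)}_1,\ldots,s^{(\ell)}_{i-1},{\frak s}_\ell)$ after multiplication by $a_\ell$. Since $\Ker_L(\jac_\x(F_i))$ is an $R$-module (being the left kernel of a matrix with entries in $R$), arbitrary polynomial multipliers $a_\ell \in R$ are admissible, and Euler's formula is used identically. Hence Theorem \ref{correct} applies verbatim to every $R$-linear combination of maximal minors of $\jac_\x(F_{i-1})$.

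Next, I would recall that by definition the ideal $M_\x^{(i-1)}$ generated by the maximal minors of $\jac_\x(F_{i-1})$ is exactly the set of such $R$-linear combinations: any $s \in M_\x^{(i-1)}$ can be written as $s = \sum_\ell a_\ell\,{\frak s}_\ell$ with $a_\ell \in R$ and ${\frak s}_\ell$ a maximal minor of $\jac_\x(F_{i-1})$. Applying Theorem \ref{correct} to such an $s$ yields $s \in I_{i-1}:f_i$, which gives the desired inclusion $M_\x^{(i-1)} \subset I_{i-1}:f_i$. The symmetric case $M_\y^{(i-1)} \subset I_{i-1}:f_i$ (under the assumption $i>n_y+1$) is identical after swapping the roles of $\x$ and $\y$.

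There is no real obstacle here: the entire content has already been established in Lemma \ref{lemmemi} and Theorem \ref{correct}. The only subtlety, and the one worth flagging explicitly in the write-up, is making it clear that the coefficients in the ``linear combination'' of Theorem \ref{correct} range over $R$ and not just $k$, so that the conclusion truly covers the whole ideal generated by the minors rather than merely their $k$-span.
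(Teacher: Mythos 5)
Your proof is correct and follows essentially the same route as the paper: the paper simply applies Theorem \ref{correct} to the individual minors (so $I_{i-1}:f_i$ contains a generating set of $M_\x^{(i-1)}$) and then invokes the fact that $I_{i-1}:f_i$ is an ideal, which is equivalent to your observation that the coefficients $a_\ell$ may range over $R$. Both arguments are valid; the paper's phrasing just avoids re-opening the proof of Theorem \ref{correct}.
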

\noindent{\em Proof.}
  By Theorem \ref{correct}, all minors of $\jac_\x(F_{i-1})$
  (resp. $\jac_\y(F_{i-1})$) are elements of $I_{i-1}:f_i$. Thus,
  $I_{i-1}:f_i$ contains a set of generators of $M_\x^{(i-1)}$
  (resp. $M_\y^{(i-1)}$). Since $I_{i-1}:f_i$ is an ideal, our assertion
  follows.
\hfill $\square$

The above result implies that for all $g\in M_\x^{(i-1)}$ (resp. $g\in
M_\y^{(i-1)}$), the rows of signature $({\sf \LM}(g), f_i)$ are
reduced to zero during the Matrix $F_5$ Algorithm. In order to remove
these rows, it is crucial to compute a Gr\"obner basis of the ideals
$M_\x^{(i-1)}$ and $M_\y^{(i-1)}$. These ideals are generated by the
maximal minors of matrices whose entries are linear forms. The goal
of the following section is to understand the structure of such ideals
and how Gr\"obner bases can be efficiently computed in that case.

\subsection{Gr\"obner bases and maximal minors of matrices with linear entries}

Let $\mathscr{L}$ be the set of homogeneous linear forms in the
ring $R_\X=k[x_0, \ldots, x_{n_x}]$, $\prec$~be the
$grevlex$ ordering on $R_\X$ (with $x_0\succ \cdots \succ x_{n_x}$)
and ${\sf Mat}_\LL(p,q)$ be the set of $p\times q$ matrices with entries
in $\mathscr{L}$ with $p\geq q$ and $n_x\geq p-q$. Note that ${\sf Mat}_\LL(p,q)$ is a
$k$-vector space of finite dimension.

Given ${\sf M}\in {\sf Mat}_\LL(p,q)$, we denote by ${\sf
  MaxMinors}({\sf M})$ the set of maximal minors of ${\sf M}$.  We
denote by ${\sf Macaulay}_\prec({\sf MaxMinors}({\sf M}),q)$ the
Macaulay matrix in degree $q$ associated to ${\sf MaxMinors}({\sf M})$ and to the
ordering $\prec$ (each row represents a polynomial of ${\sf
  MaxMinors}({\frak M})$ and the columns represent the monomials of
degree $q$ of $k[x_0,\ldots,x_{n_x}]$ sorted by $\prec$ from the
largest to the smallest).

The main result of this paragraph lies in the following theorem: it
states that, in general, a Gr\"obner basis of $\langle {\sf
  MaxMinors}({\sf M})\rangle$ is a \emph{linear} combination of the
generators.

\begin{thm}\label{minorsGB}
  There exists a nonempty Zariski-open set $O$ in ${\sf
    Mat}_\LL(p,q)$ such that for all ${\sf M}\in O$, a $grevlex$
  Gr\"obner basis of $\langle {\sf MaxMinors}({\sf M})\rangle$ with
  respect to $\prec$ is obtained by computing the row echelon form of
  ${\sf Macaulay}_\prec({\sf MaxMinors}({\sf M}),q)$.
\end{thm}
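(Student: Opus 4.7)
The plan is to derive the theorem from a Castelnuovo--Mumford regularity bound for $L := \langle {\sf MaxMinors}({\sf M})\rangle$ obtained via the Eagon--Northcott complex, combined with the Bayer--Stillman criterion that characterises regularity in terms of the grevlex Gr\"obner basis in generic coordinates.

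First I would exhibit a non-empty Zariski-open subset $O_1 \subset {\sf Mat}_\LL(p,q)$ on which $L$ attains the expected codimension $p-q+1$ in $R_\X$; this is achievable because the hypothesis $n_x \geq p-q$ gives $n_x+1 \geq p-q+1$. For ${\sf M}\in O_1$, the Eagon--Northcott complex provides a minimal graded free resolution of $R_\X/L$ of length $p-q+1$, whose $i$-th module for $i\geq 1$ has the form $R_\X(-(q+i-1))^{r_i}$. Reading off the shifts yields $\mathrm{reg}(R_\X/L) = \max_{i\geq 1}((q+i-1)-i) = q-1$, hence $\mathrm{reg}(L) = q$.

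Next, by the Bayer--Stillman theorem, for a homogeneous ideal $J$ in coordinates sufficiently generic with respect to the grevlex order, the maximum degree of an element of a minimal grevlex Gr\"obner basis of $J$ equals $\mathrm{reg}(J)$. On a further non-empty Zariski-open subset $O_2$ the coordinate system $x_0,\ldots,x_{n_x}$ is in generic position with respect to $L$, so the Bayer--Stillman criterion applies and the minimal grevlex Gr\"obner basis of $L$ consists of elements of degree at most $q$. Since $L$ is generated in degree $q$, every such basis element has degree exactly $q$, and therefore lies in $L_q = \mathrm{span}_k\,{\sf MaxMinors}({\sf M})$. Computing the row echelon form of ${\sf Macaulay}_\prec({\sf MaxMinors}({\sf M}), q)$ produces a $k$-basis of $L_q$ whose non-zero rows have pairwise distinct leading monomials exhausting all leading monomials appearing in $L_q$, and hence yields the desired Gr\"obner basis on $O := O_1 \cap O_2$.

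The main obstacle I foresee lies in the passage through Bayer--Stillman: one must argue that genericity of ${\sf M}\in{\sf Mat}_\LL(p,q)$ forces the grevlex initial ideal $\mathrm{in}_\prec(L)$ to coincide with the generic initial ideal $\mathrm{gin}_\prec(L)$, which is the precise condition needed. This can be made rigorous by exploiting the $GL_{n_x+1}$-action on ${\sf Mat}_\LL(p,q)$: the gin is attained on a Zariski-dense orbit in the space of coordinate systems, and this orbit pulls back to a non-empty Zariski-open subset of the matrix space via the natural action on the entries of ${\sf M}$. An alternative, self-contained route is to bypass Bayer--Stillman entirely by extracting the Hilbert series of $L$ directly from the Eagon--Northcott resolution and then verifying, by a degree-by-degree dimension count, that the monomial ideal generated by the leading monomials of the rows produced by row-reduction has the same Hilbert series as $L$, which is equivalent to the claimed Gr\"obner basis property.
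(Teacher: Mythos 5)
Your route is genuinely different from the paper's: you derive $\mathrm{reg}(\langle {\sf MaxMinors}({\sf M})\rangle)=q$ from the Eagon--Northcott resolution (valid once the minors attain the expected codimension $p-q+1$, which the hypothesis $n_x\geq p-q$ permits) and then invoke Bayer--Stillman to conclude that the minimal grevlex basis lives in degree $q$; the paper instead proves the two inclusions $\langle{\sf Monomials}_{p-q}(q)\rangle\subset\LM(\langle{\sf MaxMinors}({\sf M})\rangle)$ (non-vanishing of one explicit Macaulay minor, Lemma \ref{lemGB1}) and the reverse inclusion by a degree/dimension count after specialization (Lemmas \ref{lem:degproj} and \ref{lemGB2}), with non-emptiness certified by an explicit banded matrix (Lemma \ref{lemGB3}). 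Your first two steps are correct, and the final reduction from ``minimal Gr\"obner basis concentrated in degree $q$'' to ``row echelon form of the degree-$q$ Macaulay matrix is a Gr\"obner basis'' is also fine.

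The genuine gap is exactly where you locate it, and your proposed fix does not work as stated. Bayer--Stillman is a statement about a \emph{fixed} ideal in \emph{generic coordinates}: for each fixed ${\sf M}_0$ there is a dense open $U_{{\sf M}_0}\subset GL_{n_x+1}$ of good changes of coordinates. The set $\{g\cdot {\sf M}_0 : g\in U_{{\sf M}_0}\}$ lies inside the single $GL_{n_x+1}$-orbit of ${\sf M}_0$, which has dimension at most $(n_x+1)^2$ and is therefore a proper closed-dimensional subset of ${\sf Mat}_\LL(p,q)$ (of dimension $pq(n_x+1)$); it cannot ``pull back to a non-empty Zariski-open subset of the matrix space.'' To repair this you must argue over the whole family at once: e.g.\ apply semicontinuity of the degree-$d$ initial segments to the family $L(g\cdot{\sf M})$ parametrized by the irreducible product $GL_{n_x+1}\times{\sf Mat}_\LL(p,q)$, use dominance of the action map to identify the generic initial data over the product with that over ${\sf Mat}_\LL(p,q)$ alone and, fiberwise, with $\mathrm{gin}(L({\sf M}))$, and then close the argument in all degrees by comparing Hilbert functions of $\mathrm{in}_\prec(L({\sf M}))$ and of the ideal generated by its degree-$q$ part. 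That last Hilbert-function comparison is essentially your ``alternative, self-contained route,'' which is sound and is in substance what the paper does in Lemma \ref{lemGB2}; but as written you only gesture at it, and neither it nor the orbit argument is carried out, so the genericity transfer --- the crux of the theorem, since the statement is about the \emph{given} coordinates $x_0,\ldots,x_{n_x}$ and the \emph{given} grevlex order --- remains unproved. A smaller point: the non-emptiness of your $O_1$ (expected codimension for some, hence generic, ${\sf M}$) is asserted rather than proved; the paper discharges the analogous burden with the explicit witness of Lemma \ref{lemGB3}.
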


This theorem is related with a result from Sturmfels, Bernstein and Zelevinsky (1993), which
states that the ideal generated by the maximal minors of a matrix whose
entries are variables is a universal Gröbner Basis. We tried without success to use this result in order to prove Theorem \ref{minorsGB}. Therefore, we propose an ad-hoc proof, 
which is based on the following
Lemmas whom proofs are postponed at the end of the paragraph. 

\begin{lem}\label{lemGB1}
  Let ${\sf Monomials}_{p-q}(q)$ be the set of monomials of degree $q$ in
  $k[x_0, \ldots, x_{p-q}]$. There exists a Zariski-open subset $O'$
  of ${\sf Mat}_\LL(p,q)$ such that for all ${\sf M}\in O'$
$$\langle {\sf Monomials}_{p-q}(q)\rangle\subset\LM(\langle{\sf MaxMinors}({\sf M}) \rangle ) $$
\end{lem}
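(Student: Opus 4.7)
\textbf{Plan for Lemma \ref{lemGB1}.}

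The plan is to reduce the claim to the non-vanishing (hence Zariski-open) condition that a certain square submatrix $\Phi(M)$ of the Macaulay matrix of $\mathsf{MaxMinors}(M)$ in degree $q$ is invertible, and then to exhibit a single matrix $M_0$ satisfying it.

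Two facts about $\mathsf{Monomials}_{p-q}(q)$ drive the reduction. First, its cardinality is $\binom{p}{q}$, which is also the number of maximal minors of a $p\times q$ matrix. Second, under the grevlex ordering with $x_0\succ\cdots\succ x_{n_x}$, its elements are precisely the $\binom{p}{q}$ grevlex-largest degree-$q$ monomials in $R_\X$: any degree-$q$ monomial involving some $x_j$ with $j>p-q$ has its rightmost nonzero position strictly to the right of that of any element of $\mathsf{Monomials}_{p-q}(q)$, and is therefore grevlex-smaller. Consequently, the desired containment (which need only be checked in degree $q$, since its left-hand side is generated there) is equivalent to the $\binom{p}{q}\times\binom{p}{q}$ submatrix $\Phi(M)$ of $\mathsf{Macaulay}_\prec(\mathsf{MaxMinors}(M),q)$ formed by the columns indexed by $\mathsf{Monomials}_{p-q}(q)$ being invertible. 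Since $\det\Phi(M)$ is polynomial in the coefficients of the linear forms composing $M$, the locus where it is nonzero defines a Zariski-open subset $O'\subset\mathsf{Mat}_\LL(p,q)$, and it remains only to exhibit one point of $O'$.

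For the specialization I would take $M_0$ to be the $p\times q$ matrix with $(i,j)$-entry $x_{i-j}$ when $i\geq j$ and $0$ otherwise (admissible since $n_x\geq p-q$). For a subset $I=\{i_1<\cdots<i_q\}\subset\{1,\ldots,p\}$ the corresponding maximal minor expands as
$$\Delta_I=\sum_{\sigma\in S_q}\mathrm{sgn}(\sigma)\prod_{k=1}^q x_{i_k-\sigma(k)},$$
where a summand vanishes when $i_k<\sigma(k)$ for some $k$. The map $I\mapsto\prod_k x_{i_k-k}$ is a bijection between size-$q$ subsets of $\{1,\ldots,p\}$ and $\mathsf{Monomials}_{p-q}(q)$ (its inverse sends a non-decreasing tuple $(a_1,\ldots,a_q)$ to $\{a_1+1,\ldots,a_q+q\}$), so it suffices to show that this identity-permutation term is the grevlex-leading monomial of $\Delta_I$.

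The main technical step is this leading-monomial identification, which I would prove by an inversion-removal argument. Given a non-identity $\sigma$ contributing a nonzero summand, pick an inversion $k<k'$ with $\sigma(k)>\sigma(k')$ and let $\sigma'$ be obtained by swapping those two values. Setting $a=i_k-\sigma(k)$, $b=i_{k'}-\sigma(k')$, $a'=i_k-\sigma(k')$, $b'=i_{k'}-\sigma(k)$, one checks $a<a'$, $b'<b$ and, crucially, $a'<b$ since $b-a'=i_{k'}-i_k>0$; the only possible coincidence among the four values is $a'=b'$, which occurs at a position strictly smaller than $b$. Therefore the rightmost coordinate at which the exponent vectors of $\prod_k x_{i_k-\sigma(k)}$ and $\prod_k x_{i_k-\sigma'(k)}$ differ is $b$, and the change is $-1$; by the definition of grevlex, $\prod_k x_{i_k-\sigma'(k)}\succ\prod_k x_{i_k-\sigma(k)}$. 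Iterating until all inversions are removed shows the identity term dominates, so it is the leading monomial of $\Delta_I$. With this in hand, ordering rows and columns of $\Phi(M_0)$ by the bijection above in decreasing grevlex order makes $\Phi(M_0)$ upper triangular with $\pm 1$ on the diagonal (entries strictly below the diagonal vanish because they correspond to target monomials strictly greater than the leading monomial of the relevant minor), so $\det\Phi(M_0)=\pm 1\neq 0$ and $M_0\in O'$. The hard part is the inversion swap: extracting $a'<b$ from $i_{k'}>i_k$ and verifying that the benign coincidence $a'=b'$ does not shift the rightmost changing coordinate away from $b$.
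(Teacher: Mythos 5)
Your reduction of the lemma to the non-vanishing of the determinant of the $\binom{p}{q}\times\binom{p}{q}$ submatrix of ${\sf Macaulay}_\prec({\sf MaxMinors}({\sf M}),q)$ whose columns are indexed by ${\sf Monomials}_{p-q}(q)$ is exactly the paper's proof of this lemma; the paper stops there, since the statement only asserts the existence of the open set $O'$, and defers non-emptiness to Lemma \ref{lemGB3}, which it proves with essentially your witness matrix and the same inversion-removal identification of leading monomials. So your plan is correct in substance and in fact proves more than is asked.

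One slip in the witness matrix: taking the $(i,j)$-entry to be $x_{i-j}$ for \emph{all} $i\geq j$ does not in general give an element of ${\sf Mat}_\LL(p,q)$, because $i-j$ ranges up to $p-1$ whereas the standing hypothesis only guarantees $n_x\geq p-q$; your parenthetical justification covers only indices up to $p-q$. The paper's witness is the banded matrix whose $(i,j)$-entry is $x_{i-j}$ when $0\leq i-j\leq p-q$ and $0$ otherwise. This correction costs nothing in your argument: the identity-permutation term of $\Delta_I$ only involves $x_{i_k-k}$ with $0\leq i_k-k\leq p-q$, and in your inversion swap the inequalities $a<a'<b\leq p-q$ and $0\leq a<b'<b$ show that a summand which is nonzero for the banded matrix is carried to another nonzero summand, so the leading-monomial identification and the triangularity of $\Phi(M_0)$ go through verbatim.
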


\begin{lem}\label{lemGB2}
  Let ${\sf Monomials}_{p-q}(q)$ be the set of monomials of degree $q$ in
  $k[x_0, \ldots, x_{p-q}]$. There exists a Zariski-open subset $O''$
  of ${\sf Mat}_\LL(p,q)$ such that for all ${\sf M}\in O''$
$$\LM(\langle{\sf MaxMinors}({\sf M}) \rangle )\subset\langle {\sf Monomials}_{p-q}(q)\rangle $$
\end{lem}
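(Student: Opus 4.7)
Proof plan. The strategy is to show via Hilbert series that $\LM(I) = J$ generically, where $I = \langle {\sf MaxMinors}({\sf M})\rangle$ and $J = \langle {\sf Monomials}_{p-q}(q)\rangle$. From Lemma \ref{lemGB1}, the inclusion $J \subseteq \LM(I)$ holds on a Zariski-open set $O'$; combined with the Gr\"obner basis identity $\dim(R_\X/\LM(I))_d = \dim(R_\X/I)_d$, it suffices to establish the Hilbert series equality $\HS(R_\X/I) = \HS(R_\X/J)$ on another Zariski-open set $O_{\mathrm{HS}}$. Then on $O' \cap O_{\mathrm{HS}}$, the inclusions $J_d \subseteq \LM(I)_d \subseteq R_d$ together with $\dim \LM(I)_d = \dim J_d$ force $\LM(I)_d = J_d$ for every $d$, yielding $\LM(I) = J$, and in particular $\LM(I) \subseteq J$.

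A direct monomial count decomposes each monomial of $R_\X/J$ into its ``top'' part in $\{x_0,\ldots,x_{p-q}\}$ (constrained to have total degree $<q$) and its unrestricted ``bottom'' part in $\{x_{p-q+1},\ldots,x_{n_x}\}$, giving
\[
\HS(R_\X/J)(t) \;=\; \frac{1}{(1-t)^{n_x-p+q}}\sum_{j=0}^{q-1}\binom{j+p-q}{p-q}t^j.
\]
For $\HS(R_\X/I)$, I use the Eagon--Northcott complex. The hypothesis $n_x \geq p-q$ ensures that the expected codimension $p-q+1$ for the ideal of maximal minors is attainable; on the Zariski-open set where this codimension is achieved, the Eagon--Northcott complex is a minimal free resolution of $R_\X/I$, and its Euler characteristic yields
\[
\HS(R_\X/I)(t) \;=\; \frac{1}{(1-t)^{n_x+1}}\left(1 + \sum_{i=1}^{p-q+1}(-1)^i\binom{q+i-2}{q-1}\binom{p}{q+i-1}t^{q+i-1}\right).
\]
The equality of the two Hilbert series then reduces to the polynomial identity
\[
1 + \sum_{i=1}^{p-q+1}(-1)^i\binom{q+i-2}{q-1}\binom{p}{q+i-1}t^{q+i-1} \;=\; (1-t)^{p-q+1}\sum_{j=0}^{q-1}\binom{j+p-q}{p-q}t^j,
\]
which is verified by comparing coefficients against the Taylor expansion of $(1-t)^{-(p-q+1)}$ truncated at degree $q-1$.

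The main obstacle is establishing that the Eagon--Northcott complex is a minimal free resolution in our setting of linear-form entries, i.e., that the ideal of maximal minors attains codimension $p-q+1$ on a Zariski-open subset of ${\sf Mat}_\LL(p,q)$. This is handled by the ``band'' specialization $({\sf M}_0)_{i,j} = x_{i-j}$ for $0 \leq i-j \leq p-q$ (zero otherwise): an explicit determinant computation shows that the maximal minors of ${\sf M}_0$ generate exactly $J$, whose codimension in $R_\X$ is visibly $p-q+1$. Upper semi-continuity of codimension extends this property to a Zariski-open neighborhood of ${\sf M}_0$, on which the classical Eagon--Northcott theorem guarantees that the complex is a resolution.
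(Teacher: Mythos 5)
Your proof is correct, and it takes a genuinely different route from the paper's. The paper also starts from Lemma \ref{lemGB1}, but from the geometric side it only extracts the \emph{degree} of $\langle{\sf MaxMinors}({\sf M})\rangle$: Lemma \ref{lem:degproj} shows generically (after passing to the fraction field $k(x_{p-q+1},\dots,x_{n_x})$) that this ideal is radical of degree $\binom{p}{q-1}$, by identifying its variety with the projection of the variety of an affine bilinear system and invoking the multihomogeneous B\'ezout bound; since the degree alone does not determine the Hilbert function, the paper then needs an extra normal-form/descending-chain argument to force every leading monomial into $\langle {\sf Monomials}_{p-q}(q)\rangle$. You instead compute the \emph{entire} Hilbert series of $R_\X/\langle{\sf MaxMinors}({\sf M})\rangle$ via the Eagon--Northcott complex on the open locus where the minors attain codimension $p-q+1$ (witnessed by the same band matrix the paper uses in Lemma \ref{lemGB3}), match it against the Hilbert series of $R_\X/\langle {\sf Monomials}_{p-q}(q)\rangle$, and conclude by Macaulay's equality $\dim(R_\X/I)_d=\dim(R_\X/\LM(I))_d$ combined with the inclusion from Lemma \ref{lemGB1}. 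Your version is cleaner and more self-contained on the commutative-algebra side --- no detour through affine bilinear systems, radicality, or the specializations $\psi$ and $\psi_{\mathbf v}$ --- and it yields the full Hilbert series of the minor ideal as a byproduct; the price is importing Eagon--Northcott acyclicity, which the authors explicitly chose to avoid (``we propose an ad-hoc proof''). Two minor presentational points, neither a gap: acyclicity of the complex suffices (minimality is irrelevant to the Euler characteristic), and the openness of the maximal-codimension locus deserves a one-line justification, e.g.\ upper semicontinuity of fiber dimension for the incidence variety $\{({\sf M},x): x\in V(\langle{\sf MaxMinors}({\sf M})\rangle)\}$ over ${\sf Mat}_\LL(p,q)$, together with the fact that the determinantal bound caps the codimension at $p-q+1$.
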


\begin{lem}\label{lemGB3}
  The Zariski-open set $O'\cap O''\subset {\sf Mat}_\LL(p,q)$ is nonempty. 
\end{lem}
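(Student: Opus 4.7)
The plan is a short irreducibility argument. First I would observe that ${\sf Mat}_\LL(p,q)$ is a $k$-vector space of dimension $pq(n_x+1)$ --- each of the $pq$ entries is a linear form in $R_\X$, specified by $n_x+1$ coefficients --- and so, as an affine space over $k$, it is an irreducible variety. In an irreducible variety, the union of two proper Zariski-closed subsets is still a proper closed subset, and consequently the intersection of any two nonempty Zariski-open subsets is again nonempty.

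This reduces the lemma to checking that $O'$ and $O''$ are individually nonempty. Both sets are cut out by the non-vanishing of explicit polynomial inequalities on the entries of ${\sf M}$, which are what the proofs of Lemmas \ref{lemGB1} and \ref{lemGB2} implicitly verify at a chosen witness. A clean way to handle both at once is to exhibit a single matrix ${\sf M}_0 \in {\sf Mat}_\LL(p,q)$ lying in both: one takes a generic linear specialization to $R_\X$ of the $p\times q$ matrix of independent indeterminates studied in Sturmfels-Bernstein-Zelevinsky, for which the ideal of maximal minors has leading monomial ideal exactly $\langle {\sf Monomials}_{p-q}(q)\rangle$. The inclusion $\langle {\sf Monomials}_{p-q}(q)\rangle \subset \LM(\langle {\sf MaxMinors}({\sf M}_0)\rangle)$ of Lemma \ref{lemGB1} and the reverse inclusion of Lemma \ref{lemGB2} both hold at ${\sf M}_0$, so ${\sf M}_0 \in O' \cap O''$.

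The main (minor) obstacle is simply establishing nonemptyness of the two sets, i.e.\ verifying that the polynomial conditions defining $O'$ and $O''$ are not identically zero on ${\sf Mat}_\LL(p,q)$. This is immediate from the constructions above, since the witness only involves linear forms in $x_0, \ldots, x_{n_x}$ and so actually lies in the ambient parameter space. Once nonemptyness is in hand, the irreducibility observation of the first paragraph finishes the proof without further work.
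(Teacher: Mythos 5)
There is a genuine gap here: you never actually produce a witness. The reduction in your first paragraph (affine space is irreducible, so two nonempty opens meet) is correct but is not where the content of the lemma lies; the content is precisely the exhibition of one concrete matrix in $O'\cap O''$. Your proposed witness, ``a generic linear specialization to $R_\X$ of the matrix of independent indeterminates,'' is circular: every element of ${\sf Mat}_\LL(p,q)$ \emph{is} such a specialization, so asserting that a generic one lies in $O'\cap O''$ is just a restatement of the claim that $O'\cap O''$ is nonempty. Moreover, the appeal to Sturmfels--Bernstein--Zelevinsky does not do the work you want it to. Their theorem concerns the matrix whose $pq$ entries are \emph{independent} variables, and its leading term ideal is generated by the diagonal products of those entries in $k[m_{i,j}]$ --- not by $\langle {\sf Monomials}_{p-q}(q)\rangle$ in $k[x_0,\ldots,x_{n_x}]$. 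Passing to a specialization where the entries become linear forms in only $n_x+1$ variables does not preserve the Gr\"obner basis property (minors can cancel or acquire new leading terms), and the authors explicitly state before Theorem \ref{minorsGB} that they tried and failed to derive the result from SBZ, which is why an ad hoc argument is needed.

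What the paper actually does, and what your proposal is missing, is an explicit construction: take the banded matrix whose $(i,j)$-entry is $x_{i+j-2}$ when $0\leq i+j-2\leq p-q$ and $i\geq j$, and zero otherwise. One then shows, by a permutation-exchange argument on the Leibniz expansion of each $q\times q$ minor, that the leading monomial of the minor obtained by deleting rows $i_1<\cdots<i_{p-q}$ is $x_0^{i_1-1}x_1^{i_2-i_1-1}\cdots x_{p-q}^{p-i_{p-q}-1}$; these leading monomials are pairwise distinct and enumerate exactly ${\sf Monomials}_{p-q}(q)$, which places this matrix in $O'\cap O''$. Some such explicit computation (or another non-circular existence argument) is indispensable; without it your proof does not establish nonemptiness.
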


\noindent\emph{Proof of Theorem \ref{minorsGB}.}
From Lemmas \ref{lemGB1}, \ref{lemGB2} and \ref{lemGB3}, $O=O'\cap O''$ is a nonempty Zariski open set. Now let $\mathsf{M}$ be a matrix in $O\subset {\sf Mat}_\LL(p,q)$.
$$\langle {\sf Monomials}_{p-q}(q)\rangle=\LM(\langle{\sf MaxMinors}({\sf M}) \rangle ).$$
Thus all polynomials in a minimal Gröbner basis of $\langle{\sf MaxMinors}({\sf M}) \rangle$ have degree $q$ and then can be obtained by computing the row echelon form of $\mathsf{Macaulay}_\prec(\mathsf{MaxMinors(M)},q)$.
\hfill$\square$

We prove now Lemmas \ref{lemGB1}, \ref{lemGB2} and \ref{lemGB3}. 

\noindent{\em Proof of Lemma \ref{lemGB1}.}  Let ${\frak M}$ be the $(p,
q)$-matrix whose $(i,j)$-entry is a generic homogeneous linear form
$\sum_{k=0}^{n_x}{\frak a}_k^{(i,j)}x_k\in k(\mathfrak a_0^{(i,j)}, \ldots,
\mathfrak a_k^{(i,j)})[x_0, \ldots, x_{n_x}]$. Denote by $${\frak a}=\{{\frak
  a}_k^{(i,j)}, 0\leq k\leq n_x,\; 1\leq i\leq p,\; 1\leq j\leq q\}$$
and given a set 
$${\bf a}=\{{\bf
  a}_k^{(i,j)}\in k, 0\leq k\leq n_x,\; 1\leq i\leq p,\; 1\leq j\leq
q\}$$ consider the specialization map $\varphi_\a: {\frak
  M}\mapsto {\frak M}_\a\in {\sf Mat}_\LL(p,q)$ such that the
$(i,j)$-entry of ${\frak M}_\a$ is
$\sum_{k=0}^{n_x}{\a}_k^{(i,j)}x_k\in k[x_0, \ldots, x_{n_x}]$. We
prove below that there exists a polynomial $g\in k[{\frak a}]$ such
that, if $g(\a)\neq 0$ then $$\langle {\sf
  Monomials}_{p-q}(q)\rangle\subset \LM(\langle {\sf
  MaxMinors}(\varphi_\a({\frak M}))\rangle ).$$

 Consider the Macaulay matrix ${\sf Macaulay}_\prec({\sf
    MaxMinors}({\frak M}),q)$.

  Remark that the number of monomials in ${\sf Monomials}_{p-q}(q)$
  equals the number of maximal minors of ${\frak M}$. Moreover, by
  construction of ${\sf Macaulay}_\prec({\sf MaxMinors}({\frak M}),q)$
  and by definition of $\prec$ (see Definition \ref{def:grevlex}), the first
  ${{p}\choose{q}}$ columns of ${\sf Macaulay}_\prec({\sf
    MaxMinors}({\frak M}),q)$ contain the coefficients of the monomials
  in ${\sf Monomials}_{p-q}(q)$ of the polynomials in ${\sf
    MaxMinors}({\frak M})$.

  Saying that $\langle {\sf Monomials}_{p-q}(q)\rangle\subset
  \LM(\langle {\sf MaxMinors}({\frak M}) \rangle)$ is equivalent to
  saying that the determinant of the square submatrix of ${\sf
    Macaulay}_\prec({\sf MaxMinors}({\frak M}),q)$ containing its first
  ${p}\choose{q}$ columns is non-zero. Let $g\in k[{\frak a}]$ be this
  determinant. 

  The inequation $g\neq 0$ defines a Zariski-open
  set ${O'}$ such that for all $\a\in {O'}$
$$\langle {\sf  Monomials}_{p-q}(q)\rangle\subset \LM(\langle {\sf
  MaxMinors}(\varphi_\a({\frak M}))\rangle ). $$
\hfill $\square$

In the following $\psi$ denotes the canonical inclusion morphism from $k[x_0,\ldots, x_{n_x}]$ to $k'[x_0,\ldots, x_{p-q}]$, where $k'$ is the field of fractions $k(x_{p-q+1},\ldots, x_{n_x})$.

  For $(v_1,\ldots, v_{n_x-p+q})$, $\psi_{\mathbf{v}}$ denotes the specialization morphism:
$$\begin{array}{cccc}
\psi_{\mathbf{v}}:&k[x_0,\ldots,x_{n_x}]&\longrightarrow&k[x_0,\ldots,x_{p-q}]\\
&f(x_0,\ldots,x_{n_x})&\longmapsto&f(x_0,\ldots,x_{p-q},v_1,\ldots,v_{n_x-p+q})
\end{array}$$

\begin{lem}\label{lem:degproj}
There exists a Zariski open set $O'''$, such that if $\mathbf a\in O'''$, then the ideal $\langle \mathsf{MaxMinors}(\psi\circ\varphi_{\mathbf a}(\mathfrak M))\rangle$ is radical and its degree is $p\choose {q-1}$.
\end{lem}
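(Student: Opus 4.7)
The strategy is to reinterpret the ideal as the generic fiber of a finite cover, and to determine its degree and reducedness using Cohen--Macaulayness (from Eagon--Northcott) together with a Bertini-type argument. Observe that $\psi$ corresponds to base-change to the generic point $\eta = \mathrm{Spec}(k')$ of the coordinate projection
$$\pi : \mathbb{A}^{n_x+1}_k \longrightarrow \mathbb{A}^{n_x-p+q}_k, \qquad (x_0,\ldots,x_{n_x}) \longmapsto (x_{p-q+1},\ldots,x_{n_x}).$$
Setting $V = V(\langle\mathsf{MaxMinors}(\varphi_{\mathbf a}(\mathfrak M))\rangle) \subset \mathbb{A}^{n_x+1}_k$, the scheme over $k'$ cut out by $\langle\mathsf{MaxMinors}(\psi \circ \varphi_{\mathbf a}(\mathfrak M))\rangle$ is exactly the fiber of $\pi|_V$ above $\eta$.

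For $\mathbf a$ in a Zariski-open set, the Eagon--Northcott theorem ensures that the ideal $I = \langle\mathsf{MaxMinors}(\varphi_{\mathbf a}(\mathfrak M))\rangle$ has the expected codimension $p - q + 1$, so $V$ is Cohen--Macaulay of dimension $n_x - p + q$. Specializing $x_{p-q+1} = \cdots = x_{n_x} = 0$ turns $\varphi_{\mathbf a}(\mathfrak M)$ into a matrix of generic linear forms in $x_0,\ldots,x_{p-q}$, whose maximal minors generate an $\mathfrak m_0$-primary ideal of multiplicity $\binom{p}{q-1}$ (with $\mathfrak m_0 = (x_0,\ldots,x_{p-q})$), by the Thom--Porteous formula. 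Hence $x_{p-q+1},\ldots,x_{n_x}$ form a system of parameters for $k[x_0,\ldots,x_{n_x}]/I$, and Cohen--Macaulayness implies that $k[x_0,\ldots,x_{n_x}]/I$ is a free $k[x_{p-q+1},\ldots,x_{n_x}]$-module of rank $\binom{p}{q-1}$. Consequently, the generic fiber of $\pi|_V$ is a zero-dimensional $k'$-scheme of degree $\binom{p}{q-1}$.

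For reducedness, the singular locus of $V$ corresponds to the locus where $\varphi_{\mathbf a}(\mathfrak M)(x)$ has rank $\leq q - 2$, a sublocus of codimension at least $2(p - q + 2)$ in the ambient matrix space, and hence of codimension at least $p - q + 3$ inside $V$. A generic zero-dimensional fiber of $\pi|_V$ avoids this sublocus by a direct dimension count, and combined with generic smoothness of $\pi|_V$ this forces the generic fiber to be smooth, hence reduced. Putting the two parts together, $\langle\mathsf{MaxMinors}(\psi \circ \varphi_{\mathbf a}(\mathfrak M))\rangle$ is radical of degree $\binom{p}{q-1}$. All the genericity conditions on $\mathbf a$ used above are Zariski-open, so their intersection is the desired $O'''$; nonemptiness follows by specialization from the universal matrix $\mathfrak M$. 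The main obstacle is verifying that Cohen--Macaulayness, the Noether normalization property, and the codimension bound on the singular locus all hold on a common Zariski-open subset of parameters, and that the Bertini-style reducedness argument goes through over the non-algebraically-closed field $k'$.
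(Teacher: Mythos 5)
Your proof is essentially correct, but it follows a genuinely different route from the paper's. The paper never invokes the Cohen--Macaulay theory of determinantal ideals: it rewrites $\psi(\mathfrak M)\cdot(y_0,\ldots,y_{q-2},1)^T$ as an affine bilinear system $f_1,\ldots,f_p$ in $p$ variables, invokes Lemma \ref{lem:elimJac} to identify $\sqrt{\langle \mathsf{MaxMinors}(\psi\circ\varphi_{\mathbf a}(\mathfrak M))\rangle}$ with the elimination ideal $\langle \varphi_{\mathbf a}(f_1),\ldots,\varphi_{\mathbf a}(f_p)\rangle\cap k'[x_0,\ldots,x_{p-q}]$, shows the projection from $Var(\varphi_{\mathbf a}(I))$ is a bijection (each fiber is the unique solution of an affine linear system), gets the degree $\binom{p}{q-1}$ from the bihomogeneous B\'ezout count (Corollary \ref{coro:degaff}) together with radicality of the bilinear system (Lemma \ref{lem:radical}, via Sard), and finally deduces radicality of the minors ideal by sandwiching: Lemma \ref{lemGB1} forces $\deg(J)\leq\binom{p}{q-1}=\deg(\sqrt J)$, so $J=\sqrt J$. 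Your argument instead works intrinsically with the determinantal scheme: Eagon--Northcott gives Cohen--Macaulayness in the expected codimension, the Artinian reduction plus Giambelli--Thom--Porteous gives the rank $\binom{p}{q-1}$ of the free module over the Noether normalization (hence the degree of the generic fiber), and a Bertini/generic-smoothness argument gives reducedness. Your route is more self-contained with respect to determinantal ideal theory and actually proves more (freeness over $k[x_{p-q+1},\ldots,x_{n_x}]$, hence equidimensionality of all fibers' lengths), whereas the paper's route recycles the affine-bilinear machinery it has already built in the appendix and avoids citing Eagon--Northcott and Thom--Porteous.

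Two points in your sketch deserve more than the one-line acknowledgment you give them. First, the identification of $\mathrm{Sing}(V)$ with the rank $\leq q-2$ locus, and the codimension bound $2(p-q+2)$ for that locus, both require that the linear map $x\mapsto \varphi_{\mathbf a}(\mathfrak M)(x)$ be transverse to the rank stratification away from the origin; this is itself a genericity statement on $\mathbf a$ that needs a Kleiman-type argument or an explicit witness (note the origin always maps to the zero matrix, so transversality fails there, but the origin does not dominate the base of $\pi$). Second, generic smoothness requires $\mathrm{char}\,k=0$ and a reduced (or at least generically reduced) source; both hypotheses hold here --- the paper also assumes characteristic $0$ and uses Sard --- but they should be stated. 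With those verifications supplied, your proof is complete.
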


\noindent{\em Proof.}
There exists an affine bilinear system $f_1,\ldots, f_p\in k'(\mathfrak a)[x_0,\ldots,x_{p-q},y_0,\ldots,y_{q-2}]$, such that:
$$
\psi(\mathfrak{M})\cdot\begin{pmatrix}
y_0\\\vdots\\y_{q-2}\\1
\end{pmatrix}=\begin{pmatrix}f_1\\\vdots\\f_p\end{pmatrix}.$$
Let $I$ denote the ideal $\langle f_1,\ldots, f_p\rangle$. According
to Lemma \ref{lem:elimJac} (in Appendix), there exists a polynomial
$h_1\in k[\mathfrak a]$, such that if $h_1(\mathbf a)\neq 0$, then $\sqrt{\langle \mathsf{MaxMinors}(\psi\circ\varphi_{\mathbf a}(\mathfrak
  M))\rangle}=\langle \varphi_{\mathbf a}(f_1),\ldots,\varphi_{\mathbf a}(f_p)\rangle\cap
k'[x_0,\ldots,x_{p-q}]$.

One remarks that there also exists a polynomial $h_2\in k[\mathfrak
  a]$ such that if $h_2(\mathbf a)\neq 0$, then $\varphi_{\mathbf a}(I)$ is
0-dimensional (since $f_1,\ldots, f_p$ is a
generic affine bilinear system with $p$ equations and $p$ variables, see
Proposition \ref{equidim}). From Lemma \ref{lem:radical} (in Appendix), there exists a polynomial $h_3$
such that if $h_3(\mathbf a)\neq 0$, then $\varphi_{\mathbf a}(I)$ is
radical. From now on, we suppose that $h_1(\mathbf a) h_2(\mathbf a)
h_3(\mathbf a)\neq 0$. If $(w_0,\ldots,w_{p-q})\in Var(\langle
\mathsf{MaxMinors}(\psi\circ\varphi_{\mathbf a}(\mathfrak M))\rangle)$ (where $Var$ denotes the variety), then the
set of points in $Var(\varphi_{\mathbf a}(I))$ whose projection is
$(w_0,\ldots,w_{p-q})$ can be obtained by solving an affine linear
system. The set of solutions of this system is nonempty and finite
(since $\varphi_{\mathbf a}(I)$ is 0-dimensional), thus it contains a unique
element. So there is a bijection between $Var(\varphi_{\mathbf a}(I))$ and
$Var(\langle \mathsf{MaxMinors}(\psi\circ\varphi_{\mathbf a}(\mathfrak
M))\rangle)$. Since $\varphi_{\mathbf a}(I)$ is radical,
$$\deg(\varphi_{\mathbf a}(I))=\deg(\sqrt{\langle \mathsf{MaxMinors}(\psi\circ\varphi_{\mathbf a}(\mathfrak
M))\rangle}).$$
From Corollary \ref{coro:degaff}, this degree is $p\choose {q-1}$.
According to Lemma \ref{lemGB1},
$$\begin{array}{rcl}\deg(\sqrt{\langle \mathsf{MaxMinors}(\psi\circ\varphi_{\mathbf a}(\mathfrak
M))\rangle})&\leq& \deg(\langle \mathsf{MaxMinors}(\psi\circ\varphi_{\mathbf a}(\mathfrak
M))\rangle)\\&\leq& \deg(\langle\textsf{Monomials}_{p-q}(q)\rangle)={p\choose {q-1}}.\end{array}$$ 
Therefore,
$$\deg(\sqrt{\langle \mathsf{MaxMinors}(\psi\circ\varphi_{\mathbf a}(\mathfrak
M))\rangle}) = \deg(\langle \mathsf{MaxMinors}(\psi\circ\varphi_{\mathbf a}(\mathfrak
M))\rangle)$$
and thus
$$\sqrt{\langle \mathsf{MaxMinors}(\psi\circ\varphi_{\mathbf a}(\mathfrak
M))\rangle} = \langle \mathsf{MaxMinors}(\psi\circ\varphi_{\mathbf a}(\mathfrak
M))\rangle.$$
Furthermore, the inequation $h_1(\mathbf a) h_2(\mathbf a) h_3(\mathbf a)\neq 0$ defines the wanted Zariski open set.
\hfill $\square$

\bigskip

\noindent{\em Proof of Lemma \ref{lemGB2}.}  Consider the Zariski open set
$O''=O'\cap O'''$ (where $O'$ is defined in Lemma \ref{lemGB1} and
$O'''$ is defined in Lemma \ref{lem:degproj}) and let $\mathbf a$ be
taken in $O''$. According to Lemma \ref{lemGB1},
$$\mathsf{Monomials}_{p-q}(q)\subset \LM(\langle
\mathsf{MaxMinors}(\psi\circ\varphi_{\mathbf a}(\mathfrak
M))\rangle).$$ A basis of $k'[x_0,\ldots, x_{p-q}]/\langle
\mathsf{Monomials}_{p-q}(q)\rangle$ is given by the set of all
monomials of degree less than $q$. Therefore, the dimension
of $k'[x_0,\ldots, x_{p-q}]/\langle
\mathsf{Monomials}_{p-q}(q)\rangle$ (as a $k'$-vector space) is
$p\choose {q-1}$.  Thus, from Lemma \ref{lem:degproj},
$$\deg(\langle \mathsf{MaxMinors}(\psi\circ\varphi_{\mathbf a}(\mathfrak
M))\rangle)={p\choose {q-1}}=\deg(\langle
\mathsf{Monomials}_{p-q}(q)\rangle).$$ Therefore, all polynomials in
$\langle \mathsf{MaxMinors}(\psi\circ\varphi_{\mathbf a}(\mathfrak M))\rangle$
have degree at least $q$.

Now let $g\neq 0$ be a polynomial in
$\langle\mathsf{MaxMinors}(\varphi_{\mathbf a}(\mathfrak M))\rangle$. Then there
exists $\mathbf v=(v_1,\ldots, v_{n_x-p+q})$ such that the specialized
polynomial verifies $\psi_{\mathbf{v}}(g)\neq 0$ and such that $\deg(\langle \mathsf{MaxMinors}(\psi_{\mathbf v}\circ\varphi_{\mathbf a}(\mathfrak
M))\rangle)={p\choose {q-1}}$.  Thus
$\psi_{\mathbf{v}}(g)$ is a polynomial of degree at least $q$ in
$k[x_0,\ldots, x_{p-q}]$. Now suppose by contradiction that
$\LM(g)\notin \langle \mathsf{Monomials}_{p-q}(q)\rangle$. Since
$\deg(\psi_{\mathbf v}(g))\geq q$, there exists a monomial $\mathfrak
m$ in $g$ such that $\mathfrak m\in \langle
\mathsf{Monomials}_{p-q}(q)\rangle$. Thus consider $g_1=g-\lambda\mathfrak
m+\lambda\NF(\mathfrak m)$. One remarks that $\LM(g)=\LM(g_1)\notin \langle
\mathsf{Monomials}_{p-q}(q)\rangle$. Since $g_1\in
\langle\mathsf{MaxMinors}(\varphi_{\mathbf a}(\mathfrak M))\rangle$, by a similar
argument there also exists a monomial $\mathfrak m_1\in \langle
\mathsf{Monomials}_{p-q}(q)\rangle$ in $g_1$. By induction construct
the sequence $g_i=g_{i-1}-\lambda_{i-1}\mathfrak m_{i-1}+\lambda_{i-1}\NF(\mathfrak
m_{i-1})$. This sequence is infinite and strictly decreasing (for the
induced partial ordering on polynomials: $h_1\prec h_2$ if
$\LM(h_1)\prec \LM(h_2)$ or if $\LM(h_1) = \LM(h_2)$ and
$h_1-\LM(h_1)\prec h_2-\LM(h_2)$). But, when $\prec$ is the grevlex
ordering, there does not exist such an infinite and strictly decreasing
sequence.

Therefore $\LM(g)\in\langle \mathsf{Monomials}_{p-q}(q)\rangle$, which concludes the proof.
\hfill$\square$

{\em Proof of Lemma \ref{lemGB3}.} In order to prove that the Zariski open set $O'\cap O''$ is nonempty, we exhibit an explicit element.
Consider the matrix ${\sf M}$ of ${\sf Mat}_\LL(p,q)$ whose
$(i,j)$-entry is $x_{i+j-2}$ if $0\leq i+j-2\leq p-q$ and $i\geq j$, else it is $0$.

$$\mathsf{M}=\begin{pmatrix}
x_0&0&\dots&0\\
x_1&x_0&\ddots&0\\
\vdots&x_1&\ddots&\vdots\\
x_{p-q}&\ddots&\ddots&\vdots\\
\vdots&\ddots&\ddots&x_{p-q-1}\\
0&0&\dots&x_{p-q}
\end{pmatrix}.$$

Remark that ${\sf MaxMinors}({\sf M}))\subset
k[x_0,\ldots,x_{p-q}]$. Since $\langle {\sf
Monomials}_{p-q}(q)\rangle$ is a zero-dimensional ideal in
$k[x_0,\ldots,x_{p-q}]$, the fact that $\LM({\sf MaxMinors}({\sf
  M}))={\sf Monomials}_{p-q}(q)$ implies the equality of the monomial ideals  $\LM(\langle
{\sf MaxMinors}({\sf M})\rangle)=\langle {\sf
  Monomials}_{p-q}(q)\rangle$.  Thus, we prove in the sequel that
$\LM({\sf MaxMinors}({\sf M}))={\sf Monomials}_{p-q}(q)$.  

A first observation is that the cardinality of ${\sf MaxMinors}({\sf
  M})$ equals the cardinality of ${\sf Monomials}_{p-q}(q)$.  Let $m$ be a maximal minor of ${\sf
  M}$. Thus $m$ is the determinant of a $q\times q$ submatrix ${\sf
  M}'$ obtained by removing $p-q$ rows from ${\sf
  M}$. Let $i_1,\ldots,i_{p-q}$ be the indices of these rows (with
$i_1<\ldots <i_{p-q}$).  Denote by $\star$ the product coefficient by
coefficient of two matrices (i.e. the \emph{Hadamard product}) and let $\mathfrak{S}_{q}$ be the set of
$q\times q$ permutation matrices.  Thus
$m=\sum_{\sigma\in\mathfrak{S}_{q}} (-1)^{\mathsf{sgn}(\sigma)}\det(\sigma\star{\sf M}')$.

Since for all $ \sigma\in {\frak S}_q$, $ \det(\sigma\star {\sf M}')$
is a monomial, there exists $ \sigma^0\in {\frak S}_q$ such that 
$\LM(m)=\pm \det(\sigma^0\star {\sf M}')$.

We prove now that $\sigma^0=\mathsf{id}$.  Suppose
by contradiction that $\sigma^0\neq \mathsf{id}$. In the sequel, we denote by 
\begin{itemize}
\item ${\sf M}'[i,j]$ the $(i,j)$-entry of ${\sf M}'$. 
\item $\mathbf{e}_i$ the $q\times 1$ unit vector whose $i$-th coordinate is $1$ and all its other coordinates are $0$; 
\item $\sigma^0_j$ is the integer $i$ such that $\sigma^0\mathbf{e}_j=\mathbf{e}_i$.
\end{itemize}
Since, by assumption, $\sigma^0\neq \mathsf{id}$, there exists $1\leq i<j\leq
q$ such that $\sigma^0_j>\sigma^0_i$. Because of the structure of ${\sf M}$, we know that for
the $grevlex$ ordering $x_0\succ \cdots\succ x_{n_x}$, $${\sf
  M}'[i,\sigma^0_j] {\sf M}'[j,\sigma^0_i]\succ{\sf M}'[i,\sigma^0_i] {\sf
  M}'[j,\sigma^0_j].$$ Let $\sigma'$ be defined by
$$\sigma'_k=\begin{cases}
\sigma^0_k\text{ if }k\neq i\text{ and }k\neq j\\
\sigma^0_j\text{ if }k=i\\
\sigma^0_i\text{ if }k=j
\end{cases}$$ Then $\det(\sigma'\star {\sf M}')\succ\det(\sigma^0\star
{\sf M}')$ and by induction $\det(\mathsf{id}\star {\sf M}')\succ\det(\sigma^0\star
{\sf M}')$. This also proves that the coefficient of $\det(\mathsf{id}\star
{\sf M}')$ in ${\sf MaxMinors}({\sf M})$ is $1$ and contradicts the
fact that $\LM(m)=\pm\det(\sigma^0\star {\sf M}')$.

This proved that $\LM(m)=|\det(\mathsf{id}\star {\sf M}')|$.  Now one can
remark that
$$\det(\mathsf{id}\star {\sf M}')=x_0^{i_1-1} x_1^{i_2-i_1-1} x_2^{i_3-i_2-1} \ldots x_{p-q}^{p-i_{p-q}-1}.$$
If $m_1,m_2$ are distinct elements in ${\sf MaxMinors}({\sf M})$,
then $\LM(m_1)\neq \LM(m_2)$.  For all $m$ in ${\sf
  MaxMinors}({\sf M})$, $\LM(m)\in {\sf Monomials}_{p-q}(q)$, and
${\sf MaxMinors}({\sf M})$ has the same cardinality as ${\sf
  Monomials}_{p-q}(q)$. Therefore, one can deduce that $\LM({\sf MaxMinors}({\sf
  M}))={\sf Monomials}_{p-q}(q)$.  \hfill$\square$

\subsection{An extension of the $F_5$ criterion for bilinear systems}
\label{sectionalgo}
We can now present the main algorithm of this section. Given a
sequence of homogeneous bilinear forms $F=(f_1, \ldots, f_m)\subset R$
generating an ideal $I\subset R$, $\prec$ the $grevlex$ monomial
ordering on $R$ with $x_0\succ \cdots x_{n_x}\succ y_0\succ \cdots
y_{n_y}$, it returns a set of pairs $(g, f_i)$ such that $g\in
I_{i-1}:f_i$ and $g\notin I_{i-1}$ (for $i>\min(n_x+1, n_y+1)$).  Following Theorem
\ref{correct} and \ref{minorsGB}, this is done by considering the
matrices $\jac_{\x}(F_i)$ (resp. $\jac_{\y}(F_i)$) for $i>n_x+1$
(resp. $i>n_y+1$) and performing a row echelon form on ${\sf
  Macaulay}_\prec({\sf MaxMinors}(\jac_{\x}(F_i)),n_x+1)$ (resp. ${\sf
  Macaulay}_\prec({\sf MaxMinors}(\jac_{\y}(F_i)),n_y+1)$).

First we describe the subroutine \textsf{{\bf Reduce}} (Algorithm \ref{algo:reduce}) which reduces a set of homogeneous polynomials of the same degree:
\begin{alg}\label{algo:reduce}
\textsf{{\bf Reduce}}
\begin{algorithmic}[1]
\Require $(S,q)$ where $S$ is a set of homogeneous polynomials of degree $q$.
\Ensure $T$ is a reduced set of homogeneous polynomials of degree $q$. 
\State $\mathsf M\leftarrow \mathsf{Macaulay}(S,q)$.
\State $\mathsf M\leftarrow \mathsf{RowEchelonForm}(\mathsf M)$.
\State Return $T$ the set of polynomials corresponding to the rows of $\mathsf M$.
\end{algorithmic}
\end{alg}

The main algorithm uses this subroutine in order to compute a row echelon form of the matrix\\${\sf
  Macaulay}_\prec({\sf MaxMinors}(\jac_{\x}(F_i)),n_x+1)$ (resp. ${\sf
  Macaulay}_\prec({\sf MaxMinors}(\jac_{\y}(F_i)),n_y+1)$):
\begin{alg}\label{algoprinc}
\textsf{{\bf BLcriterion}}
\begin{algorithmic}[1]
\Require $\begin{cases}m\text{ bilinear polynomials }f_1,\ldots,f_m\text{ such that }m\leq n_x+n_y.\\
< \text{ a monomial ordering over }k[x_0,\ldots,x_{n_x},y_0,\ldots,y_{n_y}]\end{cases}$
\Ensure $V$ a set of pairs $(h,f_i)$ such that $h\in I_{i-1}:f_i$.
\State $V\leftarrow \emptyset$
\For{$i$ from $2$ to $m$}
\If{$i>n_y+1$}
\State $T\leftarrow \mathsf{{\bf Reduce}}({\sf MaxMinors}(\jac_{\y}(F_{i-1})),n_y+1)$.
\For{$h$ in $T$}
\State $V\leftarrow V\cup \{(h,f_i)\}$
\EndFor
\EndIf
\If{$i>n_x+1$}
\State $T'\leftarrow \mathsf{{\bf Reduce}}({\sf MaxMinors}(\jac_{\x}(F_{i-1})),n_x+1)$.
\For{$h$ in $T'$}
\State $V\leftarrow V\cup \{(h,f_i)\}$
\EndFor
\EndIf
\EndFor
\State Return $V$
\end{algorithmic}
\end{alg}

The following Proposition explains how the output of Algorithm
\ref{algoprinc} is related to reductions to zero occurring during the Matrix
$F_5$ Algorithm.
\begin{prop}[Extended $F_5$ criterion for bilinear systems]
Let $f_1,\ldots,f_m$ be bilinear polynomials and $\prec$ be a monomial ordering.
Let $(t,f_i)$ be the signature of a row during the Matrix $F_5$ Algorithm and let $V$ be the output of Algorithm \textsc{BLcriterion}. Then if there exists $(h,f_i)$ in $V$ such that $LM(h)=t$, then the row with signature $(t,f_i)$ will be reduced to zero.
\end{prop}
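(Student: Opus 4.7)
The proof plan is a direct reduction to Theorems \ref{correct} and \ref{thm:F5}. First I would verify that any $h$ appearing in $V$ is a $k$-linear combination of maximal minors of $\jac_\x(F_{i-1})$ or $\jac_\y(F_{i-1})$. Indeed, by inspection of Algorithm \ref{algoprinc}, such an $h$ is inserted into $V$ only at lines 6 and 12, where it is a row of $\textsf{Reduce}(\mathsf{MaxMinors}(\jac_\y(F_{i-1})),n_y+1)$ or $\textsf{Reduce}(\mathsf{MaxMinors}(\jac_\x(F_{i-1})),n_x+1)$. Since \textsf{Reduce} is nothing but a row echelon computation on the Macaulay matrix of its input, its output rows are $k$-linear combinations of these maximal minors.

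Second, Theorem \ref{correct} then applies verbatim and gives $h \in I_{i-1} : f_i$, so $h\, f_i \in I_{i-1}$. Picking homogeneous polynomials $g_1, \ldots, g_{i-1}$ with $h\, f_i + \sum_{j=1}^{i-1} g_j f_j = 0$, the tuple $(g_1, \ldots, g_{i-1}, h)$ is a syzygy of $(f_1, \ldots, f_i)$ whose last component has leading monomial $\LM(h) = t$.

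Third, I would invoke Theorem \ref{thm:F5}(2). If $t \notin \LM(I_{i-1})$, the syzygy constructed above places us in case (2b), and the stated equivalence forces the row of signature $(t, f_i)$ to vanish in $\widetilde{\mathcal{M}_{d, m}}$. If instead $t \in \LM(I_{i-1})$, then by the first part of Theorem \ref{thm:F5} the classical $F_5$ criterion already detects $(t, f_i)$ and the row is never added to $\mathcal{M}_{d, i}$ in the first place, which is consistent with being ``reduced to zero.'' I do not foresee any real obstacle: the content is a packaging of Theorems \ref{correct} and \ref{thm:F5}, and the only subtle point is the harmless case split on whether the classical $F_5$ criterion has already caught the row before the extended criterion weighs in.
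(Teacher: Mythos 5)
Your proposal is correct and follows essentially the same route as the paper: the heart of both arguments is that $h$, being a row of the echelonized Macaulay matrix of the maximal minors, is a $k$-linear combination of those minors, so Theorem \ref{correct} yields $h\,f_i\in I_{i-1}$ with $\LM(h)=t$. The only cosmetic difference is that the paper concludes directly by writing $t\,f_i=(t-h)f_i+\sum_j g_j f_j$ and observing that the row $(t,f_i)$ is a combination of preceding rows of the Macaulay matrix, whereas you package the same relation as a syzygy and invoke Theorem \ref{thm:F5}(2) together with a harmless case split on whether $t\in\LM(I_{i-1})$; both are valid.
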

\noindent{\em Proof.}
According to Theorem \ref{correct}, $h f_i\in I_{i-1}$. Therefore
$$t f_i=(h-t) f_i +\sum_{j=1}^{i-1} g_j f_j.$$
This implies that the  row with signature $(t,f_i)$ is a linear combination of preceding rows in the matrix $\mathsf{Macaulay}(F_i, \deg(t f_i))$. Hence this row will be reduced to zero.
\hfill $\square$

\bigskip

Now we can merge this extended criterion with the Matrix $F_5$
Algorithm. To do so, we denote by $V$ the output of
\textsc{BLcriterion} ($V$ has to be computed at the beginning of
Matrix $F_5$ Algorithm), and we replace in Algorithm
\ref{matrixF5} the $F_5$\textsc{criterion} by the following
\textsc{Bilin}$F_5$\textsc{criterion}:
\begin{alg}
\textsc{Bilin}$F_5$\textsc{criterion} - returns a boolean
\begin{algorithmic}[1]

\Require $\begin{cases}
(t,f_i) \text{ the signature of a row}\\
\text{A matrix }\mathcal{M}\text{ in row echelon form}
\end{cases}$

\State Return $\begin{cases}t \text{ is the leading monomial of a row of }\mathcal{M}\textbf{ or }\\\exists (h,f_i)\in V
\text{ such that }\LM(h)=t\end{cases}$
\end{algorithmic}
\end{alg}

\section{$F_5$ without reduction to zero for generic bilinear systems}
\label{sec:gener}
\subsection{Main results}
The goal of this part of the paper is to show that Algorithm
\ref{algoprinc} finds all reductions to zero for generic bilinear
systems. In order to describe the structure of ideals generated by
generic bilinear systems, we define a notion of \emph{bi-regularity}
(Definition \ref{defbireg}). For bi-regular systems, we give a complete
description of the syzygy module (Proposition \ref{propxy} and Corollary \ref{corointro}). Finally, we show that, for
such systems, Algorithm \ref{algoprinc} finds all reductions to
zero and that generic bilinear systems are bi-regular
(Theorem \ref{genericity}), assuming a conjecture about the kernel of
generic matrices whose entries are linear forms (Conjecture
\ref{conjec}).

\subsection{Kernel of matrices whose entries are linear forms}

Consider an monomial ordering $\prec$ such that
its restriction to $k[x_0,\ldots,x_{n_x}]$
(resp. $k[y_0,\ldots,y_{n_y}]$) is the $grevlex$ ordering (for
instance the usual \emph{grevlex} ordering with
$x_0\succ x_1\succ \ldots\succ y_0\succ\ldots\succ y_{n_y}$).

Let $\ell,c,n_x$ be integers such that $c<\ell\leq n_x+c-1$. Let
${\cal M}$ be the set of matrices $\ell\times c$ where coefficients are
linear forms of $k[x_0,\ldots,x_{n_x}]$. Let ${\cal T}$ be the set of
$\ell\times (\ell-c-1)$ matrices $\mathsf T$ such that:
\begin{itemize}
\item each column of $\mathsf T$ has exactly one $1$ and the rest of the coefficients are $0$.
\item each row of $\mathsf T$ has at most one $1$ and all the other coefficients are $0$.
\item $(\mathsf T[i_1,j_1]=\mathsf T[i_2,j_2]=1$ and $i_1<i_2)\Rightarrow j_1<j_2$
\end{itemize}
If $\mathsf{T}\in {\cal T}$ and $\mathsf M\in{\cal M}$, we denote by $\mathsf M_{\mathsf T}$ the $\ell\times(\ell-1)$ matrix obtained by adding to $\mathsf M$ the columns of $\mathsf T$.
According to the proof of Lemma \ref{lemmemi}, some elements of the left kernel of a matrix $\mathsf M$ can be expressed as vectors of maximal minors:
$$\forall \mathsf T\in\mathcal{T},
\begin{pmatrix}
\mathsf{minor}(\mathsf M_{\mathsf T},1)\\
-\mathsf{minor}(\mathsf M_{\mathsf T},2)\\
\vdots\\
(-1)^{m+1} \mathsf{minor}(\mathsf M_{\mathsf T},m)
\end{pmatrix}\in \Ker_L(\mathsf M).$$

Actually, we observed experimentally that kernels of 
random matrices $\mathsf M\in\mathcal{M}$ are generated by those vectors of
minors. This leads to the formulation of the following conjecture:
\begin{conj}\label{conjec}
The set of matrices $\mathsf M\in \mathcal{M}$ such that
$$\Ker_L(\mathsf M)=\left\langle\left\{\begin{pmatrix}
\mathsf{minor}(\mathsf M_{\mathsf T},1)\\
-\mathsf{minor}(\mathsf M_{\mathsf T},2)\\
\vdots\\
(-1)^{m+1} \mathsf{minor}(\mathsf M_{\mathsf T},m)
\end{pmatrix}\right\}_{\mathsf T\in\mathcal{T}}\right\rangle$$
contains a nonempty Zariski open subset of $\mathcal{M}$.
\end{conj}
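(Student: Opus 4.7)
The plan is to recognize the generating family of the conjecture as the image of the first differential of the Buchsbaum--Rim complex attached to $\mathsf{M}$. For an $\ell\times c$ matrix $\mathsf{M}$ with $\ell>c$ and entries in $R=k[x_0,\ldots,x_{n_x}]$, this complex takes the form
$$0\to F_{\ell-c}\to\cdots\to F_1\xrightarrow{\partial} R^\ell\xrightarrow{\mathsf{M}^t} R^c,$$
with $F_1$ of rank $\binom{\ell}{c+1}$ and a distinguished basis $(e_J)_J$ indexed by the $(c+1)$-subsets $J=\{j_1<\cdots<j_{c+1}\}$ of $\{1,\ldots,\ell\}$; the differential $\partial$ sends $e_J$ to the row vector of $R^\ell$ whose $j_k$-th entry is $(-1)^{k+1}$ times the $c\times c$ minor of $\mathsf{M}$ on rows $J\setminus\{j_k\}$, and whose remaining entries vanish. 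Associating to each $\mathsf{T}\in\mathcal{T}$, with $1$'s located at rows $r_1<\cdots<r_{\ell-c-1}$, the complementary set $J=\{1,\ldots,\ell\}\setminus\{r_1,\ldots,r_{\ell-c-1}\}$, and expanding $\minor(\mathsf{M}_{\mathsf{T}},i)$ along the columns coming from $\mathsf{T}$, one checks that the vector of the conjecture is exactly $\partial(e_J)$, up to a global sign. The number of generators and the rank of $F_1$ both equal $\binom{\ell}{c+1}$.

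With this identification, the conjecture is equivalent to the exactness of the Buchsbaum--Rim complex at the position $R^\ell$, i.e.\ $\mathrm{image}(\partial)=\Ker_L(\mathsf{M})$. By the Buchsbaum--Rim acyclicity criterion, this exactness is guaranteed as soon as the ideal $I_c(\mathsf{M})$ of $c\times c$ minors has grade at least $\ell-c+1$, the Eagon--Northcott bound. For matrices with algebraically independent entries this is classical. In our setting the entries are linear forms in only $n_x+1$ variables, which imposes the dimensional constraint $\ell-c+1\le n_x+1$, and the hypothesis $\ell\le n_x+c-1$ in the statement of the conjecture leaves precisely enough room.

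To conclude, I would establish the grade condition on a nonempty Zariski open subset of $\mathcal{M}$ by exhibiting a single witness matrix $\mathsf{M}_0\in\mathcal{M}$ with $\mathrm{codim}(I_c(\mathsf{M}_0))=\ell-c+1$, in the spirit of the banded construction used in the proof of Lemma \ref{lemGB3} (which already realises the top-codimensional behaviour of the minor ideal), and then invoking the fact that having codimension at least $\ell-c+1$ is itself a Zariski open condition on $\mathcal{M}$. The main obstacle is precisely this last step: showing that some matrix of \emph{linear forms} (as opposed to independent entries) attains the full Eagon--Northcott codimension inside the restricted ambient $k[x_0,\ldots,x_{n_x}]$. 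The extra rigidity imposed by the linearity of the entries obstructs a direct specialisation from the generic-entries case and is, presumably, why the statement is presented as a conjecture rather than a theorem.
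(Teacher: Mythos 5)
The paper offers no proof of this statement: it is presented purely as Conjecture~\ref{conjec}, motivated by the experimental observation in the preceding paragraph, so there is nothing of the authors' to compare your argument against. What follows is therefore an assessment of your proposal on its own terms.

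Your route is the natural one and, I believe, essentially sound. The identification of the family indexed by $\mathcal{T}$ with the image of the differential $\bigwedge^{c+1}R^{\ell}\to R^{\ell}$ of the Buchsbaum--Rim complex of $\mathsf{M}^{t}\colon R^{\ell}\to R^{c}$ is correct: each $\mathsf T$ selects $\ell-c-1$ rows, expansion of $\minor(\mathsf M_{\mathsf T},i)$ along the unit columns shows the resulting vector is supported on the complementary set of $c+1$ rows with entries the signed $c\times c$ minors of $\mathsf M$ there, and the two indexing sets have the same cardinality $\binom{\ell}{c+1}$. Under the grade condition $\mathrm{grade}\,I_{c}(\mathsf M)\geq \ell-c+1$ the whole complex is acyclic, hence exact at $R^{\ell}$, which is exactly the generation statement of the conjecture. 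The step you single out as the main obstacle --- producing one matrix of \emph{linear forms} whose maximal-minor ideal attains the Eagon--Northcott codimension --- is in fact already available in the paper: the banded matrix of Lemma~\ref{lemGB3} (with $p=\ell$, $q=c$) satisfies $\LM(\langle\mathsf{MaxMinors}(\mathsf M)\rangle)=\langle\mathsf{Monomials}_{p-q}(q)\rangle$, a $\langle x_{0},\ldots,x_{p-q}\rangle$-primary ideal of codimension $p-q+1=\ell-c+1$; since codimension is preserved in passing to the initial ideal and equals grade in a polynomial ring, this witness attains the required grade, and the standing hypothesis $\ell\leq n_{x}+c-1$ gives $\ell-c+1\leq n_{x}<\dim R$, so the bound is attainable. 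Upper semicontinuity of $\dim V(I_{c}(\mathsf M))$ as $\mathsf M$ varies in $\mathcal M$ then makes the grade condition a nonempty Zariski-open condition. So your plan, once the minor-expansion bookkeeping is written out and the Buchsbaum--Rim acyclicity theorem is cited precisely, would upgrade the conjecture to a theorem; as submitted it is a correct strategy with its final --- and, contrary to your pessimism, entirely feasible --- step left open.
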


\subsection{Structure of generic bilinear systems}

With the following definition, we try to give an analog of regular
sequences for bilinear systems. This definition is closely
related to the generic behaviour of Algorithm \ref{algoprinc}.

\begin{rem}
  In the following, $\mathsf{Monomials}^\x_n(d)$
  (resp. $\mathsf{Monomials}^\y_n(d)$) denotes the set of monomials of
  degree $d$ in $k[x_0,\ldots, x_n]$ (resp. $k[y_0,\ldots, y_n]$). If
  $n<0$, we use the convention
  $\mathsf{Monomials}^\x_n(d)=\mathsf{Monomials}^\y_n(d)=\emptyset$.
\end{rem}

\begin{defn}\label{defbireg}
Let $m\leq n_x+n_y$ and $f_1,\ldots,f_m$ be bilinear polynomials of $R$. We
say that the polynomial sequence $(f_1,\ldots,f_m)$ is a \emph{bi-regular sequence} if $m=1$ or if $(f_1,\ldots, f_{m-1})$ is a bi-regular sequence and
\begin{eqnarray*}
\LM(I_{m-1}:f_m)&=
\langle\mathsf{Monomials}^\x_{m-n_y-2}(n_y+1)\rangle\\
&+ \langle\mathsf{Monomials}^\y_{m-n_x-2}(n_x+1)\rangle\\
&+ \LM(I_{m-1})
\end{eqnarray*}
\end{defn}

In the following, we use the notations:
\begin{itemize}
\item ${\cal BL}(n_x, n_y)$ the $k$-vector space of bilinear
  polynomials in $K[x_0, \ldots, x_{n_x}, y_0, \ldots, y_{n_y}]$;
\item $X\subset k[x_0, \ldots, x_{n_x}, y_0, \ldots, y_{n_y}]$
  (resp. $Y$) is the ideal $\langle x_0, \ldots, x_{n_x}\rangle$
  (resp. $\langle y_0, \ldots, y_{n_y}\rangle$);
\item An ideal is called \emph{bihomogeneous} if there exists a set of bihomogeneous generators. In particular, ideals spanned by bilinear polynomials are bihomogeneous.
\item $J_i$ denotes the saturated ideal $I_i : (X \cap Y)^\infty$;
\item Given a polynomial sequence $(f_1,\ldots, f_m)$, we denote by
  $Syz_{triv}$ the module of trivial syzygies, i.e. the set of all 
  syzygies $(s_1,\ldots, s_m)$ such that $$\forall i, s_i\in \langle
  f_1,\ldots, f_{i-1},f_{i+1},\ldots, f_m\rangle;$$
\item A primary ideal $P\subset R$ is called \emph{admissible} if
  $\langle x_0,\ldots, x_{n_x}\rangle \not\subset\sqrt{P}$ and
  $\langle y_0,\ldots, y_{n_y}\rangle \not\subset\sqrt{P}$;
\item Let $E$ be a $k$-vector space such that $\dim(E)<\infty$. We say that a property $\mathcal P$ is \emph{generic} if it is satisfied on a nonempty open subset of $E$ (for the Zariski topology), i.e. $\exists h\in k[\mathfrak a_1,\ldots,\mathfrak a_{\dim(E)}], h\neq 0$, such that 
$$\mathcal P\text{ is does not hold on }(a_1,\ldots,a_{\dim(E)})\Rightarrow h(a_1,\ldots,a_{\dim(E)})=0.$$
\end{itemize}
Without loss of generality, we suppose in the sequel that $n_x\leq
n_y$.

\begin{lem}\label{lemNonAdm}
Let $I_m$ be an ideal spanned by $m$ generic bilinear equations $f_1,\ldots, f_m$ and $I_m=\cap_{P\in\mathcal{P}}P$ be a minimal primary decomposition. Let $P_0\in\mathcal{P}$ be one of its primary non-admissible components. If $m<n_x+1$ (resp. $m< n_y+1$), then $X \not\subset \sqrt P_0$ (resp. $Y \not\subset \sqrt P_0$).
\end{lem}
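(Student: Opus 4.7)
My plan is to prove by induction on $m$ the following strengthened joint statement (for $0\leq m\leq n_x$, using the standing hypothesis $n_x\leq n_y$): for generic bilinear $(f_1,\ldots,f_m)$ in $R$, the sequence is regular in $R$, the quotient $R/I_m$ is Cohen-Macaulay, and no associated prime of $I_m$ contains $X$ or $Y$. The lemma is an immediate consequence of the third clause, since $\sqrt{P_0}$ is an associated prime of $I_m$ for any primary component $P_0\in\mathcal P$.

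The base case $m=0$ is trivial ($I_0=(0)$ is prime and contains neither $X$ nor $Y$). For the inductive step from $m-1$ to $m$ (with $m\leq n_x$), fix a choice of $(f_1,\ldots,f_{m-1})$ in the Zariski-open set provided by the inductive hypothesis. For each associated prime $\mathfrak p$ of $I_{m-1}$ we have $X\not\subset\mathfrak p$ and $Y\not\subset\mathfrak p$, so there exist $x_i,y_j\not\in\mathfrak p$; primality of $\mathfrak p$ then gives $x_iy_j\not\in\mathfrak p$, showing that the bilinear forms lying in $\mathfrak p$ form a proper $k$-subspace of the $(n_x+1)(n_y+1)$-dimensional space of all bilinear forms. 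Since $I_{m-1}$ has only finitely many associated primes and a finite union of proper subspaces of a vector space over an infinite field is itself a proper subset, a generic bilinear $f_m$ avoids every associated prime of $I_{m-1}$; equivalently, $f_m$ is a non-zero-divisor on $R/I_{m-1}$. Consequently, $(f_1,\ldots,f_m)$ is a regular sequence in the Cohen-Macaulay ring $R$, and $R/I_m$ is Cohen-Macaulay.

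To close the induction we verify the third clause for $I_m$. Since $R/I_m$ is Cohen-Macaulay it is unmixed: every associated prime of $I_m$ is minimal and has codimension exactly $m$. If some associated prime $\mathfrak p$ satisfied $X\subset\mathfrak p$, then $V(\mathfrak p)\subset V(X)$ would force $\mathrm{codim}(\mathfrak p)\geq\mathrm{codim}(V(X))=n_x+1>m$, contradicting $\mathrm{codim}(\mathfrak p)=m$. The symmetric bound $\mathrm{codim}(V(Y))=n_y+1\geq n_x+1>m$ rules out $Y\subset\mathfrak p$. This completes the induction.

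The main obstacle is carrying the third clause as part of the inductive statement: its role at each step is precisely to ensure that no associated prime of the previous ideal absorbs the whole space of bilinear forms, which is what makes a generic bilinear $f_m$ a non-zero-divisor on $R/I_{m-1}$. Once this structural fact is in hand, the regular-sequence and Cohen-Macaulayness machinery, combined with the elementary codimension inequality, deliver the result essentially for free. The symmetric implication ($m<n_y+1\Rightarrow Y\not\subset\sqrt{P_0}$) follows by the identical induction with the roles of $\x$ and $\y$ swapped.
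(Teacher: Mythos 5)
Your induction is sound in the range it actually covers, namely $m\le n_x=\min(n_x,n_y)$, and there it takes a genuinely different route from the paper: the paper passes to $k'=k(y_0,\dots,y_{n_y})$, notes that the images of the $f_i$ are $m\le n_x$ generic linear forms in $x_0,\dots,x_{n_x}$ and hence a regular sequence over $k'$, and then derives a contradiction from $X\subset\sqrt{P_0}$ by prime avoidance among the remaining primary components; you instead prove regularity of the sequence in $R$ itself, deduce that $R/I_m$ is Cohen--Macaulay and unmixed of codimension $m$, and exclude $X\subset\mathfrak p$ and $Y\subset\mathfrak p$ by the codimension count $n_x+1>m$. In that range both arguments work, and yours has the merit of exposing the Cohen--Macaulay structure.

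The gap is your last sentence. Under the standing convention $n_x\le n_y$, the lemma also asserts $Y\not\subset\sqrt{P_0}$ for $n_x<m\le n_y$, and there the ``identical induction with the roles of $\x$ and $\y$ swapped'' does not run: swapping the blocks does not change $\min(n_x,n_y)$, and your strengthened inductive statement is in fact false once $m\ge n_x+1$, since by Lemma \ref{lem:important} the ideal $X$ is then an associated prime of $I_m$. Concretely, as soon as $X\in\mathrm{Ass}(R/I_{m-1})$ (i.e.\ $m-1\ge n_x+1$), every bilinear form lies in $X$, so $f_m$ is automatically a zero-divisor on $R/I_{m-1}$; the sequence is not regular, $R/I_m$ is no longer unmixed (it carries the component $X$ of codimension $n_x+1<m$), and both your non-zero-divisor step and your final codimension argument collapse. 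To reach this regime you need an argument that decouples the two variable blocks: for instance the paper's localization, here at $k(x_0,\dots,x_{n_x})$, under which the $f_i$ become linear forms in the $y_j$ alone and regularity only requires $m\le n_y$; or a version of your induction carried out for the saturated ideals $J_i=I_i:(X\cap Y)^\infty$ in the spirit of Theorem \ref{generbireg}, which removes the offending component $X$ before testing whether $f_m$ divides zero.
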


\noindent{\em Proof.}
Suppose that $m<n_x+1$. Consider the field $k'=k(y_0,\ldots,y_{n_y})$ and the canonical inclusion
$$\psi : R \rightarrow k'[x_0,\ldots,x_{n_x}].$$
    $\psi(I_m)$ is an ideal of $k'[x_0,\ldots,x_{n_x}]$ spanned by
    $m$ polynomials of $k'[x_0,\ldots,x_{n_x}]$. Generically, the system 
    $(\psi(f_1),\ldots,\psi(f_m))$ is a regular sequence of
    $k'[x_0,\ldots,x_{n_x}]$.  Thus there exists an polynomial
    $f\in X$ (homogeneous in the $x_i$s) such that $\psi(f)$ is not a
    divisor of $0$ in $k'[x_0,\ldots,x_{n_x}]/\psi(I_m)$. This
    means that $\psi(I_m) : \psi(f)=\psi(I_m)$. Suppose the
    assertion of Lemma \ref{lemNonAdm} is false. Then $X\subset \sqrt{P_0}$ and hence, $f\in \sqrt{P_0}$. Therefore there exists
    $g\in k[y_0,\ldots,y_{n_y}]$ such that, in $R$, $g f \in
    \sqrt{I_m}$ (take $g$ in
    $(\cap_{P\in\mathcal{P}\setminus\{P_0\}}\sqrt{P})\setminus\{\sqrt{P_0}\}$
    which is nonempty). Thus
    $\psi(f)\in\sqrt{\psi(I_m)}$ (since $\psi(g)$ is invertible in
    $k'$), which is impossible since $\psi(I_m) : \psi(f)=\psi(I_m)$.
\hfill $\square$

\begin{lem}\label{lem:important}
    \begin{itemize}
    \item If $m\leq n_x$ there exists a nonempty Zariski-open set
      ${\cal O}\subset {\cal BL}_K(n_x, n_y)^m$ such that $(f_1,
      \ldots, f_m)\subset {\cal O}$ implies that $I_m$ has co-dimension
      $m$ and all the components of a minimal primary decomposition of
      $I_m$ are admissible;
    \item if $n_x+1\leq m$, then there exists a nonempty
      Zariski-open set ${\cal O}\subset {\cal BL}_K(n_x, n_y)^m$ such
      that $(f_1, \ldots, f_m)\subset {\cal O}$ implies that $X$ is a
      prime associated to $\sqrt{I_m}$;
    \item if $n_y+1\leq m$, then there exists a nonempty Zariski-open
      set ${\cal O}\subset {\cal BL}_K(n_x, n_y)^m$ such that $(f_1,
      \ldots, f_m)\subset {\cal O}$ implies that $Y$ is a prime
      associated to $\sqrt{I_m}$.
    \end{itemize}
  \end{lem}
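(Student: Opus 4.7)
The three bullets are independent claims and I plan to handle them separately. The first bullet splits into a codimension assertion and an admissibility assertion; the codimension will come from a specific example plus upper semi-continuity, while the admissibility is an immediate corollary of Lemma~\ref{lemNonAdm}. The second and third bullets are symmetric under swapping the two blocks of variables, and I will prove them by analysing the fibres of the projection to the $\y$-space (resp.\ the $\x$-space) through the jacobian matrix.

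For the codimension part of the first bullet: since $m\leq n_x\leq n_y$, the sequence $f_i = x_{i-1}y_{i-1}$ for $1\leq i\leq m$ is well-defined in $\mathcal{BL}_K(n_x,n_y)$, and a short induction on the primary decomposition of $\langle x_0y_0,\dots,x_{j-1}y_{j-1}\rangle$ (whose associated primes are of the form $\langle z_0,\dots,z_{j-1}\rangle$ with $z_i\in\{x_i,y_i\}$) shows that this is a regular sequence, hence $\mathrm{codim}\,I_m = m$ at this sample. Upper semi-continuity of fibre dimension on $\mathcal{BL}_K(n_x,n_y)^m$, combined with the trivial bound $\mathrm{codim}\,I_m \leq m$, then produces the desired non-empty Zariski-open locus. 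For the admissibility part, the hypothesis $m\leq n_x\leq n_y$ gives $m<n_x+1$ and $m<n_y+1$, so Lemma~\ref{lemNonAdm} applied twice shows that a non-admissible primary component $P_0$ of a generic $I_m$ would satisfy both $X\not\subset\sqrt{P_0}$ and $Y\not\subset\sqrt{P_0}$, contradicting its non-admissibility; hence generically no such component exists.

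For the second bullet, encode the system as a matrix product $(f_1,\dots,f_m)^t = A(\y)\,\X$, where $A(\y)$ is the $m\times(n_x+1)$ matrix whose $(i,j)$-entry is the linear form in $\y$ equal to the coefficient of $x_j$ in $f_i$. Fixing any $n_x+1$ rows and letting $\Delta(\y)$ be the corresponding maximal minor, the sample system $f_i = x_{i-1}y_0$ for $i\leq n_x+1$ and $f_i = 0$ otherwise yields $\Delta = y_0^{n_x+1}$, so the condition $\Delta\not\equiv 0$ cuts out a non-empty Zariski-open subset $\mathcal{O}\subset\mathcal{BL}_K(n_x,n_y)^m$. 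On the open set $U=\{\Delta(\y)\neq 0\}$ of affine space, the linear system $A(\y)\X=0$ forces $\X = 0$, giving $V(I_m)\cap U = V(X)\cap U$, which is a dense open of the irreducible variety $V(X)$; its Zariski closure in $V(I_m)$ is therefore a union of irreducible components meeting $U$, and by irreducibility it is just $V(X)$, proving that $V(X)$ is itself an irreducible component of $V(I_m)$, i.e.\ $X$ is a minimal (hence associated) prime of $\sqrt{I_m}$. The third bullet is proved by the same argument with the roles of the two variable blocks swapped. The main subtle point is this last step: one must exclude $V(X)$ being absorbed into a larger component of $V(I_m)$, and this is exactly what the open set $U$ provides, since any other component of $V(I_m)$ has $\y$-image inside the proper subvariety $\{\Delta=0\}$ whereas $V(X)$ projects dominantly onto $\mathbb{A}^{n_y+1}$.
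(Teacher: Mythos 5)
Your proof is correct, but it takes a genuinely different route from the paper's for most of the lemma. For the first bullet, the paper invokes Lemma~\ref{lemNonAdm} to conclude $J_m=I_m$ and then Theorem~\ref{generbireg} (generic non-zero-divisor property modulo the saturation) to obtain a regular sequence, whence both the codimension and the admissibility; you instead get the codimension from the explicit monomial regular sequence $x_{i-1}y_{i-1}$ together with upper semi-continuity of fibre dimension (which, to be airtight, should be run over the biprojective variety in $\mathbb{P}^{n_x}\times\mathbb{P}^{n_y}$ so that the projection is proper), and you use Lemma~\ref{lemNonAdm} directly to kill non-admissible components — note you are implicitly using the paper's standing convention $n_x\le n_y$ to get $m<n_y+1$ as well, which is fine. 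For the second and third bullets the divergence is more substantial: the paper routes through Proposition~\ref{equidim}, using that $J_m=(I_m:Y^\infty):X^\infty$ is equidimensional of co-dimension $m$ to deduce $Var(I_m:Y^\infty)=Var(J_m)\cup V_x$ and hence that $X$ is associated, whereas you argue directly that $V(X)$ is an irreducible component of $V(I_m)$ by exhibiting a maximal minor $\Delta(\y)$ of $\jac_\x(F)$ that is generically nonzero, so that $V(I_m)$ coincides with $V(X)$ on the dense open set $\{\Delta\neq 0\}$; your closure argument correctly rules out absorption into a larger component. Your version is more elementary and self-contained — it avoids the saturation machinery of the appendix entirely and in fact proves slightly more, namely that $X$ is an isolated (minimal) prime — while the paper's version reuses Theorem~\ref{generbireg} and Proposition~\ref{equidim}, which it needs elsewhere anyway.
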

  \noindent{\em Proof.}
     \begin{itemize}
    \item If $m\leq n_x$, then by Lemma \ref{lemNonAdm},
      $J_m=I_m$. Then according to Theorem \ref{generbireg}, there
      exists a nonempty Zariski-open set ${\cal O}\subset {\cal
        BL}_K(n_x, n_y)^m$ such that $(f_1, \ldots, f_m)\subset {\cal
        O}$ implies that $(f_1,\ldots, f_m)$ is a regular
      sequence. Therefore, $I_m$ has co-dimension $m$ and all the
      components of a minimal primary decomposition of $I_m$ are
      admissible.
    \item If $n_x+1\leq m$, then according to Proposition
      \ref{equidim}, $J_m=(I_m:Y^\infty):X^\infty$ is equidimensional
      of co-dimension $m$. Let $V_x$ be the set
      $\{(0,\ldots,0,a_0,\ldots,a_{n_y}) | a_i\in k\}$. Since
      $V_x\subset Var(I_m:Y^\infty)$ and $\mathsf{codim}(V_x)=n_x+1$, it can be
      deduced that $V_x \not\subset Var(J_m)$ and
      $Var(I_m:Y^\infty)=Var(J_m)\cup V_x$. This means that
      $\sqrt{I_m:Y^\infty}=\sqrt{J_m}\cap X$ and $\sqrt{J_m}\not
      \subset X$. Thus $X$ is a prime associated to
      $\sqrt{I_m:Y^\infty}$. Since $Y$ is not a subset of $X$, $X$ is
      also a prime ideal associated to $\sqrt{I_m}$.
    \item Similar proof in the case $n_y+1\leq m$.
  \end{itemize}
  \hfill $\square$
  \begin{lem}\label{lem:localring}
    Suppose that the local ring $R_X/I_X$ (resp. $R_Y/I_Y$) is regular
    and that $X$ (resp. $Y$) is a prime ideal associated to $\sqrt{I}$
    and let $Q$ be an isolated primary component of
    a minimal primary decomposition of $I$ containing $X$
    (resp. $Y$). Then $Q=X$ (resp. $Q=Y$).
  \end{lem}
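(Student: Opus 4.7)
The plan is to localize at the prime $X$ and exploit the fact that an isolated primary component is completely determined by the localization at its associated prime. I will do the proof for $X$; the case of $Y$ is symmetric.

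First, since $\sqrt{I}$ is a radical ideal, its associated primes are exactly its minimal primes, so the hypothesis that $X$ is associated to $\sqrt{I}$ means that $X$ is a minimal prime over $I$. Combined with the hypothesis that $Q$ is an isolated primary component containing $X$, this forces $Q$ to be the $X$-primary component of the decomposition and, by the standard identification of isolated components with contractions from localizations, gives $Q = I R_X \cap R$, where $R_X$ denotes the localization of $R$ at the prime $X$ and $I_X := I R_X$.

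Next, I would use the regularity hypothesis. Since $R_X/I_X$ is a regular local ring, it is in particular an integral domain, hence $I_X$ is a prime ideal of $R_X$. Now, the primes of $R_X$ containing $I_X$ correspond bijectively to primes of $R$ lying between $I$ and $X$, and since $X$ is a minimal prime over $I$, the only such prime is $X$ itself. Therefore $\sqrt{I_X} = X R_X$. But $I_X$ is already prime, so $I_X = \sqrt{I_X} = X R_X$.

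Finally I contract back: $Q = I_X \cap R = X R_X \cap R = X$, where the last equality is standard for the contraction of a prime under localization. The main conceptual point — not really a heavy obstacle, but the step that needs most care — is the passage from "$X$ associated to $\sqrt{I}$" to "$X$ is a minimal prime over $I$"; once that is nailed down, the rest is a clean localize-contract argument in which the regularity hypothesis is used purely to ensure that $R_X/I_X$ is a domain.
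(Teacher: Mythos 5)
Your proof is correct and follows essentially the same route as the paper's: both localize at $X$ and use the fact that the regular local ring $R_X/I_X$ is a domain, so that $I_X$ is prime. The only cosmetic difference is in the finish --- you compute $I_X = XR_X$ directly from the minimality of $X$ over $I$ and contract, whereas the paper identifies $I_X$ with the localization of the isolated component $Q$ and invokes the fact that the contraction of a prime is prime --- but the key ideas and the use of the hypotheses are identical.
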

 
  \noindent{\em Proof.}
    By assumption, $X$ is a prime ideal associated to
    $\sqrt{I}$. Then, there exists an isolated primary component of a
    minimal primary decomposition of $I$ which contains a power of $X$ and does
    not meet $R\setminus X$. This proves that $I_X$ does not
    contain a unit in $R_X$.

    By assumption $R_X/I_X$ is regular and local, then $R_X/I_X$ is an
    integral ring (see e.g. \cite[Corollary 10.14]{eisenbud}) which
    implies that $I_X$ is prime and does not contain a unit in $R_X$.

    Let $I=Q_1\cap\dots\cap Q_s$ be a minimal primary decomposition of
    $I$. In the sequel, $Q_{i_X}$ denotes the localization of $Q_i$ by
    $X$. Suppose first that there exists $1\leq i \leq s$ such that
    $I_X=Q_{i_X}$ with $Q_{i}$ non-admissible which does not meet the
    multiplicatively closed part $R\setminus X$ .  Then $Q_{i_X}$ is
    obviously prime which implies that $Q_i$ itself is prime
    \cite[Proposition 3.11 (iv)]{AtMc}. Our claim follows.

    \medskip It remains to prove that $I_X=Q_{i_X}$ for some $1\leq i
    \leq s$. Suppose that the $Q_i$'s are numbered such that $Q_j$
    meets the multiplicatively closed set $R\setminus X$ for $r+1\leq
    j \leq s$ but not $Q_1, \ldots, Q_r$. $I_X=Q_{1_X}\cap \dots\cap
    Q_{r_X}$ and it is a minimal primary decomposition
    \cite[Proposition 4.9]{AtMc}. Hence, since $I_X$ is prime, $r=1$
    and $Q_1$ is the isolated minimal primary component containing
    $X$.

    Proving that $Q=Y$ in the case where $R_Y/I_Y$ is regular and that
    $Y$ is a prime associated to $\sqrt{I}$ is done in the same way.
  \hfill $\square$

\begin{prop}\label{prop:nonadm}
  Let $k$ be a field of characteristic $0$. There exists a nonempty
  Zariski-open set ${\cal O}\subset {\cal BL}(n_x, n_y)^m$ such that
  for all $(f_1, \ldots, f_m)\subset {\cal O}$ the non-admissible
  components of a minimal primary decomposition of $\langle f_1,
  \ldots, f_m\rangle$ are either $X$ or $Y$.
\end{prop}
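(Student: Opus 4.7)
The plan is to intersect the Zariski open set $\mathcal O_1$ from Lemma~\ref{lem:important} with an additional open condition guaranteeing that $I_m R_X = XR_X$ and $I_m R_Y = YR_Y$, and then to deduce the claim from Lemma~\ref{lem:localring} together with a short localization argument that rules out embedded associated primes lying over $X$ or $Y$. Since $n_x\leq n_y$ by assumption, I first dispose of the case $m\leq n_x$: the first bullet of Lemma~\ref{lem:important} supplies an open set on which all primary components of $I_m$ are admissible, so the conclusion holds vacuously.

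Assume now $m\geq n_x+1$ (the case $m\geq n_y+1$ is symmetric). Writing $f_i=\sum_{j=0}^{n_x}L_i^j(\mathbf y)\,x_j$ with $L_i^j\in k[y_0,\dots,y_{n_y}]$ linear, I form the $m\times(n_x+1)$ matrix $M_X=[L_i^j]$ and define $\mathcal O_X\subset\mathcal{BL}(n_x,n_y)^m$ to be the open locus where some $(n_x+1)\times(n_x+1)$ minor $\Delta\in k[\mathbf y]\setminus\{0\}$ of $M_X$ is nonzero; this is nonempty as soon as $m\geq n_x+1$. I define $\mathcal O_Y$ symmetrically when $m\geq n_y+1$ (and take it to be the whole space otherwise), and set $\mathcal O:=\mathcal O_1\cap\mathcal O_X\cap\mathcal O_Y$. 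Pick $(f_1,\dots,f_m)\in\mathcal O$. Cramer's rule applied to $\Delta$ produces cofactors $C_{j,k}\in k[\mathbf y]$ with $\Delta\,x_j=\sum_k C_{j,k}f_{i_k}$; since $\Delta\in k[\mathbf y]\setminus\{0\}$ lies outside $X$, $\Delta$ is invertible in $R_X$, so every $x_j$ belongs to $I_mR_X$. Combined with the obvious inclusion $I_mR_X\subset XR_X$ coming from $I_m\subset X$, this yields $I_mR_X=XR_X$, the maximal ideal of $R_X$; hence $R_X/I_mR_X$ is a field, in particular a regular local ring. Since $X\in\mathrm{Ass}(\sqrt{I_m})$ by Lemma~\ref{lem:important}, Lemma~\ref{lem:localring} applies and the isolated primary component of $I_m$ with radical $X$ equals $X$ itself.

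It remains to rule out associated primes strictly larger than $X$. Let $\mathfrak p\in\mathrm{Ass}(R/I_m)$ with $X\subsetneq\mathfrak p$. By compatibility of associated primes with localization, $\mathfrak p R_\mathfrak p\in\mathrm{Ass}(R_\mathfrak p/I_mR_\mathfrak p)$. Transitivity of localization together with $I_mR_X=XR_X$ yields
\[
R_\mathfrak p/I_mR_\mathfrak p \;=\; R_\mathfrak p/XR_\mathfrak p \;=\; (R/X)_{\bar{\mathfrak p}},
\]
where $\bar{\mathfrak p}=\mathfrak p/X$. This is a localization of the integral domain $R/X\cong k[y_0,\dots,y_{n_y}]$, hence itself an integral domain whose only associated prime is $(0)$; back in $R$ this corresponds to $X$. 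Therefore $\mathfrak p R_\mathfrak p=XR_\mathfrak p$, forcing $\mathfrak p=X$ and contradicting $X\subsetneq\mathfrak p$. Combined with the symmetric $Y$-argument, this shows that on $\mathcal O$ every non-admissible primary component of $I_m$ equals $X$ or $Y$. The main conceptual step is realising that the single full-rank condition on $M_X$ simultaneously supplies the regularity hypothesis of Lemma~\ref{lem:localring} and the localization identity $I_mR_X=XR_X$ that kills every potential embedded prime above $X$; the rest is routine manipulation with primary decomposition.
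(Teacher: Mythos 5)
The core of your argument (the first two paragraphs) is correct and runs parallel to the paper's proof: your open set $\mathcal O_X$ is exactly the locus where $\jac_\x(F)$ taken modulo $X$ has rank $n_x+1$, which is the same condition the paper imposes; the paper then invokes the Jacobian criterion to conclude that $R_X/I_X$ is regular, whereas you get the stronger and more elementary conclusion $I_mR_X=XR_X$ directly from Cramer's rule. Either way, Lemma~\ref{lem:localring} then identifies the isolated component over $X$ with $X$ itself, and the paper stops there. (A minor omission on both sides: for $n_x+1\le m\le n_y$ one should cite Lemma~\ref{lemNonAdm} to see that no component contains $Y$, so that taking $\mathcal O_Y$ to be the whole space is harmless.)

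Your third paragraph, however, contains a genuine error. Since $X\subsetneq\mathfrak p$, we have $R\setminus\mathfrak p\subsetneq R\setminus X$, so $R_X$ is a \emph{further} localization of $R_{\mathfrak p}$, not the other way around. Consequently $I_mR_X=XR_X$ does \emph{not} imply $I_mR_{\mathfrak p}=XR_{\mathfrak p}$; the implication only goes from the coarser localization to the finer one. Concretely, your Cramer identity gives $\Delta\cdot X\subseteq I_m$ with $\Delta\in k[\mathbf y]\setminus\{0\}$, and $\Delta$ becomes a unit in $R_{\mathfrak p}$ only if $\Delta\notin\mathfrak p$ — but a prime strictly containing $X$ may well contain $\Delta$ (indeed it may contain every maximal minor of $M_X$, since the ideal these minors generate in $k[\mathbf y]$ is generically proper). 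So the displayed identification $R_{\mathfrak p}/I_mR_{\mathfrak p}=R_{\mathfrak p}/XR_{\mathfrak p}$ is unjustified and the exclusion of embedded primes over $X$ does not follow. To be fair, the paper's own proof also only treats isolated components (Lemma~\ref{lem:localring} is stated for those), so deleting your third paragraph leaves you with an argument of the same strength as the paper's; but as written, that paragraph claims more than it proves, and the "transitivity of localization" step is the precise point of failure.
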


\noindent{\em Proof.}
Suppose that $n_x+1\leq m$. Then, from Lemma \ref{lem:important},
there exists a nonempty Zariski-open set $O_1$ such that $X$ is an
associated prime to $\sqrt{{ I}}$. Note also that this implies that
${I}_X$ has co-dimension $n_x+1$. Thus, {f}rom Lemma
\ref{lem:localring}, it is sufficient to prove that there exists a
nonempty Zariski-open set $O_2$ such that for all $(f_1, \ldots,
f_m)\in O_1\cap O_2$, $R_X/I_X$ is a regular local ring.

{F}rom the Jacobian Criterion (see e.g. \cite{eisenbud}, Theorem
16.19), the local ring $R_X/ I_X$ is regular if and only if
${\rm jac}({f}_1, \ldots, { f}_m)$ taken modulo $X$ has
co-dimension $n_x+1$. Since the generators of ${I}$ are
bilinear, the latter condition is equivalent to saying that the matrix
$$
J_X=\left [\begin{array}{ccc}
  \frac{\partial f_1}{\partial x_0} & \cdots &   \frac{\partial f_1}{\partial x_{n_x}}\\
  \vdots & \cdots & \vdots\\
  \frac{\partial f_m}{\partial x_0} & \cdots &   \frac{\partial f_m}{\partial x_{n_x}}\\
\end{array}\right ]
$$
has rank $n_x+1$. We prove below that there exists a nonempty
Zariski-open set $O_3$ such that for all $(f_1, \ldots, f_m)\in
O_3$, $J_X$ has rank $n_x+1$.

Let ${\frak c}_1, \ldots, {\frak c}_m$ be vectors of coordinates of
${\cal BL}(n_x, n_y)^m$, ${\frak {M}}$ be the vector of all bilinear
monomials in $R$ with respect to the partition $[x_0, \ldots,
x_{n_x}], [y_0, \ldots, y_{n_y}]$ and ${\frak K}$ be the field of
rational fractions $k({\frak c}_1, \ldots, {\frak c}_m)$. Consider the
polynomials ${\frak f}_i={\frak M}.{\frak c}_i^T$ for $1\leq i\leq m$
and the Zariski-open set $O_3$ in ${\cal BL}(n_x, n_y)^m$ defined by
the non-vanishing of all the coefficients of the maximal minors of the
matrix
$$
{\frak J}_X=\left [\begin{array}{ccc}
    \frac{\partial {\frak f}_1}{\partial x_0} & \cdots &   \frac{\partial {\frak f_1}}{\partial x_{n_x}}\\
    \vdots & \cdots & \vdots\\
    \frac{\partial {\frak f}_m}{\partial x_0} & \cdots &   \frac{\partial {\frak f}_m}{\partial x_{n_x}}\\
\end{array}\right ].
$$
It is obvious that $(f_1, \ldots, f_m)\in O_3$ implies that $J_X$ has
rank $n_x+1$; our claim follows.

\medskip In the case where $n_y\leq m$. The proof follows the same
pattern using Lemmas \ref{lem:important} and \ref{lem:localring} and
the Jacobian criterion. The only difference is that one has to prove
that there exists a nonempty Zariski-open set $O_4$ such that for all $(f_1, \ldots, f_m)\in O_4$ the matrix 
$$
J_Y=\left [\begin{array}{ccc}
  \frac{\partial f_1}{\partial y_0} & \cdots &   \frac{\partial f_1}{\partial y_{n_x}}\\
  \vdots & \cdots & \vdots\\
  \frac{\partial f_m}{\partial y_0} & \cdots &   \frac{\partial f_m}{\partial y_{n_y}}\\
\end{array}\right ]
$$
has rank $n_y+1$, which is done as above.
\hfill $\square$

\begin{rem}
  The proof of Proposition \ref{prop:nonadm} relies on the use of the
  Jacobian Criterion. {F}rom \cite[Theorem 16.19]{eisenbud}, it
  remains valid if the characteristic of $k$ is large enough so that
  the residue class field of $X$ (resp. $Y$) is separable. 
\end{rem}

The two following propositions explain why the rows reduced to zero
in the generic case during the $F_5$ Algorithm have a signature
$(t,f_i)$ such that $t\in k[x_0,\ldots,x_{n_x}]$ or $t\in
k[y_0,\ldots,y_{n_y}]$.

\begin{prop}\label{propxy}
Let $m$ be an integer such that $m\leq n_x+n_y$. Let $L$ be the set of
bilinear systems with $m$ polynomials ($L\subset R^m$). Then the set
of bilinear systems $f_1,\ldots,f_m$ such that $Syz=\langle (Syz\cap
k[x_0,\ldots,x_{n_x}]^m)\cup (Syz\cap k[y_0,\ldots,y_{n_y}]^m) \cup
Syz_{triv}\rangle$ is a nonempty Zariski-open subset of $L$.
\end{prop}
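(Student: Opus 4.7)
The plan is to exploit the bigrading coming from bihomogeneity of $f_1, \ldots, f_m$. Since each $f_i$ has bidegree $(1,1)$, the syzygy module $Syz$ is naturally bigraded, and any syzygy $(s_1, \ldots, s_m)$ decomposes into bihomogeneous components $(s_1^{(d_1,d_2)}, \ldots, s_m^{(d_1,d_2)})$ satisfying $\sum_i s_i^{(d_1,d_2)} f_i = 0$ bidegree by bidegree. Thus it is enough to establish the conclusion for bihomogeneous syzygies. A bihomogeneous syzygy of bidegree $(d_1, 0)$ already lies in $Syz \cap k[x_0, \ldots, x_{n_x}]^m$, and symmetrically for $(0, d_2)$, so the genuine work is the mixed case $d_1 \geq 1$ and $d_2 \geq 1$.

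For such a mixed bidegree, the next step is Euler's identity in the $\x$-variables. Writing $f_i = \sum_{k=0}^{n_x} x_k(\partial f_i/\partial x_k)$, the syzygy relation becomes
\[
0 = \sum_{i=1}^m s_i f_i = \sum_{k=0}^{n_x} x_k v_k, \qquad v_k := \sum_{i=1}^m s_i \frac{\partial f_i}{\partial x_k}.
\]
Because $(x_0, \ldots, x_{n_x})$ is a regular sequence in $R$, the Koszul complex forces the vector $(v_0, \ldots, v_{n_x}) = (s_1, \ldots, s_m) \cdot \jac_\x(F_m)$ to be an $R$-linear combination of Koszul boundaries $x_k e_\ell - x_\ell e_k$. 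The crucial step is to recognize that each such boundary lies in the image of $\jac_\x(F_m)$ applied to a trivial syzygy: indeed, $f_j e_i - f_i e_j$ maps under right multiplication by $\jac_\x(F_m)$ to a vector whose entries are themselves Euler expansions of $f_i$ and $f_j$ in the $x_k$. Using this correspondence, I can modify $(s_1, \ldots, s_m)$ by an element of $Syz_{triv}$ so that the new vector $(s'_1, \ldots, s'_m)$ satisfies $(s'_1, \ldots, s'_m) \cdot \jac_\x(F_m) = 0$, i.e. belongs to $\Ker_L(\jac_\x(F_m))$ and represents the same class modulo $Syz_{triv}$.

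Once the syzygy is placed in $\Ker_L(\jac_\x(F_m))$, the genericity hypothesis enters: by Lemma \ref{lemmemi} together with Conjecture \ref{conjec}, for a generic system this left kernel is generated as an $R$-module by the vectors of signed maximal minors of the augmented matrices attached to $\jac_\x(F_m)$. The entries of $\jac_\x(F_m)$ are linear forms purely in $\y$, so the same holds for all its maximal minors, and these generators therefore lie in $k[y_0, \ldots, y_{n_y}]^m$. Hence $(s'_1, \ldots, s'_m)$ lies in the $R$-submodule generated by $Syz \cap k[y_0, \ldots, y_{n_y}]^m$, completing the reduction. The Zariski-open set claimed in the statement is the intersection of the genericity loci used along the way (generic maximal rank of $\jac_\x$ and $\jac_\y$, the open set from Conjecture \ref{conjec}, and the open set from Theorem \ref{minorsGB}), each of which is itself Zariski-open and nonempty, and their intersection is therefore a nonempty Zariski-open subset of $L$.

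The main obstacle will be the exchange between Koszul boundaries on $(x_k)$ and trivial syzygies on $(f_i)$: one has to produce, for each boundary $x_k e_\ell - x_\ell e_k$ arising as a component of $(s_1, \ldots, s_m)\cdot \jac_\x(F_m)$, an explicit element of $Syz_{triv}$ whose image under right multiplication by $\jac_\x(F_m)$ matches it exactly. This bookkeeping identity is where bilinearity of the $f_i$ is used in an essential way, and it also forces the bidegree condition $d_1 \geq 1$ (needed for Euler's identity to give a nontrivial reduction). By $\x$/$\y$ symmetry, the same argument performed with $\jac_\y$ instead reduces the mixed-bidegree syzygy into the module generated by $Syz \cap k[x_0, \ldots, x_{n_x}]^m$, so either variant suffices.
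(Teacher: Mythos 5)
Your reduction to bihomogeneous syzygies and the Euler/jacobian set-up are fine, but the proof breaks exactly at the step you yourself flag as ``the main obstacle'': the exchange between Koszul boundaries on $(x_0,\dots,x_{n_x})$ and trivial syzygies. The justification you offer --- that each boundary $x_k e_\ell - x_\ell e_k$ ``lies in the image of $\jac_\x(F_m)$ applied to a trivial syzygy'' --- is false on degree grounds: the image of $f_j e_i - f_i e_j$ under right multiplication by $\jac_\x(F_m)$ has $k$-th entry $f_j\,\partial f_i/\partial x_k - f_i\,\partial f_j/\partial x_k$, which is bihomogeneous of bidegree $(1,2)$, so the $R$-module generated by these images contains no vector whose entries have bidegree $(1,0)$. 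Hence the claimed modification of $(s_1,\dots,s_m)$ into $\Ker_L(\jac_\x(F_m))$ modulo $Syz_{triv}$ is not established, and the closing remark that ``either variant suffices'' should itself have been a warning: if a single jacobian reduction worked, every mixed syzygy would lie modulo $Syz_{triv}$ in the module generated by $Syz\cap k[y_0,\dots,y_{n_y}]^m$ alone \emph{and} in the one generated by $Syz\cap k[x_0,\dots,x_{n_x}]^m$ alone, which is not what the structure of bi-regular systems (both families of minor vectors appear, one from each jacobian) suggests. A separate defect is the appeal to Conjecture~\ref{conjec}: even if the Koszul bookkeeping were repaired, your argument would only prove the proposition conditionally, whereas the statement is unconditional.

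The paper's proof takes a different and much shorter route that avoids both problems. For a bihomogeneous syzygy $(s_1,\dots,s_m)$ one only examines the last coordinate $s_m\in I_{m-1}:f_m$. Generically $f_m$ is not a zerodivisor modulo the saturation $J_{m-1}=I_{m-1}:(X\cap Y)^\infty$ (Theorem~\ref{generbireg}), so $s_m\in J_{m-1}$; and generically the non-admissible primary components of $I_{m-1}$ are exactly $X=\langle x_0,\dots,x_{n_x}\rangle$ or $Y=\langle y_0,\dots,y_{n_y}\rangle$ (Proposition~\ref{prop:nonadm}). Therefore either $s_m\in I_{m-1}$, which is absorbed by $Syz_{triv}$, or $s_m$ is a bihomogeneous polynomial avoiding $X$ (hence lying in $k[y_0,\dots,y_{n_y}]$) or avoiding $Y$ (hence lying in $k[x_0,\dots,x_{n_x}]$). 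No identification of $\Ker_L(\jac_\x(F_m))$ with a module of minor vectors, and no conjecture, is needed. If you want to salvage your approach, the missing ingredient is precisely a proof that $(s_1,\dots,s_m)\cdot\jac_\x(F_m)$, which is a Koszul syzygy of the variables $x_k$, can be realized as the image of an element of $Syz_{triv}$ plus the image of something in $\Ker_L(\jac_\y(F_m))$ --- and that statement is essentially as hard as the proposition itself.
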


\noindent{\em Proof.}
  Let $s=(s_1,\ldots,s_m)$ be a syzygy. Thus, $s_m$ is in $I_{m-1} : f_m$. We can suppose without loss of generality that the $s_i$ are bihomogeneous of same bi-degree (Proposition \ref{prop:bihom}). According to Theorem \ref{generbireg}, there exists a nonempty Zariski open set $O_1\subset {\cal BL}(n_x, n_y)^m$, such that if $(f_1,\ldots,f_m)\in O_1$, then $f_m$ is not a divisor of $0$ in $R/J_{m-1}$. 
  We can deduce from this observation that $s_m\in J_{m-1}$. So
  $s_m\in I_{m-1}$ or there exists $P$ a non-admissible primary
  component of
  $I_{m-1}$ such that $s_m\notin P$. Assume that $s_m\notin I_{m-1}$. From Proposition \ref{prop:nonadm}, there exists a nonempty Zariski open set $O_2\subset {\cal BL}(n_x, n_y)^m$, such that if $(f_1,\ldots,f_m)\in O_2$, then $\langle x_0,\ldots,x_{n_x} \rangle = P$
  (or $\langle y_0,\ldots,y_{n_y} = P$). This means
  that, generically, $s_m\in k[y_0,\ldots, y_{n_y}]$ (or $s_m\in
  k[x_0,\ldots, x_{n_x}]$).

Finally, we see that, if $(f_1,\ldots,f_m)\in O_1\cap O_2$, then $s_m\in I_{m-1}\cup k[y_0,\ldots,
  y_{n_y}] \cup k[x_0,\ldots, x_{n_x}]$. Since the syzygy module of a
bihomogeneous system is generated by bihomogeneous syzygies, it can be
deduced that $Syz=\langle (Syz\cap k[x_0,\ldots,x_{n_x}]^m)\cup
(Syz\cap k[y_0,\ldots,y_{n_y}]^m) \cup Syz_{triv}\rangle$.
\hfill $\square$

\begin{prop}\label{aff}
Let $V$ be the output of Algorithm \textsc{BLcriterion} and let $(h,f_i)$ be an element of $V$. Then
\begin{itemize}
\item if $h\in k[x_0,\ldots,x_{n_x}]$, then $\forall j, y_j h\in I_{i-1}$.
\item if $h\in k[y_0,\ldots,y_{n_y}]$, then $\forall j, x_j h\in I_{i-1}$.
\end{itemize}
\end{prop}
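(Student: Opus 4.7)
\noindent\emph{Proof proposal.}
The plan is to sharpen the reasoning of Theorem~\ref{correct}. Recall that the elements $h$ placed in $V$ by Algorithm~\ref{algoprinc} are, up to a $k$-linear combination produced by the call to \textsc{Reduce}, maximal minors of $\jac_\y(F_{i-1})$ (when $h\in k[\mathbf{x}]$) or of $\jac_\x(F_{i-1})$ (when $h\in k[\mathbf{y}]$). Since the property ``$y_j h\in I_{i-1}$ for all $j$'' is $k$-linear in $h$, it suffices to establish the claim for a single such minor. By symmetry between $\x$ and $\y$, I treat only the case where $\mathfrak{s}$ is a maximal minor of $\jac_\y(F_{i-1})$ and show $y_k \mathfrak{s}\in I_{i-1}$ for every $k\in\{0,\ldots,n_y\}$.

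First, I would select the $(n_y+1)\times(n_y+1)$ submatrix $J$ of $\jac_\y(F_{i-1})$ with $\det(J)=\mathfrak{s}$, and let $f_{\ell_1},\ldots,f_{\ell_{n_y+1}}$ be the bilinear polynomials indexing the rows kept in $J$. Since each $f_{\ell_\mu}$ is homogeneous of degree $1$ in $\mathbf{y}$, Euler's formula gives the matrix identity
$$J\cdot \Y = \bigl(f_{\ell_1},\ldots,f_{\ell_{n_y+1}}\bigr)^{t}.$$

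The key step is then to multiply both sides on the left by the adjugate matrix $J^{\mathrm{adj}}$. Using $J^{\mathrm{adj}}\cdot J = \det(J)\,\mathrm{Id}= \mathfrak{s}\,\mathrm{Id}$, one obtains
$$\mathfrak{s}\,\Y \;=\; J^{\mathrm{adj}}\cdot\bigl(f_{\ell_1},\ldots,f_{\ell_{n_y+1}}\bigr)^{t}.$$
Reading the $k$-th coordinate of this vector equality yields $y_k\,\mathfrak{s} = \sum_{\mu=1}^{n_y+1}(J^{\mathrm{adj}})_{k,\mu}\,f_{\ell_\mu}$, which is manifestly an element of $\langle f_{\ell_1},\ldots,f_{\ell_{n_y+1}}\rangle\subset I_{i-1}$. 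The symmetric argument, applied to a maximal minor of $\jac_\x(F_{i-1})$ and using Euler's formula in the variables $\mathbf{x}$, settles the second bullet.

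I do not anticipate a genuine obstacle here; once the identity $J\cdot \Y = (f_{\ell_1},\ldots,f_{\ell_{n_y+1}})^{t}$ is written down, Cramer's rule in its adjugate form does the whole work. The only thing to be careful about is bookkeeping: the call to \textsc{Reduce} replaces the set of minors by a row-echelon basis, but since that transformation is $k$-linear and the conclusion is closed under $k$-linear combinations, no additional genericity assumption on the $f_j$'s is needed for this proposition.
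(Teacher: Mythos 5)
Your proof is correct and takes essentially the same route as the paper: the adjugate/Cramer identity combined with Euler's formula in the relevant block of variables. Your version is in fact slightly cleaner, since you apply the adjugate directly to the $(n_y+1)\times(n_y+1)$ submatrix $J$ rather than, as the paper does, to the $(i-1)\times(i-1)$ extension $\mathsf{M}_T$ of the full jacobian, and you explicitly record the $k$-linearity reduction that handles the output of \textsc{Reduce}.
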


\noindent{\em Proof.}
Suppose that $h\in
k[x_0,\ldots,x_{n_x}]$ is a maximal minor of $\jac_\y(F_{i-1})$ (the proof is similar if $h\in
k[y_0,\ldots,y_{n_y}]$). Consider the matrix $\jac_\x(F_{i-1})$ as defined in
Algorithm \ref{algoprinc}. Then there exists an $(i-1)\times
(i-1)$ extension $\mathsf M_T$ of $\jac_\x(F_{i-1})$ such that $\det(M_T)=h$ (similarly to the proof of Lemma \ref{lemmemi}). Let
$0\leq j\leq n_y$ be an integer.  Consider the polynomials
$h_1,\ldots,h_{i-1}$, where $h_k$ is the determinant of the
$(i-2)\times (i-2)$ matrix obtained by removing the $(j+1)$th column and
the $k$th row from $\mathsf M_T$.

Then we can remark that 
$$\begin{pmatrix}h_1&-h_2&\dots&(-1)^{i-1} h_{i-2}&(-1)^i h_{i-1}\end{pmatrix}\cdot\mathsf M_T=\begin{pmatrix}0&\dots&0&(-1)^j \det(\mathsf M_T)&0&\dots&0
\end{pmatrix}$$
where the only non-zero component is in the $(j+1)$th column.
Keeping only the $n_y+1$ first columns of $\mathsf M_T$, we obtain
$$\begin{pmatrix}h_1&-h_2&\dots&(-1)^{n_y} h_{n_y+1}\end{pmatrix}\cdot \jac_\x(F_{i-1})=\begin{pmatrix}0&\dots&0&(-1)^j\det(A_T)&0&\dots&0
\end{pmatrix}$$
Since $\jac_\x(\mathsf F_{i-1})\cdot \begin{pmatrix}y_0\\\vdots \\y_{n_y}\end{pmatrix} =\begin{pmatrix}f_1\\\vdots\\f_{i-1}\end{pmatrix}$, 
the following equality holds
$$\begin{pmatrix}h_1&-h_2&\dots&(-1)^{n_y-1} h_{n_y}&(-1)^{n_y} h_{n_y+1}
\end{pmatrix}\cdot \begin{pmatrix}f_1\\\vdots\\f_{i-1}\end{pmatrix}=y_j \det(\mathsf M_T)=y_j h.$$
This implies that $y_j h\in I_{i-1}$.
\hfill $\square$

\begin{cor} \label{corointro}
Let $m$ be an integer such that $m\leq n_x+n_y$ and let
$f_1,\ldots,f_m$ be bilinear polynomials. Let $V$ be the output of Algorithm \textsc{BLcriterion}. Assume that 
$$(I_{m-1}:f_m)\cap k[x_0,\ldots,x_{n_x}]=\langle \{h\in k[x_0,\ldots,x_{n_x}]~~:~~ (h,f_m)\in V\} \rangle. $$
$$(I_{m-1}:f_m)\cap k[y_0,\ldots,y_{n_y}]=\langle \{h\in k[y_0,\ldots,y_{n_y}]~~:~~(h,f_m)\in V\} \rangle. $$
Let $G_x$ (resp $G_y$) be a
Gröbner basis of $(I_{m-1}:f_m)\cap k[x_0,\ldots,x_{n_x}]$
(resp. $(I_{m-1}:f_m)\cap k[y_0,\ldots,y_{n_y}]$) and let $G_{m-1}$ be
a Gröbner basis of $I_{m-1}$.  If $Syz=\langle (Syz\cap
k[x_0,\ldots,x_{n_x}]^m)\cup (Syz\cap k[y_0,\ldots,y_{n_y}]^m) \cup
Syz_{triv}\rangle$, then $G_x\cup G_y\cup G_{m-1}$ is a Gröbner basis
of $I_{m-1}:f_m$.
\end{cor}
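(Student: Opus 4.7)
The plan is a two-step argument. First I would establish the ideal identity
$$
I_{m-1}:f_m \;=\; J_x + J_y + I_{m-1},
$$
where $J_x := (I_{m-1}:f_m)\cap k[\x]$ and $J_y := (I_{m-1}:f_m)\cap k[\y]$. Second I would use a bihomogeneous-component argument to deduce the equality of leading-monomial ideals $\LM(I_{m-1}:f_m)=\langle\LM(G_x\cup G_y\cup G_{m-1})\rangle$, which, together with the obvious membership $G_x\cup G_y\cup G_{m-1}\subseteq I_{m-1}:f_m$, gives the Gr\"obner basis property.

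For Step~1, the inclusion $\supseteq$ is immediate. For $\subseteq$, given $g\in I_{m-1}:f_m$, the relation $g\,f_m\in I_{m-1}$ produces a syzygy $s=(s_1,\dots,s_{m-1},g)$ of $(f_1,\dots,f_m)$. The hypothesis on $Syz$ applied to the $m$-th coordinate yields $g = A + B + C$ with $A\in J_x\cdot R$, $B\in J_y\cdot R$ and $C\in I_{m-1}$. It remains to absorb the $R$-scalars modulo $I_{m-1}$. Here I would invoke Proposition~\ref{aff}: by hypothesis $J_x$ is generated as a $k[\x]$-ideal by the polynomials $h$ such that $(h,f_m)\in V$, and Proposition~\ref{aff} ensures $y_j h\in I_{m-1}$ for all $j$. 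Hence $Y\cdot J_x\subseteq I_{m-1}$. Decomposing any $u\in R$ as $u^{(0)}+u^{(\y)}$ with $u^{(0)}\in k[\x]$ and $u^{(\y)}\in Y\cdot R$, the product $u\cdot h$ for $h\in J_x$ splits as $u^{(0)} h\in J_x$ plus $u^{(\y)} h\in Y\cdot J_x\cdot R\subseteq I_{m-1}$, so $J_x\cdot R\subseteq J_x+I_{m-1}$. The symmetric statement holds for $J_y$, completing Step~1.

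For Step~2, fix a bihomogeneous $g\in I_{m-1}:f_m$ of bidegree $(d_x,d_y)$ (enough since $I_{m-1}:f_m$ is bihomogeneous). Write $g=a+b+c$ with $a\in J_x\subseteq k[\x]$, $b\in J_y\subseteq k[\y]$, $c\in I_{m-1}$, and project onto the $(d_x,d_y)$-component: the contribution of $a$ vanishes unless $d_y=0$, that of $b$ unless $d_x=0$, and the component of $c$ remains in $I_{m-1}$. Three cases follow: (i) if $d_x,d_y>0$ then $g\in I_{m-1}$ and $\LM(g)\in\langle\LM(G_{m-1})\rangle$; (ii) if $d_y=0$ then $g\in k[\x]\cap(I_{m-1}:f_m)=J_x$, so $\LM(g)\in\langle\LM(G_x)\rangle$; (iii) symmetrically if $d_x=0$, $\LM(g)\in\langle\LM(G_y)\rangle$. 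Combined with the trivial reverse inclusion, this proves $G_x\cup G_y\cup G_{m-1}$ is a Gr\"obner basis of $I_{m-1}:f_m$.

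The main obstacle is Step~1: the syzygy hypothesis by itself only gives membership in $J_x\cdot R + J_y\cdot R + I_{m-1}$, and upgrading this to $J_x+J_y+I_{m-1}$ (with $J_x\subseteq k[\x]$ and $J_y\subseteq k[\y]$) is precisely what Proposition~\ref{aff} enables. Without such an upgrade, the bihomogeneous argument in Step~2 would break down, since leading monomials of $A\in J_x\cdot R$ and $B\in J_y\cdot R$ could cancel in $A+B+C=g$ and leave $\LM(g)$ outside $\langle\LM(G_x\cup G_y\cup G_{m-1})\rangle$.
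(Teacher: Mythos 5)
Your proposal is correct and follows essentially the same route as the paper's proof: both extract the decomposition of $g$ from the syzygy hypothesis, both invoke Proposition~\ref{aff} to absorb the products $u\cdot h$ with $u\in\langle y_0,\ldots,y_{n_y}\rangle$ and $h\in(I_{m-1}:f_m)\cap k[x_0,\ldots,x_{n_x}]$ into $I_{m-1}$, and both finish with a bihomogeneous case analysis on the bidegree $(d_x,d_y)$. Your version merely packages the absorption step as the explicit ideal identity $I_{m-1}:f_m=J_x+J_y+I_{m-1}$ before passing to leading monomials, which makes the paper's slightly terse use of Proposition~\ref{aff} on the Gr\"obner basis elements $g^{(x)}_i$ more transparent, but it is not a different argument.
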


\noindent{\em Proof.}
  Let $f\in I_{m-1}:f_m$ be a polynomial. Thus there exist $s_1,\ldots, s_{m-1}$ such that $(s_1,\ldots, s_{m-1},f)\in Syz$. Since $I_{m-1}$ and $f_m$
  are bihomogeneous, we can suppose without loss of generality that
  $f$ is bihomogeneous (Proposition \ref{prop:bihom}). 
Let $(d_1,d_2)$ denote its
  bi-degree. 
\begin{itemize}
\item If $d_2=0$ (resp. $d_1=0$), then $f\in \langle G_x \rangle$ (resp. $f\in \langle G_y \rangle$).
\item Let $G_x=\{g^{(x)}_i\}_{1\leq i\leq \mathsf{card}(G_x)}$ and
  $G_y=\{g^{(y)}_i\}_{1\leq i\leq \mathsf{card}(G_y)}$. If $d_1\neq 0$
  and $d_2\neq 0$ then, since $Syz=\langle (Syz\cap
  k[x_0,\ldots,x_{n_x}]^m)\cup (Syz\cap k[y_0,\ldots,y_{n_y}]^m) \cup
  Syz_{triv}\rangle$,
$$f =
\sum_{1\leq i \leq \mathsf{card}(G_x)} q_i g^{(x)}_i + \sum_{1\leq i \leq \mathsf{card}(G_y)} q'_i
g^{(y)}_i + t$$ where $t\in I_{m-1}$ is a bihomogeneous polynomial and the $q_i$ and $q'_i$ are also
bihomogeneous. Since $d_2\neq 0$ and $g^{(x)}_i\in
k[x_0,\ldots,x_{n_x}]$, $q_i$ must be in $\langle y_0,\ldots,
y_{n_y}\rangle$. According to Proposition \ref{aff}, $\forall i, q_i g^{(x)}_i\in I_{m-1}$. By a similar argument, $\forall i, q'_i g^{(y)}_i\in
I_{m-1}$.  Finally, $f\in I_{m-1}$.
\end{itemize}
We just proved that $I_{m-1}:f_m=I_{m-1}\cup \langle G_x\rangle\cup
\langle G_y\rangle$. Thus, $G_x\cup G_y\cup G_{m-1}$ is a Gröbner
basis of $I_{m-1}:f_m$.
\hfill $\square$

Corollary \ref{corointro} shows that, when a bilinear system is bi-regular, it is
possible to find a Gröbner basis of $I_{m-1}:f_m$ (which yields the
monomials $t$ such that the row $(t,f_m)$ reduces to zero) as soon as
we know the three Gröbner bases $G_x$, $G_y$, and $G_{m-1}$. In fact,
we only need $G_x$ and $G_y$ since the reductions to zero
corresponding to $G_{m-1}$ are eliminated by the usual $F_5$
criterion. Fortunately, we can obtain $G_x$ and $G_y$ just by
performing linear algebra over the maximal minors of a matrix (Theorem \ref{minorsGB}). 

We now present the main result of this section.
If we suppose that Conjecture \ref{conjec} is true, then the following Theorem shows that generic bilinear systems are bi-regular.
\begin{thm}\label{genericity}
Let $m,n_x,n_y\in\mathbb{N}$ such that $m<n_x+n_y$. 
The set of bi-regular sequences $(f_1,\ldots, f_m)$
contains a nonempty Zariski-open set. Moreover, if $(f_1,\ldots,f_m)$ is
a bi-regular sequence, then there are no reductions to zero with the extended $F_5$ criterion.
\end{thm}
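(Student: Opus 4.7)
The plan is to proceed by induction on $m$, handling both claims in the inductive step. The base case $m=1$ is immediate, as the bi-regularity condition is vacuous at $m=1$. For the inductive step, I will assume $(f_1,\ldots,f_{m-1})$ is bi-regular on a nonempty Zariski-open set, establish the bi-regularity identity at step $m$ via Corollary \ref{corointro}, and then deduce the absence of reductions to zero from the explicit form of $\LM(I_{m-1}:f_m)$ combined with Theorem \ref{thm:F5}.

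For the bi-regularity identity, I will apply Corollary \ref{corointro}, which under its two hypotheses writes a Gr\"obner basis of $I_{m-1}:f_m$ as $G_x \cup G_y \cup G_{m-1}$, where $G_x$ and $G_y$ are Gr\"obner bases of the pure-$\x$ and pure-$\y$ parts of the colon ideal. The first hypothesis, the syzygy decomposition $Syz = \langle (Syz \cap k[\x]^m) \cup (Syz \cap k[\y]^m) \cup Syz_{triv}\rangle$, holds generically by Proposition \ref{propxy}. The second hypothesis requires identifying $(I_{m-1}:f_m) \cap k[\x]$ with the maximal-minor ideal $M_\y^{(m-1)}$, and symmetrically for $\y$. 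The inclusion $M_\y^{(m-1)} \subseteq (I_{m-1}:f_m) \cap k[\x]$ is Theorem \ref{correct}; for the reverse, I will take a bihomogeneous $h \in (I_{m-1}:f_m) \cap k[\x]$, use bidegree constraints to force the associated syzygy $(s_1,\ldots,s_{m-1},h)$ to have all components in $k[\x]$, rewrite it via Euler's identity as an element of $\Ker_L(\jac_\y(F_m))$, and invoke Conjecture \ref{conjec} to express it as a combination of ``minor vectors'' whose last coordinates are (up to sign) maximal minors of $\jac_\y(F_{m-1})$. Theorem \ref{minorsGB} then provides a Gr\"obner basis of $M_\y^{(m-1)}$ with leading-monomial ideal $\langle \mathsf{Monomials}^\x_{m-n_y-2}(n_y+1) \rangle$, and symmetrically for $M_\x^{(m-1)}$; taking the union with $G_{m-1}$ and reading off leading monomials yields the bi-regularity identity. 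All the generic conditions invoked --- the inductive hypothesis, Proposition \ref{propxy}, Proposition \ref{prop:nonadm}, Conjecture \ref{conjec}, and Theorem \ref{minorsGB} --- carve out Zariski-open subsets whose intersection remains nonempty and Zariski-open.

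For the second assertion, Theorem \ref{thm:F5}(2) identifies rows $(t,f_i)$ that reduce to zero with signatures such that $t \in \LM(I_{i-1}:f_i) \setminus \LM(I_{i-1})$. By the bi-regularity identity just established, every such $t$ is a multiple of a minimal generator of $\langle \mathsf{Monomials}^\x_{i-n_y-2}(n_y+1)\rangle$ or $\langle \mathsf{Monomials}^\y_{i-n_x-2}(n_x+1)\rangle$, and by Theorem \ref{minorsGB} these minimal generators are exactly the leading monomials of the elements $h$ with $(h,f_i) \in V$. Minimal generators themselves are removed by the extended criterion, while non-minimal multiples are handled by a combination of Proposition \ref{aff} --- which places $y_j h \in I_{i-1}$ (resp. $x_j h \in I_{i-1}$) for $(h,f_i) \in V$ with $h \in k[\x]$ (resp. $k[\y]$), so that their leading monomials fall inside $\LM(I_{i-1})$ and are caught by the classical criterion --- and the signature-based propagation of Matrix $F_5$, which blocks descendants of any caught signature. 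The hardest part will be the colon-ideal identification in the bi-regularity step: this rests on Conjecture \ref{conjec} applied to $\jac_\y(F_m)$ and on the bidegree argument that lifts an element of $(I_{m-1}:f_m) \cap k[\x]$ to a syzygy supported in $k[\x]^m$.
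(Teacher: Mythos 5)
Your proposal is correct and follows essentially the same route as the paper: Proposition \ref{propxy} for the generic syzygy decomposition, Conjecture \ref{conjec} to identify $(I_{m-1}:f_m)\cap k[x_0,\ldots,x_{n_x}]$ (resp.\ $\cap\, k[y_0,\ldots,y_{n_y}]$) with the ideal of maximal minors of the relevant jacobian, Theorem \ref{minorsGB} for the leading-monomial description, and Theorem \ref{thm:F5} to translate this into detected reductions to zero. You are in fact slightly more careful than the paper in two places --- the bidegree argument lifting an element of the colon ideal to a syzygy supported in $k[x_0,\ldots,x_{n_x}]^m$, and the treatment of signatures that are proper multiples of the minimal generators --- but the underlying argument is the same.
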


\noindent{\em Proof.}
Let $G_m$ be a minimal Gröbner basis of $I_{m-1}:f_m$.  The reductions
to zero $(t,f_m)$ which are not detected by the usual $F_5$ criterion
are exactly those such that $t\in \LM(G_m)$ and $t\notin
\LM(I_{m-1})$. We showed that there exists a nonempty Zariski-open subset $O_1$ of $\mathcal{BL}(n_x,n_y)$ such that if $f_m\in O_1$, then $t\in \LM(I_{m-1}:f_m \cap k[x_0,\ldots, x_{n_x}])$ or
$t\in \LM(I_{m-1}:f_m \cap k[y_0,\ldots, y_{n_y}])$ (Proposition
\ref{propxy}). If we suppose that the conjecture \ref{conjec} is true,
then there exists a nonempty Zariski-open subset $O_2$ of $\mathcal{BL}(n_x,n_y)$ such that if $f_m\in O_2$, $I_{m-1}:f_m \cap k[x_0,\ldots, x_{n_x}]$
(resp. $I_{m-1}:f_m \cap k[y_0,\ldots, y_{n_y}]$) is spanned by the
maximal minors of $\jac_\x(F_{m-1})$ (resp. $\jac_\y(F_{m-1})$). Thus, by Theorem \ref{minorsGB},  there exists a nonempty Zariski-open subset $O_3$ of $\mathcal{BL}(n_x,n_y)$ such that if $f_m\in O_3$,
$\LM(I_{m-1}:f_m \cap k[x_0,\ldots, x_{n_x}])=\mathsf{Monomials}^\x_{m-n_y-2}(n_y+1)\rangle$ (resp. $\LM(I_{m-1}:f_m \cap k[y_0,\ldots, y_{n_y}])=\mathsf{Monomials}^\y_{m-n_x-2}(n_x+1)\rangle$). Suppose that
 $f_m\in O_1\cap O_2\cap  O_3$ (which is a nonempty Zariski-open subset) and that $(t,f_m)$ is a reduction to zero such that $t\notin \LM(I_{m-1})$. Then 

$$t\in\langle \mathsf{Monomials}^\x_{m-n_y-2}(n_y+1)\rangle$$
$$or$$
$$t\in\langle \mathsf{Monomials}^\y_{m-n_x-2}(n_x+1)\rangle.$$
By Lemma \ref{lemGB1}, $t$ is a leading monomial of a linear combination of the maximal minors of $\jac_\x(F_{m-1})$ (or $\jac_\y(F_{m-1})$). Consequently, the reduction to zero $(t,f_m)$ is detected by the extended $F_5$ criterion.
\hfill $\square$

\begin{rem}\label{remarque}
Thanks to the analysis of Algorithm \ref{algoprinc}, we know
exactly which reductions to zero can be avoided during the computation
of a Gröbner basis of a bilinear system. If a bilinear system is
bi-regular, then the Algorithm \ref{algoprinc} finds all reductions to
zero. Indeed, this algorithm detects reductions to zero coming from
linear combinations of maximal minors of the matrices $\jac_\x(F_i)$ and
$\jac_\y(F_i)$. According to Theorem \ref{genericity}, there are no other reductions to zero for
bi-regular systems.
\end{rem}

\section{Hilbert bi-series of bilinear systems}

 An important tool to describe ideals spanned by bilinear
 equations is the so-called \emph{Hilbert series}. In the homogeneous
 case, complexity results for $F_5$ were obtained with this tool (see
 e.g. \cite{bardet2005}). In this section, we provide an explicit form of
 the Hilbert bi-series -- a bihomogeneous analog of the Hilbert
 series -- for ideals spanned by generic bilinear systems. To find
 this bi-series, we use the combinatorics of the syzygy module of
 bi-regular systems. With this tool, we will be able to do a
 complexity analysis of a special version of the $F_5$ which will be
 presented in the next section.

We say that an ideal is \emph{bihomogeneous} if there exists a set of
bihomogeneous generators.  The following notation will be used
throughout this paper: the vector space of bihomogeneous polynomials
of bi-degree $(\alpha,\beta)$ will be denoted by
$R_{\alpha,\beta}$. If $I$ is a bihomogeneous ideal, then
$I_{\alpha,\beta}$ will denote the vector space $I\cap
R_{\alpha,\beta}$.

\begin{defn}[\cite{Waerden,eldin2006}]
Let $I$ be a bihomogeneous ideal of $R$. The Hilbert bi-series is defined by 
$$\HS_I(t_1,t_2)=\sum_{(\alpha,\beta)\in\mathbb{N}^2} \dim(R_{\alpha,\beta}/I_{\alpha,\beta}) t_1^\alpha t_2^\beta.$$
\end{defn}
\begin{rem}
The usual univariate Hilbert series for homogeneous ideals can easily be deduced from the Hilbert bi-series by putting $t_1=t_2$ (see \cite{eldin2006}).
\end{rem}

We can now present the main result of this section: an explicit form of the bi-series for bi-regular bilinear systems.

\begin{thm}\label{theohilb}
Let $f_1,\ldots,f_m\in R$ be a bi-regular bilinear sequence, with $m\leq n_x+n_y$.
Then $$\HS_{I_m}(t_1,t_2)=\frac{N_m(t_1,t_2)}{(1-t_1)^{n_x+1} (1-t_2)^{n_y+1}},$$
where
$$\begin{array}{c}
N_m(t_1,t_2)~~~=~~~(1-t_1 t_2)^m +\\
\sum_{\ell=1}^{m-(n_y+1)}(1-t_1 t_2)^{m-(n_y+1)-\ell} t_1 t_2 (1-t_2)^{n_y+1}\big{[}1-(1-t_1)^\ell \sum_{k=1}^{n_y+1} t_1^{n_y+1-k}{{\ell+n_y-k}\choose {n_y+1-k}}\big{]} +\\
\sum_{\ell=1}^{m-(n_x+1)}(1-t_1 t_2)^{m-(n_x+1)-\ell} t_1 t_2 (1-t_1)^{n_x+1}\big{[}1-(1-t_2)^\ell \sum_{k=1}^{n_x+1} t_2^{n_x+1-k}{{\ell+n_x-k}\choose {n_x+1-k}}\big{]}.
\end{array}$$
\end{thm}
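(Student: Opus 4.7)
The argument is by induction on $m$ using the short exact sequence
$$0\to\bigl(R/(I_{m-1}:f_m)\bigr)(-1,-1)\xrightarrow{\cdot f_m} R/I_{m-1}\to R/I_m\to 0,$$
where the shift $(-1,-1)$ reflects the bi-degree $(1,1)$ of $f_m$. Taking Hilbert series gives $\HS_{R/I_m}=\HS_{R/I_{m-1}}-t_1t_2\,\HS_{R/(I_{m-1}:f_m)}$. Setting $D:=(1-t_1)^{n_x+1}(1-t_2)^{n_y+1}$ and $\HS_{R/I_i}=N_i/D$, this rearranges to $N_m=(1-t_1t_2)N_{m-1}+T_m$, where $T_m:=t_1t_2\,D\,\HS_{(I_{m-1}:f_m)/I_{m-1}}$. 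Iterating from $N_0=1$ yields $N_m=(1-t_1t_2)^m+\sum_{i=1}^m(1-t_1t_2)^{m-i}T_i$, and the reindexings $\ell=i-n_y-1$ and $\ell=i-n_x-1$ show that the theorem is equivalent to the identity
$$T_i=t_1t_2(1-t_2)^{n_y+1}\bigl[1-(1-t_1)^{\ell_x}Q_x(t_1)\bigr]+t_1t_2(1-t_1)^{n_x+1}\bigl[1-(1-t_2)^{\ell_y}Q_y(t_2)\bigr],$$
with $\ell_x=i-n_y-1$, $\ell_y=i-n_x-1$, $Q_x(t_1)=\sum_{j=0}^{n_y}\binom{\ell_x+j-1}{j}t_1^j$, $Q_y$ symmetric, and a summand understood to vanish when its $\ell$ is $\leq 0$.

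\textbf{Reduction to monomial ideals.} Let $L:=\LM(I_{m-1})$ and write $B_\x,B_\y$ for the monomial ideals in the $\x$- and $\y$-variables appearing in bi-regularity (Definition \ref{defbireg}). Since Hilbert series are invariant under taking leading monomial ideals, $\HS_{(I_{m-1}:f_m)/I_{m-1}}=\HS_{(L+B_\x+B_\y)/L}$. For monomial ideals the identity $A\cap(B+C)=(A\cap B)+(A\cap C)$ holds, and a Mayer--Vietoris computation gives
$$\HS_{(L+B_\x+B_\y)/L}=\HS_{(L+B_\x)/L}+\HS_{(L+B_\y)/L}-\HS_{(L+B_\x\cap B_\y)/L}.$$

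\textbf{Key annihilation.} By Theorem \ref{minorsGB}, $B_\x$ is generated by leading monomials $\LM(p)$ of maximal minors $p$ of $\jac_\y(F_{m-1})$; by Proposition \ref{aff}, $y_jp\in I_{m-1}$ for every $j$, so $y_j\LM(p)=\LM(y_jp)\in L$. Hence $y_j\cdot B_\x\subset L$, and symmetrically $x_i\cdot B_\y\subset L$. Two consequences follow: (a) since $B_\x\cap B_\y=B_\x\cdot B_\y$ for monomial ideals in disjoint variables and $B_\y\subset\langle y_0,\ldots,y_{n_y}\rangle$, we obtain $B_\x\cap B_\y\subset L$, so the Mayer--Vietoris cross-term vanishes; (b) $B_\x/(L\cap B_\x)$ is $\y$-annihilated, hence concentrated in $\y$-degree zero. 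Moreover every element of $I_{m-1}$ is a sum of products $g_if_i$ with $f_i$ bilinear and so lies in $\bigoplus_{c,d\geq 1}R_{(c,d)}$, which forces every monomial of $L$ to have positive $\y$-degree; hence $L\cap k[\x]=0$. Combining this with (b) shows that $B_\x\setminus L$ coincides with $B_\x^{\circ}:=\langle\mathsf{Monomials}^\x_{m-n_y-2}(n_y+1)\rangle\subset k[\x]$. A direct monomial count based on the factorization $\HS_{k[\x]/B_\x^{\circ}}(t_1)=(1-t_1)^{\ell_x}Q_x(t_1)/(1-t_1)^{n_x+1}$ then gives $D\cdot\HS_{(L+B_\x)/L}=(1-t_2)^{n_y+1}[1-(1-t_1)^{\ell_x}Q_x(t_1)]$, and symmetrically for $B_\y$. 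Summing yields the required formula for $T_i$.

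\textbf{Main obstacle.} The technical heart of the proof is the annihilation $y_j\cdot B_\x\subset L$, extracted from Proposition \ref{aff} via Theorem \ref{minorsGB}. Once this is in hand, both the decoupling of $T_i$ into an $\x$-part and a $\y$-part and the vanishing of the Mayer--Vietoris cross-term follow directly. The remaining steps are the elementary univariate Hilbert-series identity $\HS_{B_\x^{\circ}}(t_1)=[1-(1-t_1)^{\ell_x}Q_x(t_1)]/(1-t_1)^{n_x+1}$ (a standard count in $k[\x]$) and the handling of the edge cases $\ell_x\leq 0$ or $\ell_y\leq 0$, where one of the summands degenerates to zero in accordance with the stated closed form.
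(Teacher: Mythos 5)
Your proof is correct and follows essentially the same route as the paper's: the same multiplication-by-$f_m$ exact sequence giving the recurrence $N_m=(1-t_1t_2)N_{m-1}+T_m$ (the paper's Lemma \ref{lemmeH}), the same key annihilation $y_j\cdot B_\x\subset\LM(I_{m-1})$ extracted from Proposition \ref{aff}, and the same univariate monomial count for $k[\x]/B_\x^{\circ}$ (the paper's Lemma \ref{lemmeGeneratrice}). Your only deviation is presentational: you pass to leading-monomial ideals via Definition \ref{defbireg} and use an explicit Mayer--Vietoris decomposition where the paper argues through the $F_5$ reductions-to-zero and Theorem \ref{genericity}, which is arguably a cleaner formalization of the same computation.
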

We decompose the proof of this theorem into a sequence of lemmas.\\

If $I$ is an ideal of $R$ and $f$ is a polynomial, we denote by $\bar{f}$ the equivalence class of $f$ in $R/I$ and 
$$\ann_{R/I}(f)=\{v\in R/I : v \bar{f}=0 \},$$
$$\ann_{R/I}(f)_{\alpha,\beta}=\{v\in R/I \text{ of bi-degree }(\alpha,\beta) : v \bar{f}=0\}.$$
If $I$ is a bihomogeneous ideal and $f$ is a bihomogeneous polynomial, we use the following notation:
$$G_{I,f}(t_1,t_2)=\sum_{(\alpha,\beta)\in\mathbb{N}^2}\dim (\ann_{R/I}(f)_{\alpha,\beta}) t_1^\alpha t_2^\beta.$$
\begin{lem}\label{lemmeH}
Let $f_1,\ldots,f_m\in R$ be bihomogeneous polynomials, with $1<m\leq n_x+n_y$.
Let $(d_1,d_2)$ be the bi-degree of $f_m$.
Then 
$$\HS_{I_m}(t_1,t_2)=(1-t_1^{d_1} t_2^{d_2}) \HS_{I_{m-1}}+ t_1^{d_1} t_2^{d_2} G_{I_{m-1},f}(t_1,t_2).$$
\end{lem}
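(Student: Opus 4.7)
The plan is to extract this recursion from the standard four-term exact sequence associated with multiplication by $f_m$ on the quotient $R/I_{m-1}$, taking care of the bi-grading shifts.

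First I would consider the $R$-module homomorphism
$$\mu_{f_m} : R/I_{m-1} \longrightarrow R/I_{m-1}, \qquad \bar{g} \longmapsto \overline{f_m\, g}.$$
Its kernel is, by definition, $\ann_{R/I_{m-1}}(f_m)$, and its image is $(I_{m-1}+\langle f_m\rangle)/I_{m-1} = I_m/I_{m-1}$, so the cokernel is $R/I_m$. This yields the four-term exact sequence
$$0 \longrightarrow \ann_{R/I_{m-1}}(f_m) \longrightarrow R/I_{m-1} \stackrel{\mu_{f_m}}{\longrightarrow} R/I_{m-1} \longrightarrow R/I_m \longrightarrow 0.$$
Since $f_m$ is bihomogeneous of bi-degree $(d_1,d_2)$, the map $\mu_{f_m}$ raises bi-degree by $(d_1,d_2)$. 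To obtain a sequence of bigraded modules with degree-preserving maps, I would apply the bigraded shift $(-d_1,-d_2)$ to the two left-hand terms, giving
$$0 \longrightarrow \ann_{R/I_{m-1}}(f_m)(-d_1,-d_2) \longrightarrow (R/I_{m-1})(-d_1,-d_2) \stackrel{\mu_{f_m}}{\longrightarrow} R/I_{m-1} \longrightarrow R/I_m \longrightarrow 0.$$

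Next I would use the fact that the Hilbert bi-series is additive on short exact sequences of bigraded modules, so the alternating sum of the Hilbert bi-series of the four terms is zero. The shift by $(-d_1,-d_2)$ multiplies the bi-series by $t_1^{d_1}t_2^{d_2}$. Recalling the notational convention $\HS_I(t_1,t_2) = \sum_{\alpha,\beta}\dim(R_{\alpha,\beta}/I_{\alpha,\beta})\, t_1^{\alpha}t_2^{\beta}$ (so $\HS_I$ is the Hilbert bi-series of $R/I$), and the definition of $G_{I_{m-1},f_m}$, the identity becomes
$$t_1^{d_1}t_2^{d_2}\, G_{I_{m-1},f_m}(t_1,t_2) \;-\; t_1^{d_1}t_2^{d_2}\, \HS_{I_{m-1}}(t_1,t_2) \;+\; \HS_{I_{m-1}}(t_1,t_2) \;-\; \HS_{I_m}(t_1,t_2) \;=\; 0.$$
Rearranging yields the claimed formula.

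The only real subtlety is justifying that the sequence above is exact as bigraded modules and that the Hilbert bi-series is well defined and additive; this is routine since all relevant bigraded components $R_{\alpha,\beta}$, $I_{\alpha,\beta}$, and $\ann_{R/I_{m-1}}(f_m)_{\alpha,\beta}$ are finite-dimensional $k$-vector spaces. No step poses a genuine obstacle here; the content of the lemma is really just bookkeeping for the degree shift $(d_1,d_2)$, which is why I would carry out the argument in the order above rather than trying to match bi-degree coefficients by hand.
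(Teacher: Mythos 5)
Your proof is correct and follows essentially the same route as the paper: the four-term exact sequence induced by multiplication by $f_m$ on $R/I_{m-1}$, followed by additivity of the Hilbert bi-series. The only cosmetic difference is that you package the degree bookkeeping as a bigraded shift $(-d_1,-d_2)$, whereas the paper writes out the exact sequence of bigraded components $(\alpha,\beta)\to(\alpha+d_1,\beta+d_2)$ and sums the alternating dimensions directly.
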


\noindent{\emph{Proof.}}
We have the following exact sequence:
$$0\rightarrow \ann_{R/I_{m-1}}(f)\xrightarrow{\varphi_1} R/I_{m-1}
\xrightarrow{\varphi_2} R/I_{m-1}\xrightarrow{\varphi_3}
R/I_m\rightarrow 0.$$ where $\varphi_1$ and $\varphi_3$ are the
canonical inclusions, and $\varphi_2$ is the multiplication by $f_m$.

From this exact sequence of ideals, we can deduce an exact sequence of
vector spaces:
$$0\rightarrow (\ann_{R/I_{m-1}}(f))_{\alpha,\beta}\xrightarrow{\varphi_1}
\left (\frac{R}{I_{m-1}}\right )_{\alpha,\beta}
\xrightarrow{\varphi_2} \left (\frac{R}{I_{m-1}}\right
)_{\alpha+d_1,\beta+d_2}\xrightarrow{\varphi_3} \left
  (\frac{R}{I_{m}}\right )_{\alpha+d_1,\beta+d_2}\rightarrow 0.$$ Thus
the alternate sum of the dimensions of vector spaces of an exact
sequence is $0$:
$$\begin{array}{l}
  \dim((\ann_{{R}/{I_{m-1}}}(f))_{\alpha,\beta})-\dim\left (\left (\frac{R}{I_{m-1}}\right )_{\alpha,\beta}\right )+\\\dim\left (\left (\frac{R}{I_{m-1}}\right )_{\alpha+d_1,\beta+d_2}\right )-\dim\left (\left (\frac{R}{I_{m}}\right )_{\alpha+d_1,\beta+d_2}\right )=0.
\end{array}$$
By multiplying this relation by $t_1^\alpha t_2^\beta$ and by summing
over $(\alpha,\beta)$, we obtain the claimed recurrence:
$$\HS_{I_m}(t_1,t_2)=(1-t_1^{d_1} t_2^{d_2}) \HS_{I_{m-1}}+
t_1^{d_1} t_2^{d_2} G_{I_{m-1},f}(t_1,t_2).$$
\hfill$\square$
\begin{lem}\label{lemmeGeneratrice}
Let $f_1,\ldots,f_m\in R$ be a bi-regular bilinear sequence, with $m\leq
n_x+n_y$. Then, for all $2\leq i\leq m$,
$$G_{I_{i-1},f_i}(t_1,t_2)=g^{(i-1)}_x(t_1)+g^{(i-1)}_y(t_2),$$ where
$$g^{(i-1)}_x(t)=\begin{cases} 0\text{ if }i\leq
n_y\\ \frac{1}{(1-t)^{n_x+1}}-\sum_{1\leq j\leq
  n_y+1}\frac{{{i-1-j}\choose{n_y+1-j}}
  t^{n_y+1-j}}{(1-t)^{n_x+n_y-i+2}}
\end{cases}.$$
$$g^{(i-1)}_y(t)=\begin{cases}
0\text{ if }i\leq n_x\\
\frac{1}{(1-t)^{n_y+1}}-\sum_{1\leq j\leq n_x+1}\frac{{{i-1-j}\choose{n_x+1-j}} t^{n_x+1-j}}{(1-t)^{n_x+n_y-i+2}}
\end{cases}.$$
\end{lem}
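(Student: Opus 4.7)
The plan is to use bi-regularity to identify $G_{I_{i-1},f_i}$ with the Hilbert bi-series of a monomial ideal quotient, and then to show that this quotient is supported only in the ``pure'' bi-degrees $(\alpha,0)$ and $(0,\beta)$, where the Hilbert series reduces to a direct combinatorial count. Setting $L=\LM(I_{i-1})$ and using that Hilbert bi-series are preserved under passage to leading monomials, the short exact sequence of Lemma \ref{lemmeH} yields
$$G_{I_{i-1},f_i}(t_1,t_2)=\sum_{\alpha,\beta\geq 0}\dim\bigl((M_x+M_y+L)/L\bigr)_{\alpha,\beta}\,t_1^\alpha t_2^\beta,$$
where the bi-regularity identity $\LM(I_{i-1}:f_i)=M_x+M_y+L$ is applied with $M_x=\langle\mathsf{Monomials}^\x_{i-n_y-2}(n_y+1)\rangle$ and $M_y=\langle\mathsf{Monomials}^\y_{i-n_x-2}(n_x+1)\rangle$.

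The central step is to prove that every monomial in $M_x$ of positive $y$-degree lies in $L$ (with a symmetric statement for $M_y$). Given a minimal generator $m_0\in\mathsf{Monomials}^\x_{i-n_y-2}(n_y+1)$, Theorem \ref{minorsGB} produces a $k$-linear combination $h\in k[x_0,\ldots,x_{n_x}]$ of maximal minors of $\jac_\y(F_{i-1})$ with $\LM(h)=m_0$. Theorem \ref{correct} gives $h\in I_{i-1}:f_i$, and Proposition \ref{aff} (whose proof applies to a single maximal minor and extends to $k$-linear combinations by linearity) yields $y_j h\in I_{i-1}$, hence $y_j m_0=\LM(y_j h)\in L$. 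Any monomial $m\in M_x$ of positive $y$-degree factors as $m=(y_j m_0)\cdot m'$ for some generator $m_0$ and some monomial $m'$, so $m\in L$ by ideal closure.

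Next, since $I_{i-1}$ is generated by bilinear forms all lying in $X\cap Y$, one has $L\subset X\cap Y$; consequently $L_{\alpha,0}=L_{0,\beta}=0$. Furthermore, $M_x$ contains no elements of bi-degree $(0,\beta)$, and $M_y$ none of bi-degree $(\alpha,0)$. Combined with the central step, the quotient $(M_x+M_y+L)/L$ is supported only in the pure bi-degrees, with $\dim(\cdot)_{\alpha,0}=\dim(M_x\cap k[x_0,\ldots,x_{n_x}])_\alpha$ and $\dim(\cdot)_{0,\beta}=\dim(M_y\cap k[y_0,\ldots,y_{n_y}])_\beta$. Hence $G_{I_{i-1},f_i}(t_1,t_2)$ splits as the sum of two univariate Hilbert series of monomial ideals, one in $k[x_0,\ldots,x_{n_x}]$ and one in $k[y_0,\ldots,y_{n_y}]$.

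Each such series is computed via a tensor product decomposition. For $M_x\cap k[x_0,\ldots,x_{n_x}]$, writing $k[x_0,\ldots,x_{n_x}]=k[x_0,\ldots,x_{i-n_y-2}]\otimes_k k[x_{i-n_y-1},\ldots,x_{n_x}]$ yields a quotient with Hilbert series $\bigl(\sum_{s=0}^{n_y}\binom{s+i-n_y-2}{i-n_y-2}t^s\bigr)/(1-t)^{n_x+n_y-i+2}$; subtracting from $1/(1-t)^{n_x+1}$ and re-indexing via $j=n_y+1-s$ (using $\binom{a}{b}=\binom{a}{a-b}$) reproduces the stated $g_x^{(i-1)}(t)$, and the symmetric computation gives $g_y^{(i-1)}(t)$. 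The main obstacle is the central step: bi-regularity is purely a leading-monomial condition, and converting it into the polynomial-level inclusion $y_j m_0\in L$ requires invoking the Jacobian/max-minor machinery of Section 3 (Theorems \ref{correct} and \ref{minorsGB}, and Proposition \ref{aff}).
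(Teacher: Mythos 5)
Your proof follows essentially the same route as the paper's: both reduce $G_{I_{i-1},f_i}$ to counting the monomials of $\LM(I_{i-1}:f_i)\setminus\LM(I_{i-1})$, use the bi-regularity identity together with Proposition \ref{aff} (via the maximal-minor machinery) to show that only the pure $x$- and $y$-monomials survive modulo $\LM(I_{i-1})$, and finish with the same tensor-product/partition count of monomials divisible by a degree-$(n_y+1)$ monomial in $x_0,\ldots,x_{i-n_y-2}$. Your ``central step'' inherits the same genericity caveat as the paper's own argument (which invokes Theorem \ref{genericity}, hence implicitly Conjecture \ref{conjec}, to realize the generators of $M_x$ as leading monomials of combinations of maximal minors), so the proposal is correct to the same standard as the original proof.
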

\noindent{\em Proof.}
Saying that $v\in \ann_{R/I_{i-1}}(f_i)$ is equivalent to saying that
the row with signature $(\LM(v),f_i)$ is not detected by the classical $F_5$ criterion. 
According to Theorem \ref{genericity}, if the system is bi-regular, the reductions to zero corresponding to non-trivial syzygies are exactly:

$$\bigcup_{i=n_x+2}^m \{(t, f_i) : t\in\mathsf{Monomials}^\y_{i-n_x-2}(n_x+1)\} \bigcup_{i=n_y+2}^m \{(t, f_i) : t\in\mathsf{Monomials}^\x_{i-n_y-2}(n_y+1)\}.$$
By Proposition \ref{aff}, we know that if $P\in
k[x_0,\ldots,x_{n_x}]\cap (I_{i-1}:f_i)$
(resp. $k[y_0,\ldots,y_{n_y}]\cap (I_{i-1}:f_i)$), then $\forall j,
y_j P\in I_{i-1}$ (resp. $x_j P\in I_{i-1}$). Thus
$G_{I_{i-1},f_i}(t_1,t_2)$ is the generating bi-series of the
monomials of $k[x_0,\ldots,x_{n_x}]$ which are a multiple of a
monomial of degree $n_y+1$ in $x_0,\ldots,x_{i-n_y-2}$ and of the
monomials of $k[y_0,\ldots,y_{n_y}]$ which are a multiple of a
monomial of degree $n_x+1$ in $y_0,\ldots,y_{i-n_x-2}$. Denote by
$g^{(i-1)}_x(t)$ (resp. $g^{(i-1)}_y(t)$) the generating series of the
monomials of $k[x_0,\ldots,x_{n_x}]$ (resp. $k[y_0,\ldots,y_{n_y}]$)
which are a multiple of a monomial of degree $n_y+1$ (resp. $n_x+1$)
in $x_0,\ldots,x_{i-n_y-2}$ (resp. $y_0,\ldots,y_{i-n_x-2}$).  Then we
have
$$G_{I_{i-1},f_i}(t_1,t_2)=g^{(i-1)}_x(t_1)+g^{(i-1)}_y(t_2).$$
Next we use combinatorial techniques to give an explicit form of $g^{(i-1)}_x(t)$ and $g^{(i-1)}_y(t)$.
Let $c(t)$ denote the generating series of the monomials of $k[x_{i-n_y-1},\ldots,x_{n_x}]$: 
$$c(t)=\sum_{j=0}^\infty {{n_x+n_y-i+j+1}\choose j} t^j= \frac{1}{(1-t)^{n_x+n_y-i+2}}.$$ Let $B_j$ denote the number of monomials of $k[x_0,\ldots,x_{i-n_y-2}]$ of degree $j$. Then
$$\frac{1}{(1-t)^{n_x+n_y+2}}=c(t)+B_1 c(t)+\dots+B_{n_y} c(t) +g^{(i-1)}_x(t).$$
Since $B_j={{i-n_y-1+j}\choose j}$, we can conclude:
$$g^{(i-1)}_x(t)=\begin{cases} 0\text{ if }i\leq
n_y\\ \frac{1}{(1-t)^{n_x+1}}-\sum_{1\leq j\leq
  n_y+1}\frac{{{i-1-j}\choose{n_y+1-j}}
  t^{n_y+1-j}}{(1-t)^{n_x+n_y-i+2}}
\end{cases}.$$
\hfill $\square$

\noindent{\em Proof of Theorem \ref{theohilb}.} 
Since the polynomials are bilinear, by Lemma \ref{lemmeH}, we have
$$\HS_{I_i}(t_1,t_2)=(1-t_1 t_2) \HS_{I_{i-1}}+ t_1 t_2 G_{I_{i-1},f_i}(t_1,t_2).$$
Lemma \ref{lemmeGeneratrice} gives the value of $G_{I_{i-1},f_i}(t_1,t_2)$.
To initiate the recurrence, we need
$$\HS_{I_0}(t_1,t_2)=\HS_{\langle 0\rangle}(t_1,t_2)=\frac{1}{(1-t_1)^{n_x+1} (1-t_2)^{n_y+1}}.$$
Then we can obtain the claimed form of the bi-series by solving the recurrence:
$$\HS_{I_i}(t_1,t_2)=\frac{N_i(t_1,t_2)}{(1-t_1)^{n_x+1} (1-t_2)^{n_y+1}}$$
 $$N_i(t_1,t_2)=(1-t_1 t_2)^i + \sum_{j=0}^{m-1}t_1 t_2 (1-t_1 t_2)^{j} G_{I_j,f_{j+1}}(t_1,t_2).$$

\hfill $\square$

\section{Towards complexity results}

\subsection{A multihomogeneous $F_5$ Algorithm}
We now describe how it is possible to use the multihomogeneous
structure of the matrices arising in the Matrix $F_5$ Algorithm to
speed-up the computation of a Gröbner basis. In order to have simple
notations, the description is made in the context of bihomogeneous
systems, but it can be easily transposed in the context of
multihomogeneous systems.

Let $f_1,\ldots,f_m$ be a sequence of bihomogeneous polynomials.  Then
consider the matrices $M_d$ in degree $d$ appearing during the Matrix
$F_5$ Algorithm. One can remark that each row represents a
bihomogeneous polynomial. Let $(d_1,d_2)$ be the bi-degree of one row
of this matrix. Then the only non-zero coefficients on this row are in columns which represent a monomial of bi-degree
$(d_1,d_2)$. Then a possible strategy to use the bihomogeneous
structure is the following:
\begin{itemize}
\item For each couple $(d_1,d_2)$ such that $d_1+d_2=d$, construct the
  matrix $M_{d_1,d_2}$. The rows of this matrix represent the
  polynomials of $M_d$ of bi-degree $(d_1,d_2)$ and the columns represent the monomials of $R_{d_1,d_2}$.
\item Compute the row echelon forms of the matrices
  $M_{d_1,d_2}$. This gives bases of $I_{d_1,d_2}$.
\item The union of the bases gives a basis of $I_d$ since
  $I_d=\bigoplus_{d_1+d_2=d}I_{d_1,d_2}$.
\end{itemize}

This way, instead of computing the row echelon form of a big matrix,
we can decompose the problem and compute independently the row echelon forms of
smaller matrices. This strategy can be extended to multihomogeneous systems. \\

In Table \ref{table:mhomF5}, the execution time and the memory usage
of this multihomogeneous variant of $F_5$ are compared to the
classical homogeneous Matrix $F_5$ Algorithm for computing a
$D$-Gröbner basis for random bihomogeneous systems (for the grevlex
ordering). Both implementations are made in \texttt{Magma2.15-7}. The
experimental results have been obtained with a Xeon processor 2.50GHz
cores and 20 GB of RAM. We are aware that we should compare efficient
implementations of these two algorithms to have a more precise
evaluation of the speed-up we can expect for practical
applications. However, these experiments give a first estimation of that
speed-up. Furthermore, we can also expect to save a lot of memory by decomposing
the Macaulay matrix into smaller matrices. This is crucial for
practical applications, since untractability is often due to the lack of memory.

\begin{center}
\begin{table}
\begin{tabular}{|c|c|c|c|c|c|c|c|c|c|}
\cline{6-9}
\multicolumn{5}{c|}{}&\multicolumn{2}{|c|}{\textbf{Multihomogeneous}}&\multicolumn{2}{|c|}{Homogeneous}&\multicolumn{1}{|c}{}\\
\hline
$n_x$&$n_y$&$m$&bidegree& $D$&time&memory&time&memory&speed-up\\
\hline
\hline
$3$&$4$&$7$&$(1,1)$&$6$&\textbf{16.9s}&\textbf{30MB}&265.7s&280MB&\textbf{16} \\
$3$&$4$&$7$&$(1,1)$&$7$&\textbf{105s}&\textbf{92MB}&2018s&1317MB&\textbf{19} \\ 
$4$&$4$&$8$&$(1,1)$&$7$&\textbf{582s}&\textbf{275MB}&13670s&4210MB&\textbf{23} \\
$5$&$4$&$9$&$(1,1)$&$7$&\textbf{3343s}&\textbf{957MB}&66371s&12008MB&\textbf{20} \\
$5$&$5$&$10$&$(1,1)$&$6$&\textbf{645s}&\textbf{435MB}&10735s&4330MB&\textbf{17} \\ 
\hline
$2$&$2$&$4$&$(1,2)$&$10$&\textbf{11.4s}&\textbf{19MB}&397s&299MB&\textbf{35} \\
$2$&$2$&$4$&$(1,2)$&$8$&\textbf{1.7s}&\textbf{10MB}&16s&52MB&\textbf{9} \\ 
$3$&$3$&$6$&$(1,2)$&$8$&\textbf{67s}&\textbf{80MB}&1146s&983MB&\textbf{17} \\
$4$&$4$&$8$&$(1,2)$&$8$&\textbf{2222s}&\textbf{1031MB}&40830s&12319MB&\textbf{63} \\
\hline
$2$&$2$&$4$&$(2,2)$&$11$&\textbf{29s}&\textbf{27MB}&899s&553MB&\textbf{31} \\
$3$&$3$&$6$&$(2,2)$&$8$&\textbf{27s}&\textbf{47MB}&277s&452MB&\textbf{10} \\
$3$&$3$&$6$&$(2,2)$&$9$&\textbf{152s}&\textbf{154MB}&2380s&1939MB&\textbf{16} \\
$3$&$4$&$7$&$(2,2)$&$9$&\textbf{1034s}&\textbf{505MB}&18540s&7658MB&\textbf{18} \\
$4$&$4$&$8$&$(2,2)$&$8$&\textbf{690s}&\textbf{385MB}&7260s&4811MB&\textbf{11} \\
$4$&$4$&$8$&$(2,2)$&$9$&\textbf{6355s}&\textbf{2216MB}&---&$>$20000MB&--- \\
\hline
\end{tabular}
\caption{Execution time and memory usage of the multihomogeneous variant of $F_5$}
\label{table:mhomF5}
\end{table}
\end{center}

\subsection{A theoretical complexity analysis in the bilinear case}
In this section, we provide a theoretical explanation of the speed-up
observed when using the bihomogeneous structure of bilinear
systems. To estimate the complexity of the Matrix $F_5$ Algorithm, we
consider that the cost is dominated by the cost of the reductions of
the matrices with the highest degree. By using the new criterion
described in Section \ref{sectionalgo}, all the matrices appearing during the
computations have full rank for generic inputs (these ranks are the dimensions of the $k$-vector spaces $I_{d_1,d_2}$).  We consider that the complexity of reducing
a $r\times c$ matrix with Gauss elimination is
$\mathcal{O}(r^2 c)$.
Thus the complexity of computing a $D$-Gröbner basis with the usual Matrix $F_5$ Algorithm and the extended criterion for a bilinear system of $m$ equations over $k[x_0,\ldots,x_{n_x},y_0,\ldots,y_{n_y}]$ is
$$T_{hom}=C_1 \left(\left ({{D+n_x+n_y+1}\choose {D}}-[t^D]\HS(t,t)\right )^2 {{D+n_x+n_y+1}\choose D}\right).$$
When using the multihomogeneous structure, the complexity becomes:
$$T_{multihom}=\displaystyle{C_2\left(\sum_{\tiny\begin{array}{c} d_1+d_2=D\\1\leq d_1,d_2\leq D-1\end{array}} \left (\dim(R_{d_1,d_2}) -[t_1^{d_1} t_2^{d_2}]\HS(t_1,t_2)\right )^2 \dim(R_{d_1,d_2})\right),}$$
where $\dim(R_{d_1,d_2})={{d_1+n_x}\choose {d_1}}{{d_2+n_y}\choose {d_2}}$.
Thus the theoretical speed-up that we expect is:
$$speedup_{th}=C_3 F(n_x,n_y,m,D)$$
where $C_3=\frac{C_1}{C_2}$ is a constant and
$$F(n_x,n_y,m,D)= \left(\frac{\left({{D+n_x+n_y+1}\choose {D}}-[t^D]\HS(t,t)\right)^2 {{D+n_x+n_y+1}\choose D}}{\displaystyle\sum_{\tiny\begin{array}{c} d_1+d_2=D\\1\leq d_1,d_2\leq D-1\end{array}}\left(\dim(R_{d_1,d_2})-[t_1^{d_1} t_2^{d_2}]\HS(t_1,t_2)\right)^2 \dim(R_{d_1,d_2})}\right).$$

Now let us compare this theoretical speed-up with the one observed in practice.\\

\begin{center}
\begin{tabular}{|c|c|c|c|c|c|}
\hline
$n_x$&$n_y$&$m$& $D$&$\begin{array}{c}experimental\\speed-up\end{array}$&$F(n_x,n_y,m,D)$\\
\hline
\hline
$3$&$4$&$7$&$6$&\textbf{16}&\textbf{29}\\
$3$&$4$&$7$&$7$&\textbf{19}&\textbf{34}\\
$4$&$4$&$8$&$7$&\textbf{23}&\textbf{34}\\
$5$&$4$&$9$&$7$&\textbf{20}&\textbf{32}\\
$5$&$5$&$10$&$6$&\textbf{17}&\textbf{27}\\

\hline
\end{tabular}
\end{center}~\\

We can see in this table that, in practice, experimental results match the theoretical complexity: 
$$speedup\approx 0.6 F(n_x,n_y,m,D).$$

\subsection{Structure of generic affine bilinear systems}
In this section, we show that generic \emph{affine} bilinear systems have a
particular structure: they are regular (Definition
\ref{def:regular}). Consequently, the usual $F_5$ criterion removes all
reductions to zero.

\begin{prop}\label{affreg}
Let $S$ be the set of affine bilinear systems over $k[x_1,\ldots,
  x_{n_x},y_1,\ldots, y_{n_y}]$ with $m\leq n_x+n_y$ equations. Then the subset $$\{(f_1,\ldots, f_m)\in S~~:~~(f_1,\ldots, f_m)\text{ is a regular sequence}\}$$ contains a Zariski nonempty open subset of $S$.
\end{prop}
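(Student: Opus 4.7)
The plan is to proceed by induction on $m$, leveraging the Cohen-Macaulayness of the affine polynomial ring $R^{\mathrm{aff}} = k[x_1, \ldots, x_{n_x}, y_1, \ldots, y_{n_y}]$. The key preliminary observation is that the space $\mathcal{BL}^{\mathrm{aff}}$ of affine bilinear polynomials contains all constants: in the dehomogenization, the bihomogeneous monomial $x_{n_x} y_{n_y}$ (whose variables get set to $1$) produces the pure constant term of $f_i$. Consequently, for any proper ideal $P \subsetneq R^{\mathrm{aff}}$, the subspace $\mathcal{BL}^{\mathrm{aff}} \cap P$ is a proper linear subspace of $\mathcal{BL}^{\mathrm{aff}}$.

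For the base case $m = 1$, a generic affine bilinear polynomial is nonzero, and since $R^{\mathrm{aff}}$ is an integral domain, $f_1$ is not a zero-divisor. For the inductive step, I would assume the statement for $m-1$, so that for a generic $(f_1, \ldots, f_{m-1})$ the sequence is regular; then $I_{m-1} = \langle f_1, \ldots, f_{m-1}\rangle$ is a complete intersection of height $m-1$ in the Cohen-Macaulay ring $R^{\mathrm{aff}}$. By Macaulay's unmixedness theorem, every associated prime $P_1, \ldots, P_r$ of $I_{m-1}$ is minimal and of height $m-1$. Because $m - 1 < m \leq n_x + n_y = \dim R^{\mathrm{aff}}$, each $P_j$ is a proper ideal, so by the observation above, $\mathcal{BL}^{\mathrm{aff}} \not\subset P_j$. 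Hence the locus $\{f_m \in \mathcal{BL}^{\mathrm{aff}} : f_m \notin P_1 \cup \cdots \cup P_r\}$ is the complement of a finite union of proper linear subspaces, hence a non-empty Zariski-open subset. For any $f_m$ in this set, $f_m$ is not a zero-divisor modulo $I_{m-1}$, and the inductive step is complete.

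The main obstacle will be turning this step-by-step argument into a single Zariski-open subset of the joint parameter space $\mathcal{BL}^{\mathrm{aff},m}$, rather than a sequence of nested open sets depending on previous choices. To do this I would invoke the standard characterization that $(f_1, \ldots, f_m)$ is regular if and only if the associated Koszul complex is acyclic in positive homological degrees; this is a determinantal (hence constructible) condition, and its being generically satisfied implies that it is Zariski-open. Equivalently, one can use upper-semicontinuity of the Hilbert function in flat families to see that the locus where $\mathrm{codim}(I_m) = m$ is Zariski-open, and by Cohen-Macaulayness of $R^{\mathrm{aff}}$ this codimension-$m$ locus coincides exactly with the regular-sequence locus. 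Either route promotes the inductive fibre-wise genericity to a single Zariski-open set, completing the proof.
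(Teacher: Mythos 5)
Your proof is correct in substance but takes a genuinely different route from the paper's. The paper deduces Proposition \ref{affreg} from its bihomogeneous structure theory: assuming $g\notin I_{i-1}$ and $g f_i\in I_{i-1}$, it bi-homogenizes, invokes Theorem \ref{genericity} (bi-regularity of generic bilinear sequences, which in the paper is conditional on Conjecture \ref{conjec}) to place $g^h$ in $k[x_0,\ldots,x_{n_x}]$ or $k[y_0,\ldots,y_{n_y}]$, applies Proposition \ref{aff} to get $y_{n_y}g^h\in\langle f_1^h,\ldots,f_{i-1}^h\rangle$, and dehomogenizes to reach a contradiction. You instead work entirely in the affine ring and isolate exactly the feature that separates the affine from the bihomogeneous situation: the dehomogenized family contains the constants (the image of $x_{n_x}y_{n_y}$), so it is not contained in any proper prime, in particular in any associated prime of $I_{m-1}$; combined with Macaulay unmixedness for the complete intersection $I_{m-1}$ this gives the fibre-wise genericity. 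Your argument is unconditional and self-contained, which is a real advantage given that the paper's route passes through a conjectural statement; what it does not give you is the finer description of the colon ideals that the paper's machinery yields as a by-product. Two caveats on your final step. First, the inference ``constructible and generically satisfied, hence Zariski-open'' is not valid --- a dense constructible set need not be open --- but it does \emph{contain} a nonempty Zariski-open set, which is all the proposition asserts, so this slip is harmless once constructibility is actually established. Second, of your two ways of promoting the fibre-wise statement to a single open set in the product space, the dimension-semicontinuity route is the one to prefer: in a polynomial ring a proper ideal generated by $m$ elements and of height $m$ is automatically generated by a regular sequence (Krull plus unmixedness), whereas the Koszul criterion in the non-graded, non-local setting only detects the grade of the ideal and needs that same Cohen--Macaulay argument to be converted into regularity of the given sequence.
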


\noindent{\em Proof.}
Let $(f_1,\ldots, f_m)$ be a generic affine bilinear system. Assume
that it is not regular. Then for some $i$, there exists $g\in R$ such
that $g\notin I_{i-1}$ and $g f_i\in I_{i-1}$. Denote by $g^h$ the
bi-homogenization of $g$. Then $g^h \in \langle
f^h_1,\ldots,f^h_{i-1}\rangle : f^h_i$.  $(f^h_1,\ldots,f^h_m)$ is a generic
bilinear system, hence it is bi-regular (Theorem
\ref{genericity}). Thus $g^h\in k[x_0,\ldots, x_{n_x}]$ or $g^h\in
k[y_0,\ldots, y_{n_y}]$. Let us suppose that $g^h\in k[x_0,\ldots,
  x_{n_x}]$ (the proof is similar if $g^h\in k[y_0,\ldots,
  y_{n_y}]$). Therefore $y_{n_y} g^h\in \langle
f^h_1,\ldots,f^h_{i-1}\rangle$ when the system is bi-regular
(Proposition \ref{aff}). By putting $x_{n_x}=1$ and $y_{n_y}=1$, we see that
in this case, $g\in I_{i-1}$, which yields a contradiction. This shows
that generic affine bilinear systems are regular. \begin{flushright} $\square$\end{flushright}

\subsection{Degree of regularity of affine bilinear systems}
In this part, $m$, $n_x$ and $n_y$ are three integers such that
$m=n_x+n_y$. We consider a system of bilinear polynomials
$F=(f_1,\ldots,f_m)\in
k[x_0,\ldots,x_{n_x},y_0,\ldots,y_{n_y}]^m$. $\vartheta$ denotes the deshomogenization morphism:
$$\begin{array}{ccc}
k[x_0,\ldots,x_{n_x},y_0,\ldots,y_{n_y}]&\longrightarrow&k[x_0,\ldots,x_{n_x-1},y_0,\ldots,y_{n_y-1}]\\
f(x_0,\ldots,x_{n_x},y_0,\ldots,y_{n_y})&\longmapsto&f(x_0,\ldots,x_{n_x-1},1,y_0,\ldots,y_{n_y-1},1)\end{array}.$$

Also, $I$ stands for the ideal $\langle f_1,\ldots,f_m\rangle$ and
$\vartheta(I)$ denotes the ideal $\langle
\vartheta(f_1),\ldots,\vartheta(f_m)\rangle$.  In the following, we
suppose without loss of generality that $n_x\leq n_y$.  We also assume
in this part of the paper that the characteristic of $k$ is $0$
(although the results remain true when the
characteristic is large enough).

The goal of this section is to give an upper bound on the so-called
\emph{degree of regularity} of an ideal $I$ generated by a generic
affine bilinear system with $m$ equations and $m$ variables. The
degree of regularity is a crucial indicator of the complexity of
Gröbner bases algorithms: for $0$-dimensional ideals, it is the
lowest integer $d_{reg}$ such that all monomials of degree $d_{reg}$
are in $\LM(I)$ (see \cite{bardet2005}). As a consequence, the degrees of all polynomials occurring
in the $F_5$ algorithm are lower than $d_{reg}+1$. In the following, $\prec$ still denotes the grevlex ordering.

\begin{lem}\label{lem:shapex}
If the system $F$ is generic, then there exists polynomials $g_0,\ldots,g_{n_x-1}\in k[y_0,\ldots,y_{n_y-1}]$ such that
$$\forall j \in \{0,\ldots,n_x-1\}, x_j - g_j(y_0,\ldots, y_{n_y-1})\in \vartheta(I).$$
\end{lem}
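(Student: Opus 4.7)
The plan is to prove that, for generic $F$, the natural ring homomorphism
$\varphi \colon k[y_0,\ldots,y_{n_y-1}]/\bigl(\vartheta(I)\cap k[y_0,\ldots,y_{n_y-1}]\bigr)\longrightarrow R/\vartheta(I)$
is an isomorphism. Once this is established, each $x_j\in R/\vartheta(I)$ has a preimage under $\varphi$, and any lift to $k[y_0,\ldots,y_{n_y-1}]$ gives the required polynomial $g_j$.

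The starting point is Euler's formula: since every $f_i$ is bihomogeneous of bidegree $(1,1)$, one has $\jac_\x(F)\cdot\X = F$. Dehomogenizing (i.e.\ setting $x_{n_x}=y_{n_y}=1$) gives the identity
$\vartheta(\jac_\x(F))\cdot [x_0,\ldots,x_{n_x-1},1]^t \equiv 0 \pmod{\vartheta(I)},$
and every entry of $\vartheta(\jac_\x(F))$ is an affine linear polynomial in $y_0,\ldots,y_{n_y-1}$ alone (the $x$'s have been eliminated since $f_i$ is linear in $\x$). Writing $\vartheta(\jac_\x(F))=[A\mid c]$ with $A$ the $m\times n_x$ matrix formed by the first $n_x$ columns, this is the affine system $A\cdot [x_0,\ldots,x_{n_x-1}]^t \equiv -c \pmod{\vartheta(I)}$ with coefficients in $k[y_0,\ldots,y_{n_y-1}]$.

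The heart of the argument is to show that the projection $\pi\colon V(\vartheta(I))\to \bar k^{n_y}$, $(x,y)\mapsto y$, is injective for generic $F$. Generically: (a) $\vartheta(I)$ is $0$-dimensional (Proposition \ref{equidim} applied to $m=n_x+n_y$ equations in $m$ affine variables); (b) $\vartheta(I)$ is radical (Lemma \ref{lem:radical}); and (c) $A$ has column rank $n_x$ as a matrix over $k(y)$, so the locus $Z=\{y^*\in\bar k^{n_y}\colon \operatorname{rank}A(y^*)<n_x\}$ is a proper closed subvariety of $\bar k^{n_y}$. If $(x^{(1)},y^*),(x^{(2)},y^*)\in V(\vartheta(I))$, then $A(y^*)(x^{(1)}-x^{(2)})=0$, so when $y^*\notin Z$ we deduce $x^{(1)}=x^{(2)}$. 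Hence $\pi$ is injective as soon as the finite set $\pi(V(\vartheta(I)))$ avoids $Z$.

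The conclusion then follows from the Chinese Remainder Theorem: radicality and $0$-dimensionality yield $R/\vartheta(I)\cong \prod_{p\in V(\vartheta(I))} k$ and $k[y]/(\vartheta(I)\cap k[y])\cong \prod_{y^*\in\pi(V(\vartheta(I)))} k$; injectivity of $\pi$ means both products have the same cardinality, so the injective map $\varphi$ is a bijection. Surjectivity of $\varphi$ then provides each $g_j$. The main obstacle is step (c) combined with the requirement that $\pi(V(\vartheta(I)))\cap Z=\emptyset$ on a nonempty Zariski-open subset of bilinear systems: the set $Z$ and the set $\pi(V(\vartheta(I)))$ both depend on $F$, so one must argue by an incidence-variety / dimension-counting argument (or by exhibiting one explicit bilinear system for which injectivity of $\pi$ holds) that the locus of ``bad'' $F$'s is a proper subvariety of $\mathcal{BL}(n_x,n_y)^m$.
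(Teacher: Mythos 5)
Your strategy --- proving that the natural map $k[y_0,\ldots,y_{n_y-1}]/\bigl(\vartheta(I)\cap k[y_0,\ldots,y_{n_y-1}]\bigr)\to R/\vartheta(I)$ is an isomorphism and reading off each $g_j$ from surjectivity --- is sound and genuinely different from the paper's, which is constructive: there one applies Cramer's rule to the affine system $A\cdot[x_0,\ldots,x_{n_x-1}]^t+B\equiv 0 \pmod{\vartheta(I)}$ to produce $x_j\det(A^{(1)})-\tilde g_j\in\vartheta(I)$ for a fixed $n_x\times n_x$ submatrix $A^{(1)}$ of $A$, and then inverts $\det(A^{(1)})$ modulo $\vartheta(I)\cap k[y_0,\ldots,y_{n_y-1}]$ via the Nullstellensatz, using Lemmas \ref{lem:degproj} and \ref{lem:elimJac} to identify that elimination ideal with the ideal of maximal minors. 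Your route avoids having to arrange (by a generic linear change of the equations) that a single submatrix is invertible at every point of the projection, at the price of being non-constructive.

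The gap you flag at the end, however, does not require an incidence-variety or dimension-counting argument over the space of systems; it has a one-line resolution that also makes your degeneracy locus $Z$ unnecessary. For any $y^*\in\pi(V(\vartheta(I)))$, the fiber $\pi^{-1}(y^*)$ is exactly the solution set of the affine linear system $A(y^*)\cdot x=-c(y^*)$, since each $\vartheta(f_i)$ is affine-linear in the $x$-variables once $y$ is specialized. That solution set is an affine subspace; it is nonempty because $y^*$ lies in the image of $\pi$, and it is finite because $V(\vartheta(I))$ is finite. A nonempty finite affine subspace is a single point, so $\pi$ is injective --- and, as a byproduct, $A(y^*)$ has full column rank, so $\pi(V(\vartheta(I)))\cap Z=\emptyset$ holds automatically. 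This is precisely the fiber argument used in the paper's proof of Lemma \ref{lem:degproj}, so you may simply adapt it. With that step supplied, your counting argument goes through, with the minor caveat that over a non-algebraically closed $k$ the quotients are products of residue fields rather than copies of $k$; one should compare $k$-vector space dimensions after base change to $\bar k$ (radicality being preserved since the characteristic is $0$).
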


\noindent \emph{Proof. }
We consider the $m\times n_x$ matrix $A=\jac_\x(\vartheta(F))$ and the vector
$$B=\begin{pmatrix}\vartheta(f_1)(0,\ldots,0,y_0,\ldots,y_{n_y-1})&\dots&\vartheta(f_m)(0,\ldots,0,y_0,\ldots,y_{n_y-1})\end{pmatrix}.$$
Thus $A\cdot \begin{pmatrix}
x_0\\\vdots\\x_{n_x-1}
\end{pmatrix} + B=\begin{pmatrix}
\vartheta(f_1)\\\vdots\\\vartheta(f_m)
\end{pmatrix}.$

We denote by $\{A^{(i)}\}$ all the $n_x \times n_x$ sub-matrices of $A$.\\
Let $(\alpha_0,\ldots,\alpha_{n_y-1})\in Var(\langle \mathsf{MaxMinors}(\vartheta(\jac_\x(F)))\rangle)$ be an element of the variety.
Let $A_\mathbf{\alpha}$ (resp. $B_\mathbf{\alpha}$) denote the matrix $A$ (resp. $B$) where $y_i$ has been substituted by $\alpha_i$ for all $i$.
Since $\vartheta(I)$ is 0-dimensional, the affine linear system
$$A_\mathbf{\alpha}\cdot \begin{pmatrix}
x_0\\\dots\\x_{n_x-1}
\end{pmatrix}  +B_\mathbf{\alpha}=0 $$
has a unique solution. Therefore, the matrix $A_\mathbf{\alpha}$ is of full rank. Consequently, there exists an invertible $n_y\times n_y$ sub-matrix of $A_\mathbf{\alpha}$.

\smallskip

Since $k$ is infinite, we can suppose without loss of generality that,
if the system is generic, then for all $\alpha$, the matrix
$A_\alpha^{(1)}$ obtained by considering the $n_y$ first columns of
$A_\alpha$ is invertible (if $A_\alpha^{(1)}$ is not invertible, just replace the original bilinear
system by an equivalent system where each new equation is a generic linear
combination of the original equations). Thus $\det(A_\alpha^{(1)})\neq
0$.

\smallskip

According to Lemma \ref{lem:degproj} and \ref{lem:elimJac}, $\langle
\mathsf{MaxMinors}(\vartheta(\jac_\x(F)))\rangle=\langle\vartheta(f_1),\ldots,\vartheta(f_m)\rangle\cap
k[y_0,\ldots,y_{n_y-1}]$. Thus $\det(A^{(1)})$ (i.e. the matrix of the
$n_y$ first columns of $A$) does not vanish on any elements of the
variety of $\vartheta(I)$. Therefore, the Nullstellensatz says that
$\det(A^{(1)})$ is invertible in
$k[y_0,\ldots,y_{n_y-1}]/(\vartheta(I)\cap k[y_0,\ldots,y_{n_y-1}])
$. Let $h$ denotes its inverse. We know from Cramer's rule that there
exists polynomials $g_j\in k[y_0,\ldots,y_{n_y-1}]$ such that
$$x_j \det(A^{(1)})- g_j(y_0,\ldots,y_{n_y-1})\in \vartheta(I).$$
Multiplying this relation by $h$, we obtain:
$$x_j-h g_j(y_0,\ldots,y_{n_y-1}) \in \vartheta(I).\hspace{1cm}\square$$
\begin{thm}\label{thm:degregaff}
If the system $F$ is generic, then the degree of regularity of $\vartheta(I)$ is upper bounded by
$$d_{reg}\leq \min(n_x+1,n_y+1).$$
\end{thm}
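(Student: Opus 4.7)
We may assume $n_x\leq n_y$, so the claim is $d_{reg}(\vartheta(I))\leq n_x+1$: every monomial of total degree $n_x+1$ must lie in $\LM_{\prec}(\vartheta(I))$. Set $J:=\vartheta(I)\cap k[\y]$. The strategy is to transfer the problem from $\vartheta(I)$ to $J$ via Lemma~\ref{lem:shapex}, and then exploit the $\x\leftrightarrow\y$ symmetry to control $J$.

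By Lemma~\ref{lem:shapex}, there exist $g_0,\dots,g_{n_x-1}\in k[\y]$ with $x_j-g_j\in\vartheta(I)$. The substitution $\sigma:x_j\mapsto g_j,\ y_i\mapsto y_i$ induces a surjection $k[\y]\twoheadrightarrow k[\x,\y]/\vartheta(I)$ with kernel $J$, giving the isomorphism $k[\x,\y]/\vartheta(I)\cong k[\y]/J$. For any monomial $\mu$ of degree $n_x+1$ in $k[\x,\y]$, iterating these substitutions rewrites $\mu$ as $\mu-G\in\vartheta(I)$ for some $G\in k[\y]$. Assuming the auxiliary bound $d_{reg}(J)\leq n_x+1$ in $k[\y]$ (for the induced grevlex ordering), one can further reduce $G$ modulo $J$ to a polynomial $\tilde G\in k[\y]$ of total degree at most $n_x$. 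Then $\mu-\tilde G\in\vartheta(I)$ satisfies $\LM_{\prec}(\mu-\tilde G)=\mu$, since $\deg(\tilde G)<\deg(\mu)$.

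The heart of the argument is therefore to prove $d_{reg}(J)\leq n_x+1$. Because $m=n_x+n_y$ is symmetric in $n_x,n_y$, applying Lemma~\ref{lem:shapex} with $\x$ and $\y$ exchanged yields $h_i\in k[\x]$ with $y_i-h_i\in\vartheta(I)$; substituting $x_j\mapsto g_j(\y)$ produces additional relations $y_i-h_i(g_0(\y),\dots,g_{n_x-1}(\y))\in J$. Combined with the degree equality $\dim_k k[\y]/J=\deg(\vartheta(I))=\binom{n_x+n_y}{n_x}$ (Corollary~\ref{coro:degaff}) and Galligo's theorem---which guarantees that $\LM_{\prec}(J)$ is Borel-fixed for generic $F$---it suffices to exhibit the single smallest degree-$(n_x+1)$ monomial $y_{n_y-1}^{n_x+1}$ in $\LM_{\prec}(J)$; Borel-fixedness then propagates this membership to every degree-$(n_x+1)$ monomial of $k[\y]$.

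The main obstacle is the precise control of degrees: both $g_j$ and $h_i$ come from Cramer's rule followed by inversion modulo an elimination ideal, so their a priori degrees are much larger than the $n_x+1$ we want. The Borel-fixed reduction above is exactly what sidesteps this difficulty, replacing a global degree-tracking argument by the membership of a single monomial, which can in turn be verified by a Hilbert-function-style count inside $k[\y]/J$.
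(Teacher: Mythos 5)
Your first half coincides with the paper's argument: Lemma~\ref{lem:shapex} rewrites any degree-$(n_x+1)$ monomial $\mu$ as $\mu-G$ with $G\in k[\y]$ and $\mu-G\in\vartheta(I)$, and the whole theorem then reduces to showing that $G$ can be replaced modulo $J=\vartheta(I)\cap k[y_0,\dots,y_{n_y-1}]$ by a polynomial of degree at most $n_x$, i.e.\ that \emph{every} monomial of degree $n_x+1$ in the $y$-variables lies in $\LM(J)$. That reduction is correct. The problem is that this last statement is the entire content of the theorem, and your proposal does not prove it. You defer it to (a) Galligo's theorem giving Borel-fixedness of $\LM(J)$, (b) the colength count $\dim_k k[\y]/J=\binom{n_x+n_y}{n_x}$, and (c) the membership $y_{n_y-1}^{n_x+1}\in\LM(J)$, "verified by a Hilbert-function-style count". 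Step (c) is exactly as hard as the original claim and the count cannot deliver it: the colength $\binom{n_x+n_y}{n_x}$ equals the number of monomials of degree $\le n_x$ in $n_y$ variables, but a Borel-fixed Artinian ideal of that colength need not contain all monomials of degree $n_x+1$. For $n_x=1$, $n_y=2$ the ideal $\langle y_0,y_1^3\rangle$ is Borel-fixed of colength $3=\binom{3}{1}$ yet misses $y_1^2$; so counting plus Borel-fixedness is consistent with larger regularity, and you would additionally have to show that no monomial of degree $\le n_x$ lies in $\LM(J)$ -- another unproved genericity statement. Moreover, invoking Galligo requires an argument that genericity of the coefficients of the bilinear system realizes generic coordinates for the elimination ideal $J$; this is plausible via the $GL(n_y)$-equivariance of the construction, but you do not supply it. The auxiliary relations $y_i-h_i(g_0,\dots,g_{n_x-1})\in J$ obtained by symmetry have uncontrolled degree and play no role in closing the gap.

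The paper closes exactly this gap by a different and more concrete mechanism, which you should compare with: by Lemmas~\ref{lem:elimJac} and~\ref{lem:degproj}, $J$ is (generically) the ideal generated by the maximal minors of $\vartheta(\jac_\x(F))$, an $m\times(n_x+1)$ matrix with $m=n_x+n_y$; there are $\binom{m}{n_x+1}$ such minors, each of degree $n_x+1$ in $y_0,\dots,y_{n_y-1}$, and this number equals the number of degree-$(n_x+1)$ monomials in those variables. Lemma~\ref{lemGB1} states that, generically, the square submatrix of the degree-$(n_x+1)$ Macaulay matrix of these minors indexed by those monomials is invertible, so \emph{every} monomial of degree $n_x+1$ is the leading monomial of a linear combination of the minors. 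Taking $\NF$ with respect to this minor ideal then yields $\deg\tilde G\le n_x$ and $\LM(\mu-\tilde G)=\mu$, with no appeal to generic initial ideals. Until you supply a genuine proof of $y_{n_y-1}^{n_x+1}\in\LM(J)$ (or, equivalently, of the statement that the standard monomials of $J$ are exactly those of degree $\le n_x$), your argument is a reduction, not a proof.
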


\noindent{\em Proof.}
We supposed that $n_x\leq n_y$, so we want to prove that $d_{reg}=n_x+1$.
Let $t=\prod_{j=0}^{n_x-1} x_j^{\alpha_j}\prod_{k=0}^{n_y-1} y_k^{\beta_k}$ be a monomial of degree $n_x+1$.
According to Lemma \ref{lem:shapex},
$$t-\prod_{j=0}^{n_x-1} g_j(y_0,\ldots, y_{n_y-1})^{\alpha_j}\prod_{k=0}^{n_y-1} y_k^{\beta_k} \in \vartheta(I).$$
Now consider the normal form with respect to the ideal $\langle \mathsf{MaxMinors}(\vartheta(\jac_\x(F)))\rangle$. Then
$$t-\NF(\prod_{j=0}^{n_x-1} g_j(y_0,\ldots, y_{n_y-1})^{\alpha_j}\prod_{k=0}^{n_y-1} y_k^{\beta_k}) \in \vartheta(I).$$
Since all monomials of degree $n_x+1$ are in $\LM(\langle \mathsf{MaxMinors}(\vartheta(\jac_\x(F)))\rangle)$ (Lemma \ref{lemGB1}), 
$$\deg(\NF(\prod_{j=0}^{n_x-1} g_j(y_0,\ldots, y_{n_y-1})^{\alpha_j}\prod_{k=0}^{n_y-1} y_k^{\beta_k}))<n_x+1.$$
This implies that
$$\LM(t-\NF(\prod_{j=0}^{n_x-1} g_j(y_0,\ldots, y_{n_y-1})^{\alpha_j}\prod_{k=0}^{n_y-1} y_k^{\beta_k}))=t.$$
Therefore, for each monomial $t$ of degree $n_x+1$, $t\in \LM(\vartheta(I))$. This means that $d_{reg}\leq n_x+1$.
\hfill $\square$

\begin{rem}
This bound on the degree of regularity should be compared with the
degree of regularity of a generic quadratic system with $m$ equations
and $m$ variables. The Macaulay bound (see \cite{lazard83}) says that
the degree of regularity of such systems is $m+1$. Since Gröbner bases
algorithms are exponential in the value, it means that affine bilinear
systems are a lot easier to solve than generic affine quadratic
systems. Moreover, the inequality $d_{reg}\leq \min(n_x+1,n_y+1)$ is
sharp: experimentally, it is an equality for random bilinear systems.
\end{rem}
The following Corollary is a consequence of Theorem \ref{thm:degregaff}.
\begin{cor}\label{coro:complaff}
The arithmetic complexity of computing a Gröbner basis of a generic bilinear system $f_1,\ldots, f_{n_x+n_y}\in k[x_0,\ldots,x_{n_x-1},y_0,\ldots,y_{n_y-1}]$ with the $F_5$ Algorithm is upper bounded by
$$O\left({{n_x+n_y+\min(n_x+1,n_y+1)}\choose{\min(n_x+1,n_y+1)}}^\omega\right),$$
where $2\leq \omega\leq 3$ is the linear algebra constant.
\end{cor}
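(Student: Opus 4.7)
\medskip

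\noindent\textbf{Proof plan for Corollary \ref{coro:complaff}.}
The strategy is standard in Gröbner basis complexity analysis: bound the size of the largest Macaulay matrix the algorithm manipulates, then charge the cost of one row echelon form computation on a matrix of that size. I will invoke Theorem \ref{thm:degregaff} as a black box to control the highest degree reached, and Proposition \ref{affreg} to guarantee that the classical $F_5$ criterion suffices (so no subtle accounting of extra reductions is needed in the affine case).

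\medskip

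First I would argue that, by definition of the degree of regularity of a $0$-dimensional ideal, a Gröbner basis of $\vartheta(I)$ with respect to $\prec$ can be read off from the row echelon forms of the Macaulay matrices $\mathcal{M}_{d,m}$ for $d \le d_{\mathrm{reg}}+1$ (one extra degree is needed to certify that all leading monomials are captured; alternatively one can work up to degree $d_{\mathrm{reg}}$ using the standard characterization from \cite{bardet2005}). By Theorem \ref{thm:degregaff}, $d_{\mathrm{reg}} \le D := \min(n_x+1, n_y+1)$ for a generic affine bilinear system.

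\medskip

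Next I would bound the dimensions of these Macaulay matrices. Since we are in the affine ring $k[x_0,\dots,x_{n_x-1},y_0,\dots,y_{n_y-1}]$ on $n_x+n_y$ variables, the number of monomials of degree at most $D$ is
$$\binom{n_x+n_y+D}{D} = \binom{n_x+n_y+\min(n_x+1,n_y+1)}{\min(n_x+1,n_y+1)}.$$
The number of rows of $\mathcal{M}_{d,m}$ is at most the number of columns (Proposition \ref{affreg} ensures that the classical $F_5$ criterion removes all reductions to zero, so every row is useful and the matrix has full row rank, giving a row count bounded by the column count). Computing a row echelon form of an $N \times N$ matrix costs $O(N^\omega)$ operations in $k$, and summing over the $O(D)$ degrees processed by the algorithm is absorbed into the $O(\cdot)$ because the largest degree dominates. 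Hence the total arithmetic cost is
$$O\!\left(\binom{n_x+n_y+\min(n_x+1,n_y+1)}{\min(n_x+1,n_y+1)}^{\!\omega}\right).$$

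\medskip

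The only nontrivial step is justifying that $d_{\mathrm{reg}}$ controls the whole computation, but this is precisely what Theorem \ref{thm:degregaff} provides together with Proposition \ref{affreg}: genericity ensures regularity, regularity ensures no reduction to zero with the classical $F_5$ criterion, and the degree bound ensures the algorithm halts by degree $D$. The rest is a direct size estimate on the Macaulay matrix and the $O(N^\omega)$ cost of Gaussian elimination, which I would not write out in detail. I do not expect any genuine obstacle; the statement is essentially a bookkeeping consequence of the theorem it cites.
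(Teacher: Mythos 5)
Your proposal is correct and follows essentially the same route as the paper: the paper simply cites the standard bound $O\bigl(\binom{n+\mathsf{d_{reg}}}{\mathsf{d_{reg}}}^\omega\bigr)$ from \cite{bardet2005} and substitutes $n=n_x+n_y$ and $\mathsf{d_{reg}}\leq\min(n_x+1,n_y+1)$ from Theorem \ref{thm:degregaff}, while you merely unpack that citation into its usual justification (Macaulay matrix size, full rank via regularity from Proposition \ref{affreg}, and the $O(N^\omega)$ cost of elimination).
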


\begin{proof}
According to \cite{bardet2005}, the complexity of the computation of the Gröbner basis of a $0$-dimensional ideal is upper bounded by 
$$O\left({{n+\mathsf{d_{reg}}}\choose{\mathsf{d_{reg}}}}^\omega\right),$$
  where $n$ is the number of variables and $\mathsf{d_{reg}}$ denotes
  the degree of regularity. In the case of a generic affine bilinear
  system in $k[x_0,\ldots, x_{n_x-1},y_0,\ldots, y_{n_y-1}]$,
  $n=n_x+n_y$ and $\mathsf{d_{reg}}\leq \min(n_x+1,n_y+1)$ (Theorem
  \ref{thm:degregaff}).
\end{proof}

\section{Perspectives and conclusion}
In this paper, we analyzed the structure of ideals generated by
generic bilinear equations. We proposed an explicit description of
their syzygy module. With this analysis, we were able to propose an
extension of the $F_5$ criterion dedicated to bilinear
systems. Furthermore, an explicit formula for the Hilbert bi-series is
deduced from the combinatorics of the syzygy module. With this tool,
we made a complexity analysis of a multihomogeneous variant of the
$F_5$ Algorithm.

We also analyzed the complexity of computing Gröbner bases of affine
bilinear systems. We showed that generic affine bilinear systems are
regular, and we proposed an upper bound for the degree of
regularity of those systems.

Interestingly, properties of the ideals generated by the maximal
minors of the jacobian matrices are especially important. In
particular, a Gröbner basis (for the grevlex ordering) of such an ideal
is a linear combination of the generators. In the affine case, this
ideal permits to eliminate variables.

The next step of this work would be to generalize the results to more
general multihomogeneous systems. For the time being, it is not clear
how the results can be extended. In particular, it would be
interesting to understand the structure of the syzygy module of
general multihomogeneous systems, and to have an explicit formula of
their Hilbert series. Also, having sharp upper bounds on the degree of
regularity of multihomogeneous systems would be important for practical applications.

\nocite{*}
\bibliographystyle{plain}
\bibliography{../Arxiv/bibliobilin}
\appendix
\section{Bihomogeneous ideals}
In this part, we use notations similar to those used in Section \ref{sec:gener}:
\begin{itemize}
\item ${\cal BH}(n_x, n_y)$ the $k$-vector space of bilinear
  polynomials in $k[x_0, \ldots, x_{n_x}, y_0, \ldots, y_{n_y}]$;
\item $X\subset k[x_0, \ldots, x_{n_x}, y_0, \ldots, y_{n_y}]$
  (resp. $Y$) is the ideal $\langle x_0, \ldots, x_{n_x}\rangle$
  (resp. $\langle y_0, \ldots, y_{n_y}\rangle$);
\item An ideal is called \emph{bihomogeneous} if there exists a set of bihomogeneous generators. In particular, ideals spanned by bilinear polynomials are bihomogeneous.
\item $J_i$ denotes the saturated ideal $I_i : (X \cap Y)^\infty$;
\item Given a polynomial sequence $(f_1,\ldots, f_m)$, we denote by
  $Syz_{triv}$ the module of trivial syzygies, i.e. the set of all
  syzygies $(s_1,\ldots, s_m)$ such that $\forall 1\leq i\leq m$,
  $s_i\in \langle f_1,\ldots, f_{i-1},f_{i+1},\ldots, f_m\rangle$;
\item A primary ideal $P\subset R$ is called \emph{admissible} if
  $\langle x_0,\ldots, x_{n_x}\rangle \not\subset\sqrt{P}$ and
  $\langle y_0,\ldots, y_{n_y}\rangle \not\subset\sqrt{P}$;
\item Let $E$ be a $k$-vector space such that $\dim(E)<\infty$. We say that a property $\mathcal P$ is \emph{generic} if it is satisfied on a nonempty open subset of $E$ (for the Zariski topology), i.e. $\exists h\in k[\mathfrak a_1,\ldots,\mathfrak a_{\dim(E)}], h\neq 0$, such that 
$$\mathcal P\text{ does not hold on }(a_1,\ldots,a_{\dim(E)})\Rightarrow h(a_1,\ldots,a_{\dim(E)})=0.$$
\end{itemize}

\begin{prop}[\cite{eldin2006}]\label{prop:bihom}
Let $I$ be an ideal of $R$. The two following assertions are equivalent:
\begin{itemize}
\item $I$ is \emph{bihomogeneous}.
\item For all $h\in I$, every bihomogeneous component of $h$ is in $I$.
\end{itemize}
\end{prop}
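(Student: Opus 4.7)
The plan is to prove the two implications separately, treating this as a bigraded analogue of the standard result that an ideal in a graded ring is homogeneous if and only if it is closed under taking homogeneous components.

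For the direction $(1) \Rightarrow (2)$: I would start from a set of bihomogeneous generators $g_1, \ldots, g_s$ of $I$ (which exists by hypothesis), and take an arbitrary $h \in I$, writing $h = \sum_i q_i g_i$ for some $q_i \in R$. The key step is to decompose each $q_i$ into its bihomogeneous components, $q_i = \sum_{a,b} q_i^{(a,b)}$, so that $h = \sum_i \sum_{a,b} q_i^{(a,b)} g_i$. Since $g_i$ is bihomogeneous of some bidegree $(d_{i,1}, d_{i,2})$, the product $q_i^{(a,b)} g_i$ is bihomogeneous of bidegree $(a+d_{i,1}, b+d_{i,2})$. Collecting the summands of equal bidegree then yields the unique bihomogeneous decomposition of $h$, and each component is a finite sum of elements of the form $q_i^{(a,b)} g_i$, all of which lie in $I$. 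Hence every bihomogeneous component of $h$ belongs to $I$.

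For the direction $(2) \Rightarrow (1)$: I would take any generating set $\{h_1, \ldots, h_s\}$ of $I$ (e.g.\ one provided by the Hilbert basis theorem applied to $R$) and decompose each $h_i$ into its bihomogeneous components $h_i = \sum_{a,b} h_i^{(a,b)}$. By hypothesis, every $h_i^{(a,b)}$ lies in $I$. The new (finite, since only finitely many components are nonzero) set $\{h_i^{(a,b)} : 1 \le i \le s, (a,b) \in \N^2\}$ generates an ideal $J \subseteq I$ that contains each $h_i$, so $J = I$. This exhibits $I$ as an ideal generated by bihomogeneous polynomials, proving it is bihomogeneous.

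There is no serious obstacle here; the argument is essentially the bigraded version of the classical homogeneous statement. The only point requiring a modicum of care is the uniqueness of the bihomogeneous decomposition of a polynomial in $R$, which follows from $R = \bigoplus_{(a,b) \in \N^2} R_{a,b}$ as a $k$-vector space (the monomials $x^\alpha y^\beta$ form a basis and each sits in a single $R_{a,b}$). This uniqueness is what guarantees that the bihomogeneous components collected in the first implication are unambiguously defined and that the finite set assembled in the second implication actually generates the original $h_i$.
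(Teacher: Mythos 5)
Your proof is correct: both implications are the standard bigraded analogue of the classical statement that an ideal is homogeneous iff it contains the homogeneous components of each of its elements, and the key point you flag (uniqueness of the decomposition, coming from $R=\bigoplus_{(a,b)}R_{a,b}$) is exactly what makes the bookkeeping work. Note that the paper itself offers no proof of this proposition --- it is quoted from an external reference --- so there is nothing to compare against; your argument is a perfectly adequate self-contained justification.
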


\begin{lem}[\cite{eldin2006}]\label{altDef}
Let $f_1,\ldots,f_m\in
R$ be polynomials, and $I_m=\cap P_l$ be a minimal primary decomposition of $I_m$ and let
$Adm$ be the set of the admissible ideals of the decomposition.  Then
$J_m=\cap_{P\in Adm}P$.
\end{lem}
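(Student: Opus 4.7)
The plan is to recognize this as a routine saturation-vs-primary-decomposition statement and handle it by the standard commutative algebra machinery, then reduce the admissibility condition to a primeness argument. First, I would invoke the classical fact that if $I = \bigcap_l Q_l$ is a minimal primary decomposition with associated primes $P_l = \sqrt{Q_l}$, and $\mathfrak a$ is any ideal, then
\[
I : \mathfrak a^\infty \;=\; \bigcap_{\mathfrak a \not\subset P_l} Q_l,
\]
which follows from the pointwise identity $Q_l : \mathfrak a^\infty = Q_l$ when $\mathfrak a \not\subset P_l$ and $Q_l : \mathfrak a^\infty = R$ when $\mathfrak a \subset P_l$ (for large powers, by the definition of primary). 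Applied to $\mathfrak a = X \cap Y$, this rewrites $J_m$ as the intersection of those $Q_l$ whose associated prime $P_l$ does not contain $X \cap Y$.

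Next I would verify that the indexing set just obtained coincides exactly with the admissible components. The nontrivial direction uses that each $P_l$ is prime: the inclusion $X \cdot Y \subset X \cap Y \subset P_l$ combined with primeness of $P_l$ forces $X \subset P_l$ or $Y \subset P_l$, i.e. $Q_l$ is non-admissible. Conversely, if $X \subset P_l$ or $Y \subset P_l$, then trivially $X \cap Y \subset P_l$. Hence
\[
X \cap Y \not\subset P_l \iff X \not\subset \sqrt{Q_l} \text{ and } Y \not\subset \sqrt{Q_l} \iff Q_l \in \mathrm{Adm}.
\]

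Substituting this equivalence into the saturation formula of the first step yields
\[
J_m \;=\; I_m : (X \cap Y)^\infty \;=\; \bigcap_{Q_l \in \mathrm{Adm}} Q_l,
\]
which is the claim. There is no real obstacle here: the only subtle point is the primeness argument showing that $X \cap Y \subset P_l$ collapses to $X \subset P_l$ or $Y \subset P_l$, and this is a one-line consequence of $XY \subset X \cap Y$. The rest is a direct application of a textbook fact about how primary decomposition interacts with ideal saturation, so the proof should fit in a few lines once the above two observations are recorded.
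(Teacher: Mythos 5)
Your proof is correct: the saturation formula $I:\mathfrak a^\infty=\bigcap_{\mathfrak a\not\subset P_l}Q_l$ for a minimal primary decomposition, combined with the observation that $XY\subset X\cap Y\subset P_l$ and primeness force $X\subset P_l$ or $Y\subset P_l$, is exactly the standard argument, and the equivalence with the paper's definition of admissibility is handled properly. The paper itself gives no proof of this lemma (it only cites an external reference), so there is nothing to compare against; your write-up supplies the expected textbook justification without any gap.
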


\begin{prop}\label{prop:divsat}
  let $f_1,\ldots,f_m\in R$ be polynomials with $m\leq n_x+n_y$, and
  $Ass(I_{i-1})$ be the set of prime ideals associated to
  $I_{i-1}$. The following assertions are equivalent:
\begin{enumerate}
\item $\forall 2\leq i\leq m, f_i$ is not a divisor of $0$ in $R/J_{i-1}$.
\item $\forall 2\leq i\leq m, (f_i\in P, P\in Ass(I_{i-1}))\Rightarrow P\text{ is non-admissible}$. 
\end{enumerate}
\end{prop}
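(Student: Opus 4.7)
The plan is to identify the associated primes of the saturated ideal $J_{i-1}$ and then apply the standard characterization of zero divisors. Two classical facts underlie the argument: (i) in a Noetherian ring, an element $f$ is a zero divisor modulo an ideal $\mathfrak{b}$ if and only if $f$ lies in some prime in $\mathrm{Ass}(\mathfrak{b})$; and (ii) if $\mathfrak{b}=\bigcap_l Q_l$ is a minimal primary decomposition and $S$ is any subset of the indices, then $\bigcap_{l \in S} Q_l$ is again a minimal primary decomposition of its own intersection. Fact (ii) is easy: the non-redundancy condition for the full decomposition gives, for each $l \in S$, an element of $\bigcap_{j \neq l} Q_j$ not lying in $Q_l$, and the same element belongs a fortiori to the larger partial intersection $\bigcap_{j \in S, j \neq l} Q_j$.

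First I would fix a minimal primary decomposition $I_{i-1}=\bigcap_l P_l$. By Lemma \ref{altDef}, $J_{i-1}=\bigcap_{P_l \in Adm} P_l$, where $Adm$ collects the admissible primary components. By fact (ii), this is still a minimal primary decomposition of $J_{i-1}$, so $\mathrm{Ass}(J_{i-1})$ equals $\{\sqrt{P_l} : P_l \in Adm\}$, which is precisely the set of admissible primes in $\mathrm{Ass}(I_{i-1})$ (recall that for a prime $P$ one has $\sqrt{P}=P$, so admissibility for $P$ just means $X \not\subset P$ and $Y \not\subset P$).

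Combining these observations with fact (i) yields the key equivalence: $f_i$ is a zero divisor in $R/J_{i-1}$ if and only if $f_i$ belongs to some admissible $P \in \mathrm{Ass}(I_{i-1})$. Quantifying over $i \in \{2,\ldots,m\}$ and taking the contrapositive gives the desired equivalence $(1)\Leftrightarrow(2)$: the negation of (1) for some $i$ is the existence of an admissible $P \in \mathrm{Ass}(I_{i-1})$ containing $f_i$, which is exactly the negation of (2) for that $i$.

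I do not foresee any real obstacle. The only step that merits a moment of care is the stability of minimality under deletion of primary components, which is a direct consequence of the definition of a minimal primary decomposition; all other ingredients are textbook.
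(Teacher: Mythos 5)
Your argument is correct and is exactly the reasoning the paper leaves implicit: its entire proof reads ``It is a straightforward consequence of Lemma \ref{altDef}.'' You have simply filled in the standard steps (associated primes of the partial intersection, zero divisors as the union of associated primes), so there is nothing to object to.
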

\noindent{\em Proof.}
It is a straightforward consequence of Lemma \ref{altDef}.
\hfill $\square$

\begin{rem}
All results in this section can be generalized to
multihomogeneous systems. Since we focus on bilinear systems in this
paper, we describe them in this more restrictive context.
\end{rem}

\begin{lem}\label{lem:genadm}
Let $P$ be an admissible prime ideal of $R$. The set of bilinear polynomials $f\in R$ such that $f\notin P$ contains a Zariski nonempty open set.
\end{lem}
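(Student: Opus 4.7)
The plan is to exhibit an explicit bilinear polynomial not in $P$, and then observe that the subset $P \cap \mathcal{BL}(n_x,n_y)$ is a linear subspace of the finite-dimensional vector space $\mathcal{BL}(n_x,n_y)$, so its complement is Zariski open as soon as it is nonempty.

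First I would unpack the admissibility hypothesis. By definition, $X \not\subset \sqrt{P}$ and $Y \not\subset \sqrt{P}$. Since $P$ is prime we have $\sqrt{P} = P$, and since $X = \langle x_0,\ldots,x_{n_x}\rangle$ is generated by the $x_i$, the containment $X \subset P$ would force every $x_i$ to lie in $P$. Hence there exist indices $i_0$ and $j_0$ with $x_{i_0}\notin P$ and $y_{j_0}\notin P$. Primality of $P$ then yields the bilinear polynomial $x_{i_0}y_{j_0}\notin P$, so $P \cap \mathcal{BL}(n_x,n_y)$ is a proper $k$-linear subspace of $\mathcal{BL}(n_x,n_y)$.

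Next I would translate this into a Zariski-open statement. Fix a basis of $\mathcal{BL}(n_x,n_y)$ given by the bilinear monomials $x_p y_q$, and write a generic bilinear polynomial as $f = \sum_{p,q}\mathfrak{a}_{p,q}\,x_p y_q$ with coordinates $\mathfrak{a}_{p,q}$. The condition $f\in P$ is a system of linear equations in the $\mathfrak{a}_{p,q}$ (obtained, for instance, by writing $f$ modulo $P$ in terms of a $k$-basis of the quotient). Since this subspace is proper by the previous step, at least one of those linear forms does not vanish identically on $\mathcal{BL}(n_x,n_y)$. The nonvanishing locus of that linear form is a nonempty Zariski-open subset on which $f\notin P$.

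There is essentially no obstacle here: the argument reduces to (i) using admissibility and primality to produce one explicit witness $x_{i_0}y_{j_0}\notin P$, and (ii) the standard fact that the complement of a proper linear subspace in a finite-dimensional affine space is Zariski-open and dense. The only subtlety worth flagging is that one must use primality (not just that $P$ be radical) to pass from $x_{i_0},y_{j_0}\notin P$ to $x_{i_0}y_{j_0}\notin P$, which is exactly why admissibility is formulated with $\sqrt{P}$ for general primary ideals but suffices here because $P$ itself is prime.
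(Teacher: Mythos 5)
Your proof is correct and follows essentially the same route as the paper's: both produce the witness $x_{i_0}y_{j_0}\notin P$ from admissibility plus primality, and both observe that the condition $f\in P$ is a linear (hence polynomial) condition on the coefficients of $f$ — the paper phrases this via the normal form $\NF_P(f)=\sum_t h_t(\mathfrak a)\,t$, you via a $k$-basis of $R/P$, which is the same thing. No gaps.
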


\noindent{\em Proof.}
Let $f$ be the generic bilinear polynomial 
$$f=\sum_{j,k} \mathfrak a_{j,k} x_j y_k$$
in $k(\{\mathfrak a_{j,k}\}_{0\leq j\leq n_x, 0\leq k\leq
  n_y})[x_0,\ldots,x_{n_x},y_0,\ldots,y_{n_y}]$. Since $P$ is
admissible, there exists $x_{j_0} y_{k_0}$ such that $x_{j_0} y_{k_0}
\notin P$ (this shows the non-emptiness). Let $\prec$ be an admissible
order. Then consider the normal form for this order
$$\NF_P(f)=\sum_{t\text{
    monomial}}h_{t}(\mathfrak a_{0,0}\ldots, \mathfrak a_{n_x,n_y}) t.$$
where the $h_t$'s are polynomials.
Thus, if a polynomial $\tilde{f}\in R$ is in $P$, then its coefficients are in the variety of the polynomial system $\forall t, h_{t}(\mathfrak a_{0,0},\ldots, \mathfrak a_{n_x,n_y})=0$.
\hfill $\square$

\begin{thm}\label{generbireg}
Let $m,n_x,n_y\in \mathbb{N}$ such that $m\leq n_x+n_y$. Then the set of bilinear systems
$f_1,\ldots, f_m$ such that $\forall 2\leq i\leq m, f_i$ is not a divisor of $0$ in $R/J_{i-1}$ contains a Zariski nonempty open subset.
\end{thm}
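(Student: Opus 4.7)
The proof proceeds by induction on $m$. By Proposition~\ref{prop:divsat}, the non--zero-divisor condition is equivalent to asking that each $f_i$ lie outside every admissible associated prime of $I_{i-1}$. This recasts the problem into a form directly suited to Lemma~\ref{lem:genadm}.

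The base case $m=1$ is vacuous, since there is no $i$ with $2 \le i \le 1$; one may take $\mathcal{BH}(n_x,n_y)$ itself. For the inductive step, let $U_{m-1} \subset \mathcal{BH}(n_x,n_y)^{m-1}$ be a non-empty Zariski open subset provided by the inductive hypothesis. For a fixed $(f_1, \ldots, f_{m-1}) \in U_{m-1}$, the Noetherianity of $R$ yields only finitely many admissible associated primes $P_1, \ldots, P_s$ of $I_{m-1}$, and Lemma~\ref{lem:genadm} gives, for each $P_j$, a non-empty Zariski open subset of $\mathcal{BH}(n_x,n_y)$ of bilinear polynomials avoiding $P_j$; a finite intersection of non-empty opens in an irreducible space remains non-empty open. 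Thus, \emph{fiberwise} over $U_{m-1}$, a generic $f_m$ has the required property.

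The real task is to promote this fiberwise genericity into a non-empty Zariski open subset of $\mathcal{BH}(n_x,n_y)^m$. For this I would pass to the universal setting: let $K = k(\mathfrak{a})$ where $\mathfrak{a}$ is a tuple of indeterminates parametrizing the coefficients of $f_1, \ldots, f_{m-1}$, form the generic ideal $\langle F_1, \ldots, F_{m-1}\rangle \subset K[\mathbf{x}, \mathbf{y}]$, and enumerate its (finitely many) admissible associated primes $\mathfrak{P}_1, \ldots, \mathfrak{P}_r$. The proof of Lemma~\ref{lem:genadm} transfers verbatim over $K$, producing for each $\mathfrak{P}_j$ a non-zero polynomial $h_j(\mathfrak{a}, \mathfrak{b}) \in k[\mathfrak{a}, \mathfrak{b}]$ (after clearing denominators in $\mathfrak{a}$) whose non-vanishing at a specialization $(\mathbf{a}, \mathbf{b})$ guarantees that $f_m$ does not lie in the specialization of $\mathfrak{P}_j$. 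The joint non-vanishing locus of the $h_j$'s, intersected with $U_{m-1} \times \mathcal{BH}(n_x,n_y)$, is the candidate $U_m$.

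The main obstacle is to guarantee that avoidance of each \emph{generic} admissible prime $\mathfrak{P}_j$ implies, after specialization, avoidance of \emph{every} admissible associated prime of the specialized ideal $I_{m-1}$; otherwise additional admissible primes could appear in the special fiber and spoil the argument. This is the standard compatibility of associated primes with flat base change: by generic flatness (EGA~IV), the family $k[\mathfrak{a}][\mathbf{x},\mathbf{y}]/\langle F_1, \ldots, F_{m-1}\rangle$ is flat over a non-empty Zariski open subset $W \subset \mathcal{BH}(n_x,n_y)^{m-1}$, and over $W$ the admissible associated primes of the specialization are exactly the specializations of the admissible $\mathfrak{P}_j$'s. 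Intersecting $W$ with $U_{m-1}$ and with the non-vanishing locus of $\prod_j h_j$ then produces the desired $U_m$, completing the induction.
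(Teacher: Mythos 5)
Your proof follows the same route as the paper's: induction on $m$, reduction (via Proposition~\ref{prop:divsat} and Lemma~\ref{altDef}) to avoiding the finitely many admissible associated primes of $I_{m-1}$, and a finite intersection of the nonempty Zariski open sets supplied by Lemma~\ref{lem:genadm}. The only difference is that the paper stops at the fiberwise statement (for each fixed good $(f_1,\ldots,f_{m-1})$ the good $f_m$ form a nonempty open subset of ${\cal BH}(n_x,n_y)$), whereas you add a universal-parameter/generic-flatness argument to promote this to an open subset of the product ${\cal BH}(n_x,n_y)^m$; this extra step genuinely tightens the argument, with the caveat that controlling the admissible associated primes under specialization requires a bit more than flatness alone (namely the standard fact that the primary decomposition of the generic fiber specializes correctly over a dense open subset of the parameter space).
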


\noindent{\em Proof.}
We prove the Theorem by recurrence on $m$.
Suppose $\forall 2\leq i\leq m-1$, $f_i$ is not a divisor of $0$ in $R/J_{i-1}$. We prove that the set of bilinear polynomials $f$ such that $f$ is not a divisor of  $0$ in $R/J_{m-1}$ contains a nonempty Zariski open subset. According to Lemma \ref{lem:genadm}, for each admissible prime
ideal $P\in Ass(I_{m-1})$, the set $\mathcal{O}_P=\{f\notin P\}$
contains a nonempty Zariski open subset. Thus $\bigcap_P
\mathcal{O}_P$ contains a nonempty Zariski subset (since the
intersection of a finite number of nonempty Zariski open subsets is a nonempty Zariski open subset). Therefore, the set of bilinear polynomials $f$ which are not divisor of $0$ in $R/J_{m-1}$ (this set is exactly $\bigcap_P
\mathcal{O}_P$) contains a Zariski nonempty open subset.\hfill $\square$

\begin{prop}\label{equidim}
Let $m\leq n_x+n_y$ and $f_1,\ldots, f_m$ be bilinear polynomials such that $\forall 2\leq i\leq m$, $f_i$ is not a divisor of $0$ in $R/J_{i-1}$. Then $\forall 1\leq i\leq m$, the ideal $J_i$ is equidimensional and its co-dimension is $i$.
\end{prop}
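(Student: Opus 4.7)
The plan is to induct on $i$. The base case $i=0$ is immediate since $J_0 = \langle 0\rangle$ has codimension $0$, so I would assume $J_{i-1}$ is equidimensional of codimension $i-1$ and work to show the same for $J_i$, via two moves: rewrite $J_i$ as a saturation of $J_{i-1}+\langle f_i\rangle$, then combine Krull's principal ideal theorem with the behaviour of saturation under primary decomposition.

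First, I would establish the identity
$$J_i \;=\; \bigl(J_{i-1} + \langle f_i\rangle\bigr):(X\cap Y)^{\infty}.$$
The inclusion $\subseteq$ holds because any $g \in J_i$ satisfies $g\,(X\cap Y)^N \subseteq I_i = I_{i-1}+\langle f_i\rangle \subseteq J_{i-1}+\langle f_i\rangle$ for some $N$; the inclusion $\supseteq$ follows from $I_i \subseteq J_{i-1}+\langle f_i\rangle$ and the monotonicity of saturation. Next, by Proposition~\ref{prop:divsat} the hypothesis that $f_i$ is not a zero divisor modulo $J_{i-1}$ is equivalent to $f_i$ avoiding every associated prime of $J_{i-1}$, in particular every minimal prime. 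Let $\mathfrak{p}$ be a minimal prime of $J_{i-1}+\langle f_i\rangle$. Then $\mathfrak{p}$ contains some minimal prime $\mathfrak{q}$ of $J_{i-1}$, and strictly so (because $f_i\in\mathfrak{p}\setminus\mathfrak{q}$), giving $\operatorname{codim}\mathfrak{p} \geq i$; conversely, Krull's principal ideal theorem applied to the image of $f_i$ in $R/\mathfrak{q}$ forces $\operatorname{codim}(\mathfrak{p}/\mathfrak{q}) \leq 1$, whence $\operatorname{codim}\mathfrak{p} = i$. Thus every minimal prime of $J_{i-1}+\langle f_i\rangle$ has codimension exactly $i$.

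Finally, I would invoke the standard fact that the minimal primes of a saturation $K:L^{\infty}$ are precisely the minimal primes of $K$ that do not contain $L$. Since for a prime ideal, containing $X\cap Y$ is equivalent to containing $X$ or $Y$, the minimal primes of $J_i$ are exactly the admissible minimal primes of $J_{i-1}+\langle f_i\rangle$; all of these have codimension $i$, so $J_i$ is equidimensional of codimension $i$. The main obstacle I anticipate is the bookkeeping in this last step: one must carefully distinguish minimal primes from possible embedded associated primes of $J_{i-1}+\langle f_i\rangle$ (the latter, even if they survive saturation, are irrelevant to equidimensionality) and make sure that at least one admissible minimal prime actually exists under the bound $i \leq n_x+n_y$, so that $J_i$ is a proper ideal of the claimed codimension rather than the unit ideal.
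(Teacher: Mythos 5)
Your proposal follows the same route as the paper's proof: induct, rewrite $J_i$ as the saturation of $J_{i-1}+\langle f_i\rangle$ by $(X\cap Y)^\infty$, deduce from the non-zero-divisor hypothesis that $J_{i-1}+\langle f_i\rangle$ is equidimensional of codimension $i$, and observe that saturation only deletes components without changing the dimension of the survivors; the paper merely asserts the middle step that you substantiate via Krull's principal ideal theorem, so your version is a fleshed-out form of the same argument. One slip to fix: your justification of the inclusion $\supseteq$ in $J_i=(J_{i-1}+\langle f_i\rangle):(X\cap Y)^\infty$ is circular, since monotonicity of saturation applied to $I_i\subseteq J_{i-1}+\langle f_i\rangle$ again yields $J_i\subseteq (J_{i-1}+\langle f_i\rangle):(X\cap Y)^\infty$, i.e.\ the direction you already had. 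For $\supseteq$ you should instead use Noetherianity to get $J_{i-1}\,(X\cap Y)^M\subseteq I_{i-1}$ for some $M$, so that any $g$ with $g\,(X\cap Y)^N\subseteq J_{i-1}+\langle f_i\rangle$ satisfies $g\,(X\cap Y)^{N+M}\subseteq I_i$ and hence $g\in J_i$. The properness concern you raise at the end is legitimate but is also left untreated in the paper; it can be discharged by noting that the biprojective zero set of $I_i$ is the intersection of $i\leq n_x+n_y$ divisors on the $(n_x+n_y)$-dimensional variety $\mathbb{P}^{n_x}\times\mathbb{P}^{n_y}$ and is therefore nonempty, which supplies an admissible minimal prime surviving the saturation.
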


\noindent{\em Proof.}
We prove the Proposition by recurrence on $m$.
\begin{itemize}
\item $J_1=I_1$ is equidimensional and $\mathsf{codim}(I_1)=1$;
\item Suppose that $J_{i-1}$ is equidimensional of co-dimension
  $i-1$. Then $J_i=(J_{i-1}+f_i):(X\cap Y)^\infty$. $f_i$ is not
  divisor of $0$ in $J_{i-1}$ (Theorem \ref{generbireg}), thus $J_{i-1}+f_i$ is equidimensional
  of co-dimension $i$. Next, the saturation does not change the
  dimension of any primary component of a minimal primary
  decomposition of $J_{i-1}+f_i$ (the saturation only removes some
  components). Therefore, $J_i$ is equidimensional and its co-dimension is $i$.
\end{itemize}
\hfill $\square$

\section{Ideals generated by generic affine bilinear systems}
Let $k$ be a field of characteristic $0$, $m=n_x+n_y$, and $\mathfrak a$
be the set
$$\mathfrak a=\{\mathfrak a^{(i)}_{j,k} : 1\leq i\leq m, 0\leq j\leq n_x, 0\leq k\leq n_y\}.$$
We consider generic polynomials $f_1,\ldots, f_m$ in $k(\mathfrak
a)[x_0,\ldots, x_{n_x},y_0,\ldots, y_{n_y}]$:
$$f_i=\sum \mathfrak a^{(i)}_{j,k} x_j y_k$$
and we denote by $I\subset k({\frak a})[x_0, \ldots, x_{n_x}, y_0,
\ldots, y_{n_y}]$ the ideal they generate. In the sequel, $\vartheta$
denotes the deshomogeneization morphism:
$$\begin{array}{ccc}
k[x_0,\ldots,x_{n_x},y_0,\ldots,y_{n_y}]&\longrightarrow&k[x_0,\ldots,x_{n_x-1},y_0,\ldots,y_{n_y-1}]\\
f(x_0,\ldots,x_{n_x},y_0,\ldots,y_{n_y})&\longmapsto&f(x_0,\ldots,x_{n_x-1},1,y_0,\ldots,y_{n_y-1},1)\end{array}.$$
For $\mathbf a\in k^{m (n_x+n_y+2)}$, $\varphi_{\mathbf{a}}$ stands for the specialization:
$$\begin{array}{cccc}
\varphi_{\mathbf{a}}:&k(\mathfrak a)[x_0,\ldots, x_{n_x},y_0,\ldots, y_{n_y}]&\rightarrow&k[x_0,\ldots, x_{n_x},y_0,\ldots, y_{n_y}]\\
&f(\mathfrak a)(x_0,\ldots,  x_{n_x},y_0,\ldots, y_{n_y})&\mapsto&f(\mathbf a)(x_0,\ldots,  x_{n_x},y_0,\ldots, y_{n_y})
\end{array}$$

Also $Var(\varphi_{\mathbf{a}}(I))\subset \mathbb P^{n_x}\times \mathbb P^{n_y}$ (resp. $Var(\vartheta\circ\varphi_{\mathbf{a}}(I))\subset \bar k^{n_x+n_y}$) denotes the variety of $\varphi_{\mathbf{a}}(I)$ (resp. $\vartheta\circ\varphi_{\mathbf{a}}(I)$).

\begin{lem}\label{lem:bezoutaffine}
There exists a nonempty Zariski open set $O_1$ such that if $\mathbf
a\in O_1$, then for all
$(\alpha_0,\dots,\alpha_{n_x},\penalty-1000\beta_0,\ldots,\beta_{n_y})\in
Var(\varphi_{\mathbf{a}}(I))$, $\alpha_{n_x}\neq 0$ and
$\beta_{n_y}\neq 0$. This implies that the application
$$\begin{array}{ccc}
Var(\vartheta\circ\varphi_{\mathbf{a}}(I))&\longrightarrow&Var(\varphi_{\mathbf{a}}(I))\\
(\alpha_0,\dots,\alpha_{n_x-1},\beta_0,\ldots,\beta_{n_y-1})&\longmapsto&(\alpha_0,\dots,\alpha_{n_x-1},1,\beta_0,\ldots,\beta_{n_y-1},1)
\end{array}$$
is a bijection.
\end{lem}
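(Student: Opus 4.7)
The plan is to prove the two claims separately: first the generic non-vanishing of $\alpha_{n_x}$ and $\beta_{n_y}$ on the variety, and then the bijection, which follows almost formally.

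For the first claim, I would reformulate it as asserting that, generically, $Var(\varphi_{\mathbf a}(I))$ does not meet the hyperplane $V(x_{n_x}) \subset \mathbb{P}^{n_x}\times\mathbb{P}^{n_y}$ (and symmetrically $V(y_{n_y})$). To prove this, I would use a standard incidence-variety argument. Let
\[
W_x = \{(\mathbf a, p) : p \in V(x_{n_x})\cap Var(\varphi_{\mathbf a}(I))\} \;\subset\; k^{m(n_x+1)(n_y+1)} \times \bigl(V(x_{n_x})\bigr),
\]
and project $W_x$ onto each factor. Over the second factor, the fiber above a fixed $p \in V(x_{n_x})$ is the set of tuples $\mathbf{a}$ such that each specialized $\varphi_{\mathbf a}(f_i)$ vanishes at $p$. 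Since $p$ is a nonzero bi-projective point, each vanishing condition $\varphi_{\mathbf a}(f_i)(p)=0$ is a linear codimension-$1$ condition on the coefficients of $f_i$, and the $m$ conditions for distinct $f_i$'s are independent. Thus each fiber has codimension exactly $m$ in the coefficient space.

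Since $\dim V(x_{n_x}) = (n_x-1) + n_y = m - 1$, a dimension count gives $\dim W_x = \dim(\text{coefficient space}) - m + (m-1) = \dim(\text{coefficient space}) - 1$. Hence the image of $W_x$ under the first projection is a constructible set of dimension strictly less than that of the coefficient space, so its complement contains a nonempty Zariski open set $O_x$. The same argument with $y_{n_y}$ in place of $x_{n_x}$ yields an open set $O_y$. Setting $O_1 := O_x \cap O_y$ gives the required Zariski open set (nonempty, as a finite intersection of nonempty open sets in an irreducible affine space).

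For the bijection, note that for $\mathbf a \in O_1$ every point $([\alpha],[\beta]) \in Var(\varphi_{\mathbf a}(I))$ satisfies $\alpha_{n_x}\neq 0$ and $\beta_{n_y}\neq 0$, so it admits a unique representative with $\alpha_{n_x}=\beta_{n_y}=1$, namely $(\alpha_0,\dots,\alpha_{n_x-1},1,\beta_0,\dots,\beta_{n_y-1},1)$. By definition of $\vartheta$, such a representative is a zero of all the $\varphi_{\mathbf a}(f_i)$ if and only if $(\alpha_0,\dots,\alpha_{n_x-1},\beta_0,\dots,\beta_{n_y-1})$ is a zero of all the $\vartheta\circ\varphi_{\mathbf a}(f_i)$. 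This establishes the claimed bijection. The main (minor) obstacle is the dimension computation for the incidence variety: one has to check that the $m$ vanishing conditions on the coefficients of $f_1,\dots,f_m$ are genuinely independent, which follows from the fact that $p\neq 0$ guarantees the existence of a monomial $x_jy_k$ (with $j<n_x$) that is nonzero at $p$, so each individual vanishing condition is nontrivial and independent across the distinct $f_i$'s.
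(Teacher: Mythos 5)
Your argument is correct. Note, however, that the paper does not actually prove this lemma: its entire ``proof'' is a citation to van der Waerden's 1928/29 paper on the multihomogeneous B\'ezout theorem, so your incidence-variety argument is a genuine, self-contained replacement rather than a variant of the authors' reasoning. The argument itself is the standard one and all the key points check out: for a fixed biprojective point $p=([\alpha],[\beta])\in V(x_{n_x})$ both $\alpha$ and $\beta$ are nonzero vectors, so each condition $\varphi_{\mathbf a}(f_i)(p)=\sum_{j,k}a^{(i)}_{j,k}\alpha_j\beta_k=0$ is a nontrivial linear form in the coefficients of $f_i$, and the $m$ conditions are independent because they involve pairwise disjoint blocks of coefficient variables; since $\dim V(x_{n_x})=(n_x-1)+n_y=m-1$ (using the standing assumption $m=n_x+n_y$ of this appendix), the fiber-dimension count gives $\dim W_x\le N-1$ where $N$ is the dimension of the coefficient space, and the complement of the closure of the image of $W_x$ is the desired nonempty open set. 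The reduction of the bijection to the non-vanishing of $\alpha_{n_x}$ and $\beta_{n_y}$ is also handled correctly (well-definedness and surjectivity both come from the identity $\vartheta\circ\varphi_{\mathbf a}(f_i)(\alpha',\beta')=\varphi_{\mathbf a}(f_i)(\alpha',1,\beta',1)$ together with the normalization of biprojective representatives). Two cosmetic remarks: the degenerate case $n_x=0$ (where $V(x_{n_x})=\emptyset$ and the claim is vacuous) deserves a word, and if you want the count to match the paper's notation you should reconcile the ambient coefficient space $k^{m(n_x+1)(n_y+1)}$ with the paper's (apparently erroneous) $k^{m(n_x+n_y+2)}$; neither affects the validity of the proof.
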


\noindent{\em Proof.}
See \cite[page 751]{Waerden}.
\hfill $\square$

\begin{lem}\label{lem:radical}
There exists a nonempty Zariski open set $O_2$, such that if $\mathbf
a\in O_2$, then the ideal $\vartheta\circ\varphi_{\mathbf{a}}(I)$ is
radical.
\end{lem}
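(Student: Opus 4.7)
The plan is to apply the generic-smoothness theorem to the total space of the family $\{\vartheta\circ\varphi_{\mathbf{a}}(I)\}_{\mathbf{a}}$, exploiting the fact that dehomogenization forces the coefficient $\mathfrak{a}^{(i)}_{n_x,n_y}$ to contribute the constant $1$ to $\vartheta(f_i)$. Writing $N = m(n_x+1)(n_y+1)$ and $n = n_x+n_y$, I would introduce the incidence variety
\begin{equation*}
W \;=\; \bigl\{(\mathbf{a},p)\in \mathbb{A}^N_k\times \mathbb{A}^{n}_k \,:\, \vartheta(\varphi_{\mathbf{a}}(f_i))(p)=0,\;1\le i\le m\bigr\}
\end{equation*}
together with the projection $\pi:W\to \mathbb{A}^N_k$, whose fiber above $\mathbf{a}$ is the variety $V(\vartheta\circ\varphi_{\mathbf{a}}(I))$.

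First I would establish that $W$ is smooth and irreducible of dimension $N$. Setting $G(\mathbf{a},p) = (\vartheta(\varphi_{\mathbf{a}}(f_i))(p))_{i}$, the partial derivatives $\partial G_i/\partial \mathfrak{a}^{(i)}_{n_x,n_y}$ equal $1$ identically (by the substitution $x_{n_x}=y_{n_y}=1$), while $\partial G_i/\partial \mathfrak{a}^{(j)}_{n_x,n_y}=0$ for $j\ne i$ since distinct $f_i$ share no coefficients. Hence the differential of $G$ contains the $m\times m$ identity block everywhere, so $G$ is submersive and $W=G^{-1}(0)$ is smooth of dimension $N+n-m=N$ (using $m = n_x+n_y$). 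Moreover, the projection $W\to \mathbb{A}^n_k$ realizes $W$ as a trivial affine bundle of rank $N-m$, since each $G_i$ is a nontrivial linear form in the disjoint block of parameters $\{\mathfrak{a}^{(i)}_{j,k}\}_{j,k}$; in particular $W$ is irreducible.

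Next, I would observe that $\pi$ is dominant: Proposition \ref{equidim} combined with Lemma \ref{lem:bezoutaffine} produces a nonempty open $O_1\subset \mathbb{A}^N_k$ on which $\pi^{-1}(\mathbf{a})$ is zero-dimensional, forcing the fibers of $\pi$ to have the expected generic dimension $0$. Since $\operatorname{char} k=0$ and $W$ is smooth and irreducible, generic smoothness then provides a nonempty Zariski-open $O_2\subset \mathbb{A}^N_k$ such that $\pi^{-1}(\mathbf{a}) \to \operatorname{Spec}(k)$ is smooth — hence \'etale — for every $\mathbf{a}\in O_1\cap O_2$. A smooth zero-dimensional $k$-scheme is a disjoint union of spectra of finite separable extensions of $k$, hence reduced, so $\vartheta\circ\varphi_{\mathbf{a}}(I)$ is a radical ideal, as claimed.

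The main obstacle I anticipate lies in justifying the descent step from smoothness of the total-space fibration to radicality of the ideal over a possibly non-algebraically-closed $k$. This can be handled either by extending scalars to $\bar k$ (radicality in characteristic $0$ descends from the algebraic closure) or by replacing generic smoothness with the Jacobian criterion: for a zero-dimensional ideal, radicality is equivalent to non-vanishing of $\det \jac_{(\mathbf{x},\mathbf{y})}\vartheta(\varphi_{\mathbf{a}}(F))$ at each point of the variety, and the non-vanishing is a Zariski-open condition on $\mathbf{a}$ whose non-emptiness follows by exhibiting one specialization with distinct simple roots.
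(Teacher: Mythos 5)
Your proof is correct, but it takes a genuinely different route from the paper. The paper applies the Jacobian criterion directly in the fiber: it defines $O_2$ as the complement of the variety of $\left(I+\langle\det(\jac(F))\rangle\right)\cap k[\mathfrak a]$, and the whole work goes into showing this open set is nonempty. For that it fixes an arbitrary $\mathbf a$, perturbs the dehomogenized system to $g_i-\varepsilon_i$, invokes Sard's theorem for the map $x\mapsto(g_1(x),\ldots,g_m(x))$ (treating separately the case where this map is not dominant, which yields the unit ideal), and then observes that $g_i-\varepsilon_i$ is again of the form $\vartheta\circ\varphi_{\mathbf a'}(f_i)$ because the coefficient of $x_{n_x}y_{n_y}$ becomes a free additive constant after dehomogenization. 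You exploit exactly the same structural fact, but upstream: the free constant makes the map $G$ submersive in the directions $\mathfrak a^{(i)}_{n_x,n_y}$, so the incidence variety $W$ is smooth and irreducible, and generic smoothness of $\pi:W\to\mathbb A^N$ replaces the pointwise Sard argument. Your version is more global and avoids the case split (indeed, even dominance of $\pi$ is not really needed: if $\pi$ is not dominant the generic fiber is empty and the unit ideal is trivially radical, so your appeal to Proposition \ref{equidim} and Lemma \ref{lem:bezoutaffine} for zero-dimensionality is a convenience rather than a necessity — and you rightly avoid citing Corollary \ref{coro:degaff}, which depends on this lemma). The price is heavier machinery (smoothness of morphisms, descent of reducedness to non-closed $k$), which you correctly identify and resolve via perfectness of $k$ in characteristic $0$; the paper's argument is more elementary but requires the explicit perturbation trick. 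Both proofs use characteristic $0$ in an essential and parallel way (Sard versus generic smoothness).
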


\noindent{\em Proof.} Denote by $F$ the polynomial family $(f_1,
\ldots, f_m)$. Let $J\subset k[{\frak a}]$ be the ideal $\left
  (I+\langle \det(\jac(F))\rangle\right )\cap k[{\frak a}]$ and
$\mathscr{J}$ be its associated algebraic variety. By the Jacobian
Criterion (see e.g. \cite[Theorem 16.19]{eisenbud}), if $\mathbf{a}$
does not belong to $\mathscr{J}$, then
$\vartheta\circ\varphi_{\mathbf{a}}(I)$ is radical. Thus, it is
sufficient to prove that $ k^{m(n_x+n_y+2)}\setminus \mathscr{J}$ is
non-empty.

To do that, we prove that for all $\mathbf{a}\in k^{m(n_x+n_y+2)}$,
there exists $(\varepsilon_1, \ldots, \varepsilon_m)$ such that the
ideal $\langle \vartheta\circ\varphi_{\mathbf{a}}(f_1)+\varepsilon_1,
\ldots, \vartheta\circ\varphi_{\mathbf{a}}(f_m)+\varepsilon_m\rangle$
is radical. Denote by $g_i=\vartheta\circ\varphi_{\mathbf{a}}(f_i)$
for $1\leq i \leq m$ and consider the mapping $\Psi$
$$x\in k^m\rightarrow (g_1(x), \ldots, g_m(x))\in k^m.$$
Suppose first that $\Psi(k^m)$ is not dense in $k^m$. Since
$\Psi(k^m)$ is a constructible set, it is contained in a
Zariski-closed subset of $k^m$ and there exists $(\varepsilon_1,
\ldots, \varepsilon_m)$ such that the algebraic variety defined by $
g_1-\varepsilon_1= \cdots=g_m-\varepsilon_m=0$ is empty. Since there
exists $\mathbf{a'}$ such that
$g_i-\varepsilon_i=\vartheta\circ\varphi_{\mathbf{a'}}(f_i)$, we
conclude that $\vartheta\circ\varphi_{\mathbf{a'}}(I)=\langle
1\rangle$. This implies that $\mathbf{a}'\notin\mathscr{J}$.

Suppose now that $\Psi(k^m)$ is dense in $k^m$. By Sard's theorem
\cite[Chap. 6, Theorem 2]{Shafarevich77}, there exists
$(\varepsilon_1, \ldots, \varepsilon_m)\in k^m$ which does not lie in
the set of critical values of $\Psi$.  This implies that at any point
of the algebraic variety defined by
$g_1-\varepsilon_1=\cdots=g_m-\varepsilon_m=0$,
$\vartheta\circ\varphi_{\mathbf{a}}(\det(\jac(F)))$ does not vanish.
Remark now that there exists $\mathbf{a}'$ such that
$g_i-\varepsilon_i=\vartheta\circ\varphi_{\mathbf{a'}}(f_i)$. We
conclude that $\mathbf{a}'\in k^{m(n_x+n_y+2)}\setminus \mathscr{J}$,
which ends the proof.  \hfill $\square$

\begin{lem}\label{lem:elimJac}
There exists a nonempty Zariski open set $O_3$, such that if $\mathbf a\in O_3$,
$$\sqrt{\langle
  \mathsf{MaxMinors}(\vartheta\circ\varphi_{\mathbf{a}}(\jac_\y(F)))\rangle}=\langle\vartheta\circ\varphi_{\mathbf{a}}(f_1),\ldots,\vartheta\circ\varphi_{\mathbf{a}}(f_m)\rangle\cap
k[x_0,\ldots,x_{n_x-1}].$$
\end{lem}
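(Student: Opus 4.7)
The plan is to define $O_3 := O_1 \cap O_2$, where $O_1$ is provided by Lemma \ref{lem:bezoutaffine} and $O_2$ by Lemma \ref{lem:radical}, and to establish both inclusions. Write $\tilde J := \vartheta\circ\varphi_{\mathbf a}(\jac_\y(F))$ and $g_i := \vartheta\circ\varphi_{\mathbf a}(f_i)$. The preliminary observation driving the proof is that, since each $f_i$ is bilinear and $\vartheta$ sets $y_{n_y}=1$, Euler's formula $f_i = \sum_{k=0}^{n_y} y_k\,\partial f_i/\partial y_k$ yields the matrix identity
$$
\tilde J \cdot (y_0,\ldots,y_{n_y-1},1)^T = (g_1,\ldots,g_m)^T,
$$
and the entries of $\tilde J$ lie in $k[x_0,\ldots,x_{n_x-1}]$ because those of $\jac_\y(F)$ are linear forms in the $x_i$'s only.

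For the inclusion $\sqrt{\langle \mathsf{MaxMinors}(\tilde J)\rangle} \subset \langle g_1,\ldots,g_m\rangle \cap k[x_0,\ldots,x_{n_x-1}]$, I would pick any $(n_y+1)\times(n_y+1)$ submatrix $M$ of $\tilde J$, apply the adjugate-multiplication identity to the corresponding subsystem of the matrix equality above, and read off the last coordinate to get $\det(M) = \sum_j (\pm) c_j g_{i_j}$ with $c_j \in k[x_0,\ldots,x_{n_x-1}]$. Hence every maximal minor of $\tilde J$ lies in $\langle g_1,\ldots,g_m\rangle \cap k[x_0,\ldots,x_{n_x-1}]$. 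Since Lemma \ref{lem:radical} ensures $\langle g_1,\ldots,g_m\rangle$ is radical on $O_2$, its contraction to $k[x_0,\ldots,x_{n_x-1}]$ is radical too, and taking the radical of the above containment preserves it.

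The reverse inclusion is the delicate step. By the Nullstellensatz, working over $\bar k$, it suffices to show that every polynomial of the RHS vanishes on $V(\langle \mathsf{MaxMinors}(\tilde J)\rangle)$. Fix $\alpha$ in this variety; then $\tilde J(\alpha)$ drops rank, so its right kernel contains a nonzero $\beta = (\beta_0,\ldots,\beta_{n_y})$. Via the matrix identity, $\beta$ gives $f_i((\alpha,1),\beta) = 0$ for every $i$, hence $((\alpha, 1), \beta)$ represents a point of $V(\varphi_{\mathbf a}(I)) \subset \mathbb P^{n_x} \times \mathbb P^{n_y}$. If the entire kernel were contained in the hyperplane $\{\beta_{n_y}=0\}$, some nonzero kernel vector with $\beta_{n_y} = 0$ would produce a projective zero of $\varphi_{\mathbf a}(I)$ with vanishing $y_{n_y}$-coordinate, which Lemma \ref{lem:bezoutaffine} forbids on $O_1$. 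Therefore a kernel vector with $\beta_{n_y} \neq 0$ exists; rescaling to $\beta_{n_y} = 1$ produces a point $(\alpha, \beta_0',\ldots,\beta_{n_y-1}') \in V(\langle g_1,\ldots,g_m\rangle)$, and any $f \in \langle g_1,\ldots,g_m\rangle \cap k[x_0,\ldots,x_{n_x-1}]$ then satisfies $f(\alpha) = 0$.

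The main obstacle is precisely this exclusion of ``solutions at infinity'' in the $\mathbb P^{n_y}$ factor: without Lemma \ref{lem:bezoutaffine}, a rank drop of $\tilde J(\alpha)$ could be witnessed only by a kernel vector with $\beta_{n_y}=0$, and $\alpha$ would then lie in $V(\langle \mathsf{MaxMinors}(\tilde J)\rangle)$ without lying in the projection of $V(\langle g_1,\ldots,g_m\rangle)$, breaking the variety inclusion and hence the ideal equality. Once this obstruction is lifted on $O_1$, both inclusions combine to give the claimed equality.
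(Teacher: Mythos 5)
Your proof is correct and follows essentially the same route as the paper's: both rely on Lemma \ref{lem:radical} for radicality, Lemma \ref{lem:bezoutaffine} to rule out kernel vectors with $\beta_{n_y}=0$, the Euler-formula identity $\jac_\y(F)\cdot\Y = F$, and the Nullstellensatz to pass from the equality of varieties to the equality of ideals. The only (harmless) difference is that you establish the easy inclusion algebraically via the adjugate identity, whereas the paper derives it from the corresponding containment of varieties.
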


\noindent{\em Proof.}
Let $\mathbf a$ be an element in $O_2$ (as defined in Lemma \ref{lem:radical}). Thus
$\vartheta\circ\varphi_{\mathbf{a}}(I)$ is radical. Now let
$(v_0,\ldots,v_{n_x-1},w_0,\ldots,w_{n_y-1})\in
Var(\vartheta\circ\varphi_{\mathbf{a}}(I))$ be an element of the variety. Then
$$\left(\vartheta\circ\varphi_{\mathbf{a}}(\jac_\y(F))_{x_i=v_i}\right)\cdot\begin{pmatrix}
w_0\\\vdots\\w_{n_y-1}\\1
\end{pmatrix}=\begin{pmatrix}0\\\vdots\\0\end{pmatrix}.$$
This implies that
$\textsf{rank}(\vartheta\circ\varphi_{\mathbf{a}}(\jac_\y(F))_{x_i=v_i})<n_y+1$, and
therefore $$(v_0,\ldots,v_{n_x-1})\in Var(\langle
\mathsf{MaxMinors}(\vartheta\circ\varphi_{\mathbf{a}}(\jac_\y(F)))\rangle).$$

Conversely, let $(v_0,\ldots,v_{n_x-1})\in Var(\langle
\mathsf{MaxMinors}(\vartheta\circ\varphi_{\mathbf{a}}(\jac_\y(F)))\rangle)$. Thus
there exists a non trivial vector $(w_0,\ldots,w_{n_y})$ in the right
kernel
$\Ker(\vartheta\circ\varphi_{\mathbf{a}}(\jac_\y(F))_{x_i=v_i})$. This
means that $(v_0,\ldots,v_{n_x-1},1,\penalty-1000 w_0,\ldots,w_{n_y})$ is in the
variety of $\varphi_{\mathbf{a}}(I)$:
$$(v_0,\ldots,v_{n_x-1},1,w_0,\ldots,w_{n_y})\in Var(
\varphi_{\mathbf{a}}\left(\jac_\y(F)\right)\cdot\begin{pmatrix}
y_0\\\dots\\y_{n_y}
\end{pmatrix})$$
From Lemma \ref{lem:bezoutaffine}, $w_{n_y}\neq
0$ if the system is generic.  Hence
$$(v_0,\ldots,v_{n_x-1},\frac{w_0}{w_{n_y}},\ldots,\frac{w_{n_y-1}}{w_{n_y}})\in
Var(\vartheta\circ\varphi_{\mathbf{a}}(I)).$$

Finally, we have $$Var(\langle
\mathsf{MaxMinors}(\vartheta\circ\varphi_{\mathbf{a}}(\jac_\y(F)))\rangle)=Var(\langle\vartheta\circ\varphi_{\mathbf{a}}(f_1),\ldots,\vartheta\circ\varphi_{\mathbf{a}}(f_m)\rangle\cap
k[x_0,\ldots,x_{n_x-1}])$$ and $\vartheta\circ\varphi_{\mathbf{a}}(I)$
is radical (Lemma \ref{lem:radical}). The Nullstellensatz concludes the proof.  \hfill $\square$

\begin{cor}\label{coro:degaff}
There exists a nonempty Zariski open set $O_4$, such that if $\mathbf a\in O_4$,
$$\mathsf{card}(Var(\vartheta\circ\varphi_{\mathbf{a}}(I)))=\deg(\vartheta\circ\varphi_{\mathbf{a}}(I))={{n_x+n_y}\choose{n_x}}$$
\end{cor}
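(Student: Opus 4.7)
The plan is to combine the preceding genericity lemmas with the multihomogeneous Bézout theorem. First I would set $O_4 = O_1 \cap O_2$ (from Lemmas \ref{lem:bezoutaffine} and \ref{lem:radical}), further intersected with the open condition that $\vartheta \circ \varphi_{\mathbf{a}}(I)$ is zero-dimensional. This last condition is nonempty and open because, with $m = n_x + n_y$ equations in $n_x + n_y$ affine variables, generic affine bilinear systems are regular (Proposition \ref{affreg}), so the quotient is a zero-dimensional ring. On $O_4$, radicality (Lemma \ref{lem:radical}) together with zero-dimensionality gives
$$\mathsf{card}(Var(\vartheta\circ\varphi_{\mathbf{a}}(I))) = \deg(\vartheta\circ\varphi_{\mathbf{a}}(I))$$
by the standard structure theorem for zero-dimensional radical ideals over the algebraic closure of $k$ (the quotient is a product of $\deg$ copies of the field).

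Next I would transport the count to bi-projective space. By Lemma \ref{lem:bezoutaffine}, the map sending an affine solution $(\alpha_0,\ldots,\alpha_{n_x-1},\beta_0,\ldots,\beta_{n_y-1})$ to the bi-projective point with last coordinates equal to $1$ is a bijection from $Var(\vartheta\circ\varphi_{\mathbf{a}}(I))$ onto $Var(\varphi_{\mathbf{a}}(I)) \subset \mathbb{P}^{n_x}\times\mathbb{P}^{n_y}$. Hence it is enough to show that the bi-projective variety has cardinality $\binom{n_x+n_y}{n_x}$.

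For this final step I would invoke the multihomogeneous Bézout theorem: a generic system of $n_x+n_y$ forms of bi-degree $(1,1)$ in $\mathbb{P}^{n_x}\times\mathbb{P}^{n_y}$ cuts out a zero-dimensional subscheme whose length equals the coefficient of $s^{n_x} t^{n_y}$ in $\prod_{i=1}^{n_x+n_y}(s+t) = (s+t)^{n_x+n_y}$, namely $\binom{n_x+n_y}{n_x}$. Radicality on $O_4$ forces every intersection to be transverse, so all multiplicities are $1$ and the length equals the cardinality.

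The main obstacle is to justify that the multihomogeneous Bézout number is actually attained generically (rather than only serving as an upper bound). One route is to cite classical references on multihomogeneous intersection theory or resultants (e.g.\ \cite{multires, Waerden}). A more self-contained route is to extract the bi-projective degree directly from Theorem \ref{theohilb} specialized to $m = n_x + n_y$: for large $(d_1,d_2)$ the bihomogeneous Hilbert function $\dim_k R_{d_1,d_2}/I_{d_1,d_2}$ stabilizes to the bi-projective degree, which one reads off from the explicit numerator $N_m(t_1,t_2)$ by evaluating the limit $\lim_{t_1,t_2 \to 1}(1-t_1)^{n_x}(1-t_2)^{n_y}\HS_{I_m}(t_1,t_2)$ after cancelling the appropriate factors in $N_m$. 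Either route yields the desired equality $\mathsf{card}(Var) = \deg = \binom{n_x+n_y}{n_x}$ on the nonempty Zariski open set $O_4$.
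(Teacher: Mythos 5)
Your proof follows essentially the same route as the paper's: radicality (Lemma \ref{lem:radical}) together with the bijection of Lemma \ref{lem:bezoutaffine} identifies $\deg(\vartheta\circ\varphi_{\mathbf a}(I))$ with $\mathsf{card}(Var(\varphi_{\mathbf a}(I)))$, which is then evaluated as the multihomogeneous B\'ezout number ${n_x+n_y \choose n_x}$ by citing the classical literature, exactly as the paper does (it cites \cite{morgan}). One caution: your proposed ``self-contained'' alternative via Theorem \ref{theohilb} would be circular, since that theorem rests on the bi-regularity results, which depend on Theorem \ref{minorsGB}, Lemma \ref{lem:degproj}, and hence on this very corollary.
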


\noindent{\em Proof.}
According to Lemma \ref{lem:radical} and Lemma \ref{lem:bezoutaffine}, if $\mathbf a\in O_1\cap O_2$, then
$\deg(\vartheta\circ\varphi_{\mathbf{a}}(I))=\mathsf{card}(Var(\vartheta\circ\varphi_{\mathbf{a}}(I))=\mathsf{card}(Var(\varphi_{\mathbf{a}}(I)))$. This value is the so-called
multihomogeneous B\'ezout number of $\varphi_{\mathbf{a}}(I)$, i.e. the coefficient of
$z_1^{n_x} z_2^{n_y}$ in $(z_1+z_2)^{n_x+n_y}$ (see e.g. \cite{morgan}), namely
${{n_x+n_y}\choose{n_x}}$.
\hfill $\square$

\begin{rem}
Actually, by studying ideals spanned by maximal minors of matrices whose entries are linear form, it can be shown that, for a generic affine bilinear system, $\langle
  \mathsf{MaxMinors}(\vartheta\circ\varphi_{\mathbf{a}}(\jac_\y(F)))\rangle$ is radical (see Lemma \ref{lem:degproj}). Hence Lemma \ref{lem:elimJac} shows that, for generic affine bilinear systems,
$$\langle
  \mathsf{MaxMinors}(\vartheta\circ\varphi_{\mathbf{a}}(\jac_\y(F)))\rangle=\langle\vartheta\circ\varphi_{\mathbf{a}}(f_1),\ldots,\vartheta\circ\varphi_{\mathbf{a}}(f_m)\rangle\cap
k[x_0,\ldots,x_{n_x-1}],$$
$$\langle
  \mathsf{MaxMinors}(\vartheta\circ\varphi_{\mathbf{a}}(\jac_\x(F)))\rangle=\langle\vartheta\circ\varphi_{\mathbf{a}}(f_1),\ldots,\vartheta\circ\varphi_{\mathbf{a}}(f_m)\rangle\cap
  k[y_0,\ldots,y_{n_y-1}].$$
\end{rem}

\end{document}